\documentclass[11pt]{article}
\usepackage[margin=1in]{geometry}

\usepackage{amsmath}
\usepackage{amssymb}
\usepackage{amsthm}
\usepackage{bbm}
\usepackage{mathtools}
\usepackage[shortlabels]{enumitem}
\usepackage{natbib}
\bibliographystyle{apalike}
\usepackage[colorlinks=true,linkcolor=blue,citecolor=blue]{hyperref}
\usepackage{adjustbox}
\usepackage{graphicx}
\usepackage[title]{appendix}
\usepackage{xcolor}
\usepackage{authblk}
\usepackage{booktabs}
\usepackage{makecell}
\usepackage{multirow}
\usepackage[justification=centering]{caption}

\newtheorem{theorem}{Theorem}

\newtheorem{corollary}{Corollary}

\theoremstyle{remark}
\newtheorem{remark}{Remark}
\theoremstyle{definition}
\newtheorem{definition}{Definition}
\newtheorem{assumption}{Assumption}
\newtheorem{example}{Example}
\newtheorem{setting}{Setting}[example]

\def\E{\mathbb{E}}
\def\R{\mathbb{R}}
\def\D{\mathcal{D}}
\def\F{\mathcal{F}}
\def\H{\mathcal{H}}
\def\I{\mathcal{I}}
\def\M{\mathcal{M}}
\def\N{\mathcal{N}}
\def\U{\mathcal{U}}
\def\Y{\mathcal{Y}}

\def\iid{i.i.d.}
\def\ind{\perp\!\!\!\perp}
\def\convd{\overset{d}{\longrightarrow}}
\def\convp{\overset{p}{\longrightarrow}}
\DeclareMathOperator{\var}{var}
\DeclareMathOperator{\cov}{cov}

\def\rhotilde{\widetilde{\rho}}

\providecommand{\keywords}[1]{{\small\textit{Keywords:} #1}}

\begin{document}

\title{Distance and Kernel-Based Measures for Global and Local Two-Sample Conditional Distribution Testing}
\author[1]{Jian Yan}
\author[2]{Zhuoxi Li}
\author[3]{Xianyang Zhang}
\date{}
\affil[1]{Department of Statistics and Data Science, Cornell University}
\affil[2]{Department of ISOM, Hong Kong University of Science and Technology}
\affil[3]{Department of Statistics, Texas A\&M University}

\maketitle

\begin{abstract}
Testing the equality of two conditional distributions is crucial in various modern applications, including transfer learning and causal inference. Despite its importance, this fundamental problem has received surprisingly little attention in the literature, with existing works focusing exclusively on global two-sample conditional distribution testing. Based on distance and kernel methods, this paper presents the first unified framework for both global and local two-sample conditional distribution testing. To this end, we introduce distance and kernel-based measures that characterize the homogeneity of two conditional distributions. Drawing from the concept of conditional U-statistics, we propose consistent estimators for these measures. Theoretically, we derive the convergence rates and the asymptotic distributions of the estimators under both the null and alternative hypotheses. Utilizing these measures, along with a local bootstrap approach, we develop global and local tests that can detect discrepancies between two conditional distributions at global and local levels, respectively. Our tests demonstrate reliable performance through simulations and real data analysis. 
\end{abstract}
\keywords{Conditional distribution; Energy distance; Maximum mean discrepancy; Two-sample testing; U-statistics.}

\section{Introduction}
The canonical setting for (nonparametric) two-sample testing focuses on assessing the equality of two unconditional distributions. In many contemporary applications, however, people are instead interested in testing the equality of two conditional distributions. Consider two independent data sets $\{(Y_{i}^{(l)},X_{i}^{(l)})\}_{i=1}^{n_{l}}$ for $l=1,2$. Here $Y_{i}^{(l)}\in\Y$ and $X_{i}^{(l)}\in\R^{p}$, where $\Y$ is allowed to be a general metric space (see Remark \ref{remark:y}). For $l=1,2$, assume that $\{(Y_{i}^{(l)},X_{i}^{(l)})\}_{i=1}^{n_{l}}$ are independent and identically distributed (\iid) samples of $(Y^{(l)},X^{(l)})\sim P^{(l)}\equiv P_{Y\mid X}^{(l)}\otimes P_{X}^{(l)}$, where $P_{X}^{(l)}$ is the marginal distribution of $X^{(l)}$ and $P_{Y\mid X=x}^{(l)}$ is the conditional distribution of $Y^{(l)}$ given $X^{(l)}=x$. To ensure that the testing problem considered below is nontrivial, we assume that $P_{X}^{(1)}$ and $P_{X}^{(2)}$ are equivalent, i.e., $P_{X}^{(1)}\ll P_{X}^{(2)}$ and $P_{X}^{(2)}\ll P_{X}^{(1)}$, where $P \ll Q$ means that $P$ is absolutely continuous in reference to $Q$. We aim to test the following hypothesis, which we call the global two-sample conditional distribution testing problem,
\begin{equation}\label{eq:hypo}
    H_{0}:P^{(1)}_{X}(P_{Y\mid X}^{(1)}=P_{Y\mid X}^{(2)})=1\quad\text{versus}\quad H_{a}:P^{(1)}_{X}(P_{Y\mid X}^{(1)}\neq P_{Y\mid X}^{(2)})>0. 
\end{equation}
Due to the equivalence between $P_{X}^{(1)}$ and $P_{X}^{(2)}$, the hypotheses in (\ref{eq:hypo}) can be equivalently formulated by replacing $P^{(1)}_{X}$ with $P^{(2)}_{X}$. 

We want to emphasize that $Y$ and $X$ denote two generic random variables and do not necessarily correspond to response and covariates, respectively. For instance, in the prior shift example below, $Y$ are covariates, and $X$ is a response in (\ref{eq:hypo}). Moreover, the marginal distributions of $X$ from the two populations may differ (i.e., $P_{X}^{(1)}\neq P_{X}^{(2)}$), as seen in the motivating examples below. Thus, $H_{0}$ in (\ref{eq:hypo}) is not equivalent to $P^{(1)}=P^{(2)}$, and unconditional two-sample tests for the equality of two joint distributions are generally not applicable in our context. Applying such tests would result in a failure to control the type I error when $P_{X}^{(1)}\neq P_{X}^{(2)}$. Furthermore, we mainly focus on scenarios in which both $Y$ and $X$ are continuous. When $Y$ is categorical, (\ref{eq:hypo}) reduces to testing the equality of conditional means. Conversely, when $X$ is categorical, (\ref{eq:hypo}) becomes equivalent to unconditional two-sample testing problems. 

Hypothesis (\ref{eq:hypo}) is central to many important problems in econometrics, machine learning, and statistics. For example, in transfer learning, the prior and covariate shift assumptions are commonly employed to tackle distributional differences between source and target populations \citep{kouw2018introduction}. The prior shift assumption asserts that the conditional distribution of the covariates given the response is identical in both populations while allowing for a shift in the marginal distributions of the response. Conversely, the covariate shift assumption posits that the conditional distribution of the response given the covariates remains invariant across source and target populations, but the marginal distributions of the covariates can differ. Both assumptions are widely adopted in the literature; see, e.g., \citet{huang2024efficient,lee2024doubly} for the prior shift assumption, and \citet{shimodaira2000improving,tibshirani2019conformal,liu2023augmented,ma2023optimally} for the covariate shift assumption. Despite their prevalence, there is a paucity of work that formally validates these assumptions. Both of them can be framed as testing the equality of two conditional distributions as in (\ref{eq:hypo}). Such tests are essential to the validity of methods developed under the prior or covariate shift assumptions. 

Another motivating example comes from causal inference. Testing hypotheses in the context of treatment effect analysis has always been of interest \citep[Sections 3.3 and 5.12]{imbens2009recent}. Consider the standard setup based on the potential outcome framework \citep{rubin1974estimating}. Suppose $\{(Y_{i},T_{i},X_{i})\}_{i=1}^{n}$ are \iid\ observations of $(Y,T,X)$, where $Y$ is the observed outcome of interest, $T$ denotes a binary treatment (1: treated, 0: untreated), and $X$ are pretreatment covariates. For each subject, we define a pair of potential outcomes, $\{Y(1),Y(0)\}$, that would be observed if the subject had been given treatment, $Y(1)$, and control, $Y(0)$. One may be interested in testing the null hypothesis that the conditional distribution of $Y(1)\mid X$ is the same as that of $Y(0)\mid X$ \citep{imbens2009recent}, i.e., zero conditional distributional treatment effect. Under the prevalent assumptions of consistency, $Y=TY(1)+(1-T)Y(0)$, and no unmeasured confounding, $\{Y(1),Y(0)\}\ind T\mid X$, the conditional distributions of $Y(1)\mid X$ and $Y(0)\mid X$ can be identified as those of $Y\mid X,T=1$ and $Y\mid X,T=0$, respectively. Therefore, it can be formulated as testing hypothesis (\ref{eq:hypo}) with $\{(Y_{i},X_{i}):T_{i}=1\}$ and $\{(Y_{i},X_{i}):T_{i}=0\}$ being the two sets of independent samples. 

Global two-sample conditional distribution testing concerns the entire support of the conditioning variable. However, in many empirical applications, the primary interest lies in a subset of the population defined by a specific value of the conditioning variable \citep{bugni2025testing}. For example, in regression discontinuity designs, researchers seek to compare outcome distributions conditional on the cutoff of the running variable \citep{shen2016distributional}. Additional examples can be found in \citet{bugni2025testing}. Formally, for a fixed value $x$ in the support of $P_{X}^{(1)}$ (or equivalently $P_{X}^{(2)}$), we are interested in testing
\begin{equation}\label{eq:hypo2}
    H_{0}:P_{Y\mid X=x}^{(1)}=P_{Y\mid X=x}^{(2)}\quad\text{versus}\quad H_{a}:P_{Y\mid X=x}^{(1)}\neq P_{Y\mid X=x}^{(2)},
\end{equation}
which we refer to as the local two-sample conditional distribution testing problem, in contrast to the global problem (\ref{eq:hypo}). This problem is novel and is partially motivated by \citet{duong2013local} and \citet{kim2019global} in the context of unconditional two-sample testing. Notably, with a little extra effort, our proposed framework can readily handle the local testing problem (\ref{eq:hypo2}). 

Recently, distance and kernel-based measures have received considerable attention in both the statistics and machine learning communities \citep{szekely2017energy,muandet2017kernel}. These measures have been applied to a wide range of hypothesis testing problems, including two-sample testing \citep{szekely2004testing,baringhaus2004new,gretton2012kernel}, goodness-of-fit testing \citep{szekely2005new,balasubramanian2021optimality}, independence testing \citep{szekely2007measuring,gretton2007kernel,chakraborty2019distance,ke2020expected,deb2020measuring} and conditional independence testing \citep{fukumizu2007kernel,wang2015conditional,sheng2023distance}. 
This work aims to establish a distance and kernel-based framework to tackle both the global and local two-sample conditional distribution testing problems. To achieve this, we introduce the conditional energy distance (\ref{eq:ced}) and the conditional maximum mean discrepancy (\ref{eq:cmmd}), along with their integrated version (\ref{eq:i}), which fully characterize the homogeneity of two conditional distributions. Additionally, we show the equivalence between the conditional energy distance and the conditional maximum mean discrepancy. Building on estimators of these measures, we develop global and local tests capable of detecting discrepancies between two conditional distributions at global and local levels, respectively. 

Our estimation strategy employs a combination of U-statistics and kernel smoothing, initially introduced in the so-called conditional U-statistics by \citet{stute1991conditional} and later applied to different problems by \citet{wang2015conditional} and \citet{ke2020expected}. To highlight our theoretical contributions, we summarize the distinct asymptotic behaviors of our global and local test statistics under both the null and alternative hypotheses in Table \ref{tab:asy}. It is noteworthy that local tests based on distance and kernel measures have not been previously explored in the literature. Furthermore, while \citet{wang2015conditional} and \citet{ke2020expected} only provided the asymptotic distributions of their global test statistics under the null, we offer additional insights by systematically studying the properties of our statistics under both the null and alternative hypotheses. As a side note, we identify certain gaps in the derivations of the asymptotic null distribution in both \citet{wang2015conditional} and \citet{ke2020expected}, with details provided in Section \ref{sec:global}. 

Our theoretical analysis reveals two key effects of applying kernel smoothing to U-statistic estimators of distance and kernel-based measures. First, kernel smoothing introduces bias into these estimators, necessitating the use of undersmoothing to control the resulting bias. Second, in contrast to the unconditional situation \citep{szekely2004testing,gretton2012kernel}, kernel smoothing renders the U-statistics nondegenerate under the null hypotheses. However, when undersmoothing is used, the first-order projections in the Hoeffding decomposition of the U-statistics become asymptotically negligible under the null. Consequently, the asymptotic null distributions are determined by the second-order projections. Further details are provided in \ref{sec:local}-\ref{sec:global}. 

{
\renewcommand{\baselinestretch}{1.2}
\begin{table}[htbp]
\centering
\caption{Asymptotic behaviors in different scenarios, assuming the sample sizes $n_{1}\asymp n_{2}\asymp n$ and the bandwidths $h_{1}\asymp h_{2}\asymp h$.}
\begin{adjustbox}{width=\textwidth}
\begin{tabular}{cccccc}
\toprule
& \multicolumn{2}{c}{Global test} & & \multicolumn{2}{c}{Local test} \\
\cmidrule{2-3} \cmidrule{5-6}
& Null & Alternative & & Null & Alternative \\
\cmidrule{2-6}
Asymptotic distribution & Normal & Normal & & Weighted sum of chi-squares & Normal \\
Convergence rate & $n^{-1}h^{-p/2}$ & $n^{-1/2}$ & & $(nh^{p})^{-1}$ & $(nh^{p})^{-1/2}$ \\
\bottomrule
\end{tabular}
\end{adjustbox}
\label{tab:asy}
\end{table}
}

We now discuss related work and highlight several notable features of our framework. Although problems (\ref{eq:hypo}) and (\ref{eq:hypo2}) are fundamental in various modern applications, surprisingly, there are very few methods available to test the equality of two conditional distributions. In the nonparametric testing literature, existing methods mainly focus on testing the equality of conditional moments of $Y$ given $X$. Specifically, most studies aim at testing the equality of conditional means, also known as the comparison of regression curves, as seen in \citet{hall1990bootstrap,kulasekera1995comparison,dette1998nonparametric,lavergne2001equality,neumeyer2003nonparametric}, among others (see Section 7 in \citet{gonzalez2013updated} for a detailed review). Similarly, in the causal inference literature, hypothesis testing has largely been limited to conditional average treatment effect (see, e.g., \citet{crump2008nonparametric}). The methods we develop in this paper can detect general discrepancies between two conditional distributions, beyond specific moments. \citet{lee2009non} and \citet{chang2015nonparametric} proposed nonparametric tests for the null hypothesis of zero conditional distributional treatment effect. Their tests are built upon a Mann-Whitney statistic and cumulative distribution functions, respectively, and thus are only applicable for univariate $Y$. In our framework, $Y$ is allowed to take values in a general metric space, while $X$ can be multivariate. Very recently, utilizing techniques from conformal prediction, \citet{hu2024two} proposed a test for the global testing problem (\ref{eq:hypo}). However, their test involves (possibly unbalanced) sample splitting, and its performance relies on density ratio estimators of high quality (see Assumption 2(b) and related discussion therein). \citet{lee2024general} introduced two frameworks for the hypothesis (\ref{eq:hypo}). Their first framework converts a conditional independence test into a two-sample conditional distribution test, which effectively addresses an issue pointed out by an earlier version of this paper. Similar to \citet{hu2024two}, their second framework is based on density ratio estimation and sample splitting, which are not required in our work. Notably, the local testing problem (\ref{eq:hypo2}) is new and has not been addressed in the literature. Our framework can accommodate both global and local two-sample conditional distribution testing. \citet{chen2022paired} and \citet{chatterjee2024kernel} considered the paired-sample problem, which is very different from our two-sample setting. 

To end the introduction, we highlight the inherent challenges of two-sample conditional distribution testing in high-dimensional settings. When $Y$ is high-dimensional, even the simpler problem of unconditional two-sample testing lacks universally powerful methods. For example, distance and kernel-based approaches are known to suffer from the linearization effect as the dimension tends to infinity \citep{yan2023kernel}. Conversely, when $X$ is high-dimensional, even testing the equality of conditional means becomes difficult without imposing strong structural assumptions, such as a sparse linear model \citep{xia2018two}. Consequently, we do not consider high-dimensional settings in this work and instead leave them as an important direction for future research.

\section{Preliminaries}
This section provides an overview of the energy distance and maximum mean discrepancy, as well as their equivalence, which has been extensively discussed in \citet{sejdinovic2013equivalence}. 

For a nonempty set $\U$, a nonnegative function $\rho:\U\times\U\rightarrow[0,\infty)$ is called a semimetric on $\U$ if for any $u,u'\in\U$, it satisfies (i) $\rho(u,u')=0\iff u=u'$ and (ii) $\rho(u,u')=\rho(u',u)$. Then $(\U,\rho)$ is said to be a semimetric space. The semimetric space $(\U,\rho)$ is said to have negative type if for all $n\ge 2$, $u_{1},\ldots,u_{n}\in\U$ and $\alpha_{1},\ldots,\alpha_{n}\in\R$ with $\sum_{i=1}^{n}\alpha_{i}=0$, we have $\sum_{i,j=1}^{n}\alpha_{i}\alpha_{j}\rho(u_{i},u_{j})\le 0$. For $\theta>0$, define $\M_{\rho}^{\theta}(\U)=\{\mu\in\M(\U):\int\rho^{\theta}(u,u_{0})d\mu(u)<\infty$, for some $u_{0}\in\U\}$, where $\M(\U)$ denotes the set of all probability measures on $\U$. Suppose $P,Q\in\M_{\rho}^{1}(\U)$, we have $\int\rho d(P-Q)^{2}\le 0$ when $(\U,\rho)$ has negative type. We say that $(\U,\rho)$ has strong negative type if it has negative type and the equality holds only when $P=Q$. The (generalized) energy distance between $P,Q\in\M_{\rho}^{1}(\U)$ is defined as \citep{sejdinovic2013equivalence}
\[
\D_{\rho}(P,Q)=2\E\rho(U,V)-\E\rho(U,U')-\E\rho(V,V')=-\int\rho d(P-Q)^{2},
\]
where $U,U'\overset{\iid}{\sim}P$ and $V,V'\overset{\iid}{\sim}Q$. If $(\U,\rho)$ is of strong negative type, we have $\D_{\rho}(P,Q)\ge 0$ and the equality holds if and only if $P=Q$. Every separable Hilbert space is of strong negative type \citep{lyons2013distance}. In particular, Euclidean spaces are separable. 

Let $\H$ be a Hilbert space of real-valued functions defined on $\U$. A function $k:\U\times\U\rightarrow\R$ is a reproducing kernel of $\H$ if (i) $\forall u\in\U$, $k(\cdot,u)\in\H$ and (ii) $\forall u\in\U,\forall f\in\H$, $\langle f,k(\cdot,u)\rangle_{\H}=f(u)$, where $\langle\cdot,\cdot\rangle_{\H}$ is the inner product associated with $\H$. If $\H$ has a reproducing kernel, it is said to be a reproducing kernel Hilbert space (RKHS). According to Moore-Aronszajn theorem \citep{berlinet2011reproducing}, for every symmetric, positive definite function (henceforth kernel) $k$, there exists an associated RKHS $\H_{k}$ with reproducing kernel $k$. Define $\M_{k}^{1/2}(\U)=\{\mu\in\M(\U):\int \sqrt{k(u,u)}d\mu(u)<\infty\}$. The kernel mean embedding of $P\in\M_{k}^{1/2}(\U)$ into RKHS $\H_{k}$ is defined by the Bochner integral $\Pi_{k}(P)=\int k(\cdot,u)dP(u)\in\H_{k}$. The kernel $k$ is said to be characteristic if the mapping $\Pi_{k}$ is injective. Conditions under which kernels are characteristic have been studied by \citet{sriperumbudur2008injective,sriperumbudur2010hilbert}, and examples include the Gaussian and the Laplace kernel. The maximum mean discrepancy (MMD) between $P,Q\in\M_{k}^{1/2}(\U)$ is given by \citep{gretton2012kernel}
\[
\gamma_{k}(P,Q)=\|\Pi_{k}(P)-\Pi_{k}(Q)\|_{\H_k}.
\]
When the kernel $k$ is characteristic, we have $\gamma_{k}(P,Q)=0$ if and only if $P=Q$. Also, the following alternative representation of the squared MMD is useful
\[
\gamma_{k}^{2}(P,Q)=\E k(U,U')+\E k(V,V')-2\E k(U,V)=\int kd(P-Q)^{2},
\]
where $U,U'\overset{\iid}{\sim}P$ and $V,V'\overset{\iid}{\sim}Q$. 

Let $\rho$ be a semimetric of negative type on $\U$. For any $u_{0}\in\U$, the function $k(u,u')=\rho(u,u_{0})+\rho(u',u_{0})-\rho(u,u')$ is positive definite, and is said to be the distance-induced kernel induced by $\rho$ and centered at $u_{0}$. Correspondingly, for a kernel $k$ on $\U$, the function $\rho(u,u')=\frac{1}{2}\{k(u,u)+k(u',u')\}-k(u,u')$ defines a valid semimetric $\rho$ of negative type on $\U$, and we will say that $k$ generates $\rho$. It is clear that every distance-induced kernel $k$ induced by $\rho$, also generates $\rho$. Theorem 22 in \citet{sejdinovic2013equivalence} establishes the equivalence between the energy distance and MMD. Specifically, suppose $P,Q\in\M_{\rho}^{1}(\U)$ and let $k$ be any kernel that generates $\rho$, then $\D_{\rho}(P,Q)=\gamma_{k}^{2}(P,Q)$. 

\textit{Notation.} For $l=1,2$, let $f_{l}(y,x)$, $f_{l}(x)$ and $f_{l}(y\mid x)$ be the joint probability density function of $P^{(l)}$, the marginal probability density function of $P_{X}^{(l)}$ and the conditional probability density function of $P_{Y\mid X=x}^{(l)}$, respectively (assuming the existence of these densities). For two sequences of real numbers $\{a_{n}\}_{n=1}^{\infty}$ and $\{b_{n}\}_{n=1}^{\infty}$, we write $a_{n}\asymp b_{n}$ if and only if $a_{n}=O(b_{n})$ and $b_{n}=O(a_{n})$. The symbols $\overset{p}{\rightarrow}$ and $\overset{d}{\rightarrow}$ stand for convergence in probability and in distribution, respectively.

\section{Conditional energy distance and conditional maximum mean discrepancy}\label{sec:local}
Let $\rho$ be a semimetric on $\Y$. We define the conditional energy distance at $x\in\R^{p}$ as the energy distance between $P_{Y\mid X=x}^{(1)}$ and $P_{Y\mid X=x}^{(2)}$. 
\begin{definition}
    Assume $f_{l}(x)>0$ and $P_{Y\mid X=x}^{(l)}\in\M_{\rho}^{1}(\Y)$ for $l=1,2$. The conditional energy distance (CED) between $P^{(1)}$ and $P^{(2)}$ at $x\in\R^{p}$ is defined as
    \begin{align}\label{eq:ced}
        \begin{split}
            \D_{\rho}(x)&\coloneqq\D_{\rho}(P_{Y\mid X=x}^{(1)},P_{Y\mid X=x}^{(2)})=-\int_{\Y}\int_{\Y}\rho(y,y')d(P_{Y\mid X=x}^{(1)}-P_{Y\mid X=x}^{(2)})^{2}(y,y')\\
            &=2\E\{\rho(Y^{(1)},Y^{(2)})\mid X^{(1)}=x,X^{(2)}=x\}-\E\{\rho(Y^{(1)},Y^{(1)\prime})\mid X^{(1)}=x,X^{(1)\prime}=x\}\\
            &\quad-\E\{\rho(Y^{(2)},Y^{(2)\prime})\mid X^{(2)}=x,X^{(2)\prime}=x\},
        \end{split}
    \end{align}
    where $(Y^{(1)\prime},X^{(1)\prime})$ and $(Y^{(2)\prime},X^{(2)\prime})$ are \iid\ copies of $(Y^{(1)},X^{(1)})$ and $(Y^{(2)},X^{(2)})$, respectively. 
\end{definition}

When $\Y=\R^{q}$ and $\rho$ corresponds to the Euclidean distance, (\ref{eq:ced}) also has a nice interpretation in terms of conditional characteristic function. Specifically, for $l=1,2$ and $t\in\R^{q}$, the conditional characteristic function of $Y^{(l)}$ given $X^{(l)}=x$ is defined as $\varphi_{Y\mid X=x}^{(l)}(t)=\E\{\exp(\imath t^{\top}Y^{(l)})\mid X^{(l)}=x\}$, where $\imath=\sqrt{-1}$ is the imaginary unit. When $\rho(y,y')=\|y-y'\|$ with $\|\cdot\|$ being the Euclidean norm, we have
\[
\D_{\rho}(x)=\frac{1}{c_{q}}\int_{\R^{q}}\frac{|\varphi_{Y\mid X=x}^{(1)}(t)-\varphi_{Y\mid X=x}^{(2)}(t)|^{2}}{\|t\|^{q+1}}dt,
\]
where $c_{q}=\pi^{(q+1)/2}/\Gamma((q+1)/2)$ is a constant with $\Gamma(\cdot)$ being the gamma function. The proof of this fact follows a similar approach to that of Proposition 1 in \citet{szekely2017energy}, and is therefore omitted for brevity. 

As the counterpart to the distance-based metric (\ref{eq:ced}), we now introduce a kernel-based metric to measure the discrepancy between the two conditional distributions. Let $\H_{k}$ be an RKHS associated with a kernel $k$ on $\Y$. 
\begin{definition}
    Assume $f_{l}(x)>0$ and $P_{Y\mid X=x}^{(l)}\in\M_{k}^{1/2}(\Y)$ for $l=1,2$. The conditional maximum mean discrepancy (CMMD) between $P^{(1)}$ and $P^{(2)}$ at $x\in\R^{p}$ is defined as the square root of
    \begin{align}\label{eq:cmmd}
        \begin{split}
            \gamma_{k}^{2}(x)&\coloneqq\gamma_{k}^{2}(P_{Y\mid X=x}^{(1)},P_{Y\mid X=x}^{(2)})=\|\Pi_{k}(P_{Y\mid X=x}^{(1)})-\Pi_{k}(P_{Y\mid X=x}^{(2)})\|_{\H_{k}}^{2}\\
            &=\int_{\Y}\int_{\Y}k(y,y')d(P_{Y\mid X=x}^{(1)}-P_{Y\mid X=x}^{(2)})^{2}(y,y')\\
            &=\E\{k(Y^{(1)},Y^{(1)\prime})\mid X^{(1)}=x,X^{(1)\prime}=x\}+\E\{k(Y^{(2)},Y^{(2)\prime})\mid X^{(2)}=x,X^{(2)\prime}=x\}\\
            &\quad-2\E\{k(Y^{(1)},Y^{(2)})\mid X^{(1)}=x,X^{(2)}=x\},
        \end{split}
    \end{align}
    where $(Y^{(1)\prime},X^{(1)\prime})$ and $(Y^{(2)\prime},X^{(2)\prime})$ are \iid\ copies of $(Y^{(1)},X^{(1)})$ and $(Y^{(2)},X^{(2)})$, respectively. 
\end{definition}

While conducting this research, we came across \citet{park2020measure}, in which the authors present the CMMD in a slightly different form. However, their work does not address two-sample conditional distribution testing, and their estimation strategy differs entirely from ours. Specifically, the computational complexity of their estimator is $O((n_{1}+n_{2})^{3})$, whereas ours has a lower complexity of $O((n_{1}+n_{2})^{2})$ (see Remark \ref{remark:time}). Moreover, \citet{park2020measure} does not provide the convergence rate or the asymptotic distribution of their estimator. In contrast, we derive both the exact convergence rate and the asymptotic distribution of our statistic under both the null and alternative hypotheses. Additionally, we develop a unified framework that encompasses both distance and kernel-based measures. It is also important to note that the CMMD is a metric indexed by $x\in\R^{p}$, and the single measure (\ref{eq:i}) introduced in Section \ref{sec:global}, which integrates the CMMD with a weight function, is not considered by \citet{park2020measure}.

The CED and CMMD both serve to characterize the homogeneity of two conditional distributions. Besides, we can demonstrate the equivalence between the CED and CMMD, which holds for general semimetric of negative type. Similar results have been established for distance and kernel-based measures used in two-sample/independence testing \citep{sejdinovic2013equivalence} and conditional independence testing \citep{sheng2023distance}. Consequently, we will focus on the CED for the remainder of this paper.
\begin{theorem}\label{thm1}~
    \begin{enumerate}
        \item When the semimetric $\rho$ is of strong negative type, for any $x\in\R^{p}$ such that $f_{l}(x)>0$ and $P_{Y\mid X=x}^{(l)}\in\M_{\rho}^{1}(\Y)\ (l=1,2)$, we have $\D_{\rho}(x)\ge 0$, and $\D_{\rho}(x)=0$ if and only if $H_{0}$ in (\ref{eq:hypo2}) holds. 
        \item When the kernel $k$ is characteristic, for any $x\in\R^{p}$ such that $f_{l}(x)>0$ and $P_{Y\mid X=x}^{(l)}\in\M_{k}^{1/2}(\Y)\ (l=1,2)$, we have $\gamma_{k}(x)=0$ if and only if $H_{0}$ in (\ref{eq:hypo2}) holds. 
        \item Let $\rho$ be a semimetric of negative type on $\Y$. Suppose $f_{l}(x)>0$ and $P_{Y\mid X=x}^{(l)}\in\M_{\rho}^{1}(\Y)$ for $l=1,2$. If $k$ is a kernel that generates $\rho$, then
        \[
        \D_{\rho}(x)=\gamma_{k}^{2}(x). 
        \]
    \end{enumerate}
\end{theorem}


We now present an estimator for the CED. Let $x(s)$ denote the $s$th component of $x\in\R^{p}$ for $s=1,\ldots,p$. For $l=1,2$, define $G_{h_{l}}:\R^{p}\rightarrow\R$ as
\[
G_{h_{l}}(x)=\frac{1}{h_{l}^{p}}\prod_{s=1}^{p}g\bigg(\frac{x(s)}{h_{l}}\bigg),\quad x\in\R^{p},
\]
where $g:\R\rightarrow\R$ is a univariate kernel function satisfying Assumption \ref{assum1} below, and $h_{1},h_{2}\in\R$ are the bandwidth parameters. For ease of presentation, here we use the same bandwidth $h_l$ for each component of $X^{(l)}$. In practice, one should always allow for varying bandwidths across each component of $X^{(l)}$. 
Motivated by the representation of CED in terms of the conditional moments given in (\ref{eq:ced}), we propose the following estimator for the CED: 
\begin{align*}
    \widehat{\D}_{\rho}(x)&=2\bigg\{\sum_{i_{1}\neq i_{2}}G_{h_{1}}(X_{i_{1}}^{(1)}-x)G_{h_{1}}(X_{i_{2}}^{(1)}-x)\sum_{j_{1}\neq j_{2}}G_{h_{2}}(X_{j_{1}}^{(2)}-x)G_{h_{2}}(X_{j_{2}}^{(2)}-x)\bigg\}^{-1}\\
    &\quad\times\sum_{i=1}^{n_{1}}\sum_{j=1}^{n_{2}}\rho(Y_{i}^{(1)},Y_{j}^{(2)})G_{h_{1}}(X_{i}^{(1)}-x)G_{h_{2}}(X_{j}^{(2)}-x)\sum_{i'\neq i}G_{h_{1}}(X_{i'}^{(1)}-x)\sum_{j'\neq j}G_{h_{2}}(X_{j'}^{(2)}-x)\\
    &\quad-\bigg\{\sum_{i_{1}\neq i_{2}}G_{h_{1}}(X_{i_{1}}^{(1)}-x)G_{h_{1}}(X_{i_{2}}^{(1)}-x)\bigg\}^{-1}\sum_{i_{1}\neq i_{2}}\rho(Y_{i_{1}}^{(1)},Y_{i_{2}}^{(1)})G_{h_{1}}(X_{i_{1}}^{(1)}-x)G_{h_{1}}(X_{i_{2}}^{(1)}-x)\\
    &\quad-\bigg\{\sum_{j_{1}\neq j_{2}}G_{h_{2}}(X_{j_{1}}^{(2)}-x)G_{h_{2}}(X_{j_{2}}^{(2)}-x)\bigg\}^{-1}\sum_{j_{1}\neq j_{2}}\rho(Y_{j_{1}}^{(2)},Y_{j_{2}}^{(2)})G_{h_{2}}(X_{j_{1}}^{(2)}-x)G_{h_{2}}(X_{j_{2}}^{(2)}-x),
\end{align*}
which we call the sample conditional energy distance. Our estimation strategy, which combines U-statistics and kernel smoothing, has been employed in previous studies such as \citet{wang2015conditional,ke2020expected}. It first appeared in the so-called conditional U-statistics \citep{stute1991conditional}, which generalize the Nadaraya-Watson estimate of a regression function in the same way as Hoeffding's U-statistic is a generalization of the sample mean. Instead of V-statistics, we adopt U-statistics in this paper to avoid unnecessary bias; see the proof of Theorem \ref{thm2} for the explicit U-statistic form. 

\begin{remark}\label{remark:y}
    The definitions of the CED and CMMD, as well as the estimation procedures introduced, are applicable to a general metric space $\Y$. This implies that our framework is capable of handling cases where $Y$ is a non-Euclidean object, such as a curve, image, or even topological data encountered in various applications.
\end{remark}

In what follows, we present the consistency of the sample CED in Theorem \ref{thm2}, while Theorems \ref{thm3} and \ref{thm4} provide its asymptotic distributions under $H_{a}$ and $H_{0}$ in (\ref{eq:hypo2}), respectively. We shall make the following assumptions throughout the analysis. 
\begin{assumption}\label{assum1}
    The univariate smoothing kernel $g$ is of order $\nu$ in the sense that $\int g(u)du=1$, $\int u^{r}g(u)du=0$ for $r=1,\cdots,\nu-1$, and $\int u^{\nu}g(u)du\neq 0$. Also, $g$ is bounded with $\int g^{2}(u)du<\infty$ and $\int |u|^{\nu}|g(u)|du<\infty$. 
\end{assumption}
\begin{assumption}\label{assum2}
    For $l=1,2$, $h_{l}\rightarrow 0$ and $n_{l}h_{l}^{p}\rightarrow\infty$, as $n_{l}\rightarrow\infty$. 
\end{assumption}
\begin{assumption}\label{assum3}
    The marginal densities $f_{1}(x)$ and $f_{2}(x)$ are $\nu$-times continuously differentiable. 
\end{assumption}
\begin{assumption}\label{assum4}
    $\E\{\rho(Y^{(1)},Y^{(2)})\mid X^{(1)}=x_{1},X^{(2)}=x_{2}\}$, $\E\{\rho(Y^{(1)},Y^{(1)\prime})\mid X^{(1)}=x_{1},X^{(1)\prime}=x_{2}\}$, and $\E\{\rho(Y^{(2)},Y^{(2)\prime})\mid X^{(2)}=x_{1},X^{(2)\prime}=x_{2}\}$ are $\nu$-times continuously differentiable with respect to $(x_{1}^{\top},x_{2}^{\top})^{\top}$. 
\end{assumption}

Assumptions \ref{assum1}-\ref{assum3} are standard in the literature (see, e.g., Section 1.11 in \citet{li2007nonparametric}), allowing for multivariate $X$ and higher-order kernels. Assumption \ref{assum4} can be seen as the counterpart of the usual smoothness condition imposed on the regression function in classical kernel regression. A similar assumption is used in \citet{stute1991conditional}. 

\begin{theorem}\label{thm2}
    Suppose $f_{l}(x)>0$ and $P_{Y\mid X=x}^{(l)}\in\M_{\rho}^{2}(\Y)$ for $l=1,2$. Under Assumptions \ref{assum1}-\ref{assum4}, we have
    \[
    \widehat{\D}_{\rho}(x)\convp\D_{\rho}(x),\quad\text{as }n_{1},n_{2}\rightarrow\infty.
    \]
\end{theorem}

\begin{theorem}\label{thm3}
    Suppose $f_{l}(x)>0$ and $P_{Y\mid X=x}^{(l)}\in\M_{\rho}^{2+\delta}(\Y)$ for $l=1,2$ and some $\delta>0$. Under $H_{a}$ in (\ref{eq:hypo2}), Assumptions \ref{assum1}-\ref{assum4} and $n_{l}^{1/2}h_{l}^{p/2+\nu}\rightarrow 0$ as $n_{l}\rightarrow\infty$ for $l=1,2$, we have
    \[
    \frac{\widehat{\D}_{\rho}(x)-\D_{\rho}(x)}{\frac{2}{\{f_{1}(x)f_{2}(x)\}^{2}}\sqrt{\frac{\xi_{1}^{2}}{n_{1}}+\frac{\xi_{2}^{2}}{n_{2}}}}\convd\N(0,1),\quad\text{as }n_{1},n_{2}\rightarrow\infty,
    \]
    where $\xi_{1}^{2}$ and $\xi_{2}^{2}$ are defined in equations (\ref{eq-xi1})-(\ref{eq-xi2}) of Appendix \ref{app:c}.
\end{theorem}

As shown in the proof, we have $\xi_{1}^{2}\asymp h_{1}^{-p}$ and $\xi_{2}^{2}\asymp h_{2}^{-p}$. Thus, under $H_{a}$ in (\ref{eq:hypo2}), $\widehat{\D}_{\rho}(x)-\D_{\rho}(x)=O_{p}((n_{1}h_{1}^{p})^{-1/2}+(n_{2}h_{2}^{p})^{-1/2})$, which is the same convergence rate as that of classical kernel regression \citep{ullah1999nonparametric} and conditional U-statistics \citep{stute1991conditional}. The condition $n_{l}^{1/2}h_{l}^{p/2+\nu}\rightarrow 0$ as $n_{l}\rightarrow\infty$ in Theorem \ref{thm3} requires undersmoothing in order to make the estimator asymptotically unbiased. Undersmoothing, coupled with the use of a higher-order kernel, is commonly employed in nonparametric and semiparametric estimation to reduce the bias of estimators involving kernel smoothing \citep{li2007nonparametric}. The same condition with $p=1$ and $\nu=2$ is used in Section 3.4 of \citet{ullah1999nonparametric} for the central limit theorem (CLT) of classical kernel regression and in \citet{stute1991conditional} for the CLT of conditional U-statistics.

\begin{theorem}\label{thm4}
    Define the double centered semimetric
    \[
    \rhotilde(y,y')=\rho(y,y')-\E\{\rho(y,Y')\mid X'=x\}-\E\{\rho(Y,y')\mid X=x\}+\E\{\rho(Y,Y')\mid X=x,X'=x\},
    \]
    where $Y\mid X=x,Y'\mid X'=x\overset{\iid}{\sim}P_{Y\mid X=x}\coloneqq P_{Y\mid X=x}^{(1)}=P_{Y\mid X=x}^{(2)}$ under $H_{0}$ in (\ref{eq:hypo2}). Suppose $f_{l}(x)>0$ and $P_{Y\mid X=x}\in\M_{\rho}^{2+\delta}(\Y)$ for $l=1,2$ and some $\delta>0$. Under $H_{0}$ in (\ref{eq:hypo2}), Assumptions \ref{assum1}-\ref{assum4}, $n_{1}h_{1}^{p}/(n_{1}h_{1}^{p}+n_{2}h_{2}^{p})\rightarrow\tau\in(0,1)$, and $n_{l}h_{l}^{p+\nu}\rightarrow 0$ as $n_{l}\rightarrow\infty$ for $l=1,2$, we have
    \[
    (n_{1}h_{1}^{p}+n_{2}h_{2}^{p})\widehat{\D}_{\rho}(x)\convd-\sum_{r=1}^{\infty}\lambda_{r}\bigg\{\bigg(\frac{1}{\sqrt{\tau}}\zeta_{r}-\frac{1}{\sqrt{1-\tau}}\eta_{r}\bigg)^{2}-\frac{(1-\tau)\sigma_{1}^{2}+\tau\sigma_{2}^{2}}{\tau(1-\tau)}\bigg\},
    \]
    where $\{\zeta_r\}_{r=1}^{\infty}$ and $\{\eta_r\}_{r=1}^{\infty}$ are two sequences of independent normal random variables with
    \begin{align*}
        \zeta_{r}&\overset{\iid}{\sim}\N(0,\sigma_{1}^{2}),\quad\sigma_{1}^{2}=\frac{1}{f_{1}(x)}\bigg\{\int g^{2}(u)du\bigg\}^{p},\\
        \eta_{r}&\overset{\iid}{\sim}\N(0,\sigma_{2}^{2}),\quad\sigma_{2}^{2}=\frac{1}{f_{2}(x)}\bigg\{\int g^{2}(u)du\bigg\}^{p},
    \end{align*}
    and $\{\lambda_{r}\}_{r=1}^{\infty}$ are the eigenvalues of $\rhotilde$ with respect to the conditional distribution $P_{Y\mid X=x}$, i.e., 
    \[
    \E\{\rhotilde(y,Y')\phi_{r}(Y')\mid X'=x\}=\lambda_{r}\phi_{r}(y).
    \]
\end{theorem}

As demonstrated in Theorem \ref{thm4}, the sample CED converges at a faster rate under $H_{0}$ in (\ref{eq:hypo2}): $\widehat{\D}_{\rho}(x)=O_{p}((n_{1}h_{1}^{p})^{-1}+(n_{2}h_{2}^{p})^{-1})$, compared to that under $H_{a}$ in (\ref{eq:hypo2}). This renders the undersmoothing condition in Theorem \ref{thm4} more stringent than that in Theorem \ref{thm3}. Compared with the unconditional scenario \citep{gretton2012kernel}, our results differ in several aspects. First, the convergence rates are distinct. Second, our spectral decomposition is performed with respect to the conditional distribution. Third, in contrast to \iid\ standard normal random variables in \citet{gretton2012kernel}, $\{\zeta_r\}_{r=1}^{\infty}$ and $\{\eta_r\}_{r=1}^{\infty}$ have different variances when $f_{1}(x)\neq f_{2}(x)$.

\section{Integrated conditional energy distance}\label{sec:global}
The CED is indexed by $x\in\R^{p}$, and thus is not a single number. We can obtain a single measure by integrating the CED with some weight function $\omega:\R^{p}\rightarrow\R$, i.e., 
\[
\I_{\rho}^{\omega}\coloneqq\int_{\R^{p}}\D_{\rho}(x)\omega(x)dx, 
\]
which we call the integrated conditional energy distance (ICED). It follows directly from Theorem \ref{thm1} that $\I_{\rho}^{\omega}$ fully characterizes the homogeneity of two conditional distributions. 
\begin{corollary}
    When $\omega$ is a nonnegative function with the same support as $P^{(1)}_{X}$ (or equivalently $P^{(2)}_{X}$) and the semimetric $\rho$ is of strong negative type, we have $\I_{\rho}^{\omega}=0$ if and only if $H_{0}$ in (\ref{eq:hypo}) holds.  
\end{corollary}

Motivated by \citet{su2007consistent,wang2015conditional,ke2020expected}, we consider the weight function $\omega(x)=\{f_{1}(x)f_{2}(x)\}^{2}$ and define
\begin{equation}\label{eq:i}
    \I_{\rho}\coloneqq\int_{\R^{p}}\D_{\rho}(x)\{f_{1}(x)f_{2}(x)\}^{2}dx.
\end{equation}
This particular choice of weight function circumvents the random denominator issue. Otherwise, to deal with the density near zero and mitigate significant bias, additional stringent assumptions or trimming schemes may need to be used.  
For example, \citet{yin2020new} assumes that the density functions $f_{1}(x)$ and $f_{2}(x)$ are bounded away from zero, which can be quite restrictive. 

We employ the same estimation strategy as in Section \ref{sec:local} and propose the following estimator for the ICED, which is a U-statistic by symmetrization (see the proof of Theorem \ref{thm5}):
\begin{align*}
    \widehat{\I}_{\rho}&=\frac{1}{n_{1}(n_{1}-1)}\frac{1}{n_{2}(n_{2}-1)}\bigg[\sum_{i=1}^{n_{1}}\sum_{j=1}^{n_{2}}\rho(Y_{i}^{(1)},Y_{j}^{(2)})\Big\{G_{h_{1}}(X_{i}^{(1)}-X_{j}^{(2)})+G_{h_{2}}(X_{j}^{(2)}-X_{i}^{(1)})\Big\}\\
    &\quad\times\sum_{i'\neq i}G_{h_{1}}(X_{i'}^{(1)}-X_{j}^{(2)})\sum_{j'\neq j}G_{h_{2}}(X_{j'}^{(2)}-X_{i}^{(1)})\\
    &\quad-\sum_{i_{1}\neq i_{2}}\rho(Y_{i_{1}}^{(1)},Y_{i_{2}}^{(1)})G_{h_{1}}(X_{i_{1}}^{(1)}-X_{i_{2}}^{(1)})\sum_{j_{1}\neq j_{2}}G_{h_{2}}(X_{j_{1}}^{(2)}-X_{i_{1}}^{(1)})G_{h_{2}}(X_{j_{2}}^{(2)}-X_{i_{1}}^{(1)})\\
    &\quad-\sum_{j_{1}\neq j_{2}}\rho(Y_{j_{1}}^{(2)},Y_{j_{2}}^{(2)})G_{h_{2}}(X_{j_{1}}^{(2)}-X_{j_{2}}^{(2)})\sum_{i_{1}\neq i_{2}}G_{h_{1}}(X_{i_{1}}^{(1)}-X_{j_{1}}^{(2)})G_{h_{1}}(X_{i_{2}}^{(1)}-X_{j_{1}}^{(2)})\bigg].
\end{align*}
We call $\widehat{\I}_{\rho}$ the sample integrated conditional energy distance. 

\begin{remark}\label{remark:time}
    The computational complexity of both the sample CED $\widehat{\D}_{\rho}(x)$ and sample ICED $\widehat{\I}_{\rho}$ is $O((n_{1}+n_{2})^{2})$, which is the same as that of the unconditional situation \citep{szekely2004testing,gretton2012kernel}. The details are provided in Appendix \ref{app:a}.
\end{remark}

Theorem \ref{thm5} asserts that $\widehat{\I}_{\rho}$ is a consistent estimator of $\I_{\rho}$. By examining the Hoeffding decomposition of $\widehat{\I}_{\rho}$, Theorem \ref{thm6} describes its asymptotic distribution under the alternative hypothesis $H_{a}$ in (\ref{eq:hypo}), while Theorem \ref{thm7} characterizes its asymptotic null distribution under $H_{0}$ in (\ref{eq:hypo}). To get these results, we need to strengthen Assumptions \ref{assum3}-\ref{assum4} in the following manner.
\begin{assumption}\label{assum5}
    In addition to Assumption \ref{assum3}, we require the derivatives of $f_{1}(x)$ and $f_{2}(x)$ to be bounded uniformly over $x$. 
\end{assumption}
\begin{assumption}\label{assum6}
    In addition to Assumption \ref{assum4}, we require the derivatives of the bivariate functions defined therein to belong to the set $\{f:\R^{p}\times\R^{p}\rightarrow\R,\ \E_{P_{X}^{(1)}}|f(X,X)|<\infty,\ \E_{P_{X}^{(2)}}|f(X,X)|<\infty\}$. 
\end{assumption}

Assumption \ref{assum5} is standard, as seen in previous studies such as \citet{lee2009non,wang2015conditional}. Assumption \ref{assum6} is mild in that it does not require bounded derivatives of the conditional densities or moments. \citet{wang2015conditional} assumes that the conditional densities have bounded derivatives. However, under this assumption, their proof implicitly requires the distance function $\|y_{1}-y_{2}\|$ to be integrable with respect to the Lebesgue measure (not some probability measure), which fails to hold unless the support is bounded. A careful examination of the arguments in \citet{wang2015conditional} and \citet{ke2020expected} suggests that conditions analogous to our Assumption \ref{assum6} are necessary.

\begin{theorem}\label{thm5}
    Suppose $P_{Y}^{(l)}\in\M_{\rho}^{2}(\Y)$ for $l=1,2$. Under Assumptions \ref{assum1}-\ref{assum2} and \ref{assum5}-\ref{assum6}, we have
    \[
    \widehat{\I}_{\rho}\convp\I_{\rho},\quad\text{as }n_{1},n_{2}\rightarrow\infty.
    \]
\end{theorem}

\begin{theorem}\label{thm6}
    Suppose $P_{Y}^{(l)}\in\M_{\rho}^{2+\delta}(\Y)$ for $l=1,2$ and some $\delta>0$. Under $H_{a}$ in (\ref{eq:hypo}), Assumptions \ref{assum1}-\ref{assum2}, \ref{assum5}-\ref{assum6} and $n_{l}^{1/2}h_{l}^{\nu}\rightarrow 0$ as $n_{l}\rightarrow\infty$ for $l=1,2$, we have
    \[
    \frac{\widehat{\I}_{\rho}-\I_{\rho}}{2\sqrt{\frac{\delta_{1,0}^{2}}{n_{1}}+\frac{\delta_{0,1}^{2}}{n_{2}}}}\convd\N(0,1),\quad\text{as }n_{1},n_{2}\rightarrow\infty,
    \]
    where $\delta_{1,0}^{2}$ and $\delta_{0,1}^{2}$ are defined in equations (\ref{eq:delta10})-(\ref{eq:delta01}) of Appendix \ref{app:c}.
\end{theorem}

As shown in the proof, we have $\delta_{1,0}^{2}\asymp 1$ and $\delta_{0,1}^{2}\asymp 1$. Therefore, under $H_{a}$ in (\ref{eq:hypo}), $\widehat{\I}_{\rho}-\I_{\rho}=O_{p}(n_{1}^{-1/2}+n_{2}^{-1/2})$. Again, undersmoothing is necessary, and the same condition has been used in \citet{lee2009non}. Moreover, we emphasize that undersmoothing is not needed for establishing the consistency of $\widehat{\D}_{\rho}(x)$ and $\widehat{\I}_{\rho}$. 

\begin{theorem}\label{thm7}
    Suppose $P_{Y}^{(l)}\in\M_{\rho}^{2+\delta}(\Y)$ for $l=1,2$ and some $\delta>0$. Under $H_{0}$ in (\ref{eq:hypo}), Assumptions \ref{assum1}-\ref{assum2}, \ref{assum5}-\ref{assum6} and $n_{l}h_{l}^{p/2+\nu}\rightarrow 0$ as $n_{l}\rightarrow\infty$ for $l=1,2$, we have
    \[
    \frac{\widehat{\I}_{\rho}}{\sqrt{\frac{2\delta_{2,0}^{2}}{n_{1}(n_{1}-1)}+\frac{16\delta_{1,1}^{2}}{n_{1}n_{2}}+\frac{2\delta_{0,2}^{2}}{n_{2}(n_{2}-1)}}}\convd\N(0,1),\quad\text{as }n_{1},n_{2}\rightarrow\infty,
    \]
    where $\delta_{2,0}^{2}$, $\delta_{1,1}^{2}$ and $\delta_{0,2}^{2}$ are defined in equations (\ref{eq:delta20})-(\ref{eq:delta02}) of Appendix \ref{app:c}. 
\end{theorem}

As shown in the proof, we have $\delta_{2,0}^{2}\asymp h_{1}^{-p}$, $\delta_{1,1}^{2}\asymp h_{1}^{-p}+h_{2}^{-p}$ and $\delta_{0,2}^{2}\asymp h_{2}^{-p}$. Hence, the convergence rate of the sample ICED under $H_{0}$ in (\ref{eq:hypo}) is $\widehat{\I}_{\rho}=O_{p}(n_{1}^{-1}h_{1}^{-p/2}+n_{2}^{-1}h_{2}^{-p/2})$. Because of the integration of $x$ over $\R^{p}$, the convergence rate of the global statistic $\widehat{\I}_{\rho}$ is faster than that of the local statistic $\widehat{\D}_{\rho}(x)$ under both the null and alternative hypotheses. 
\begin{remark}
    Instead of a weighted sum of chi-squares, the global statistic $\widehat{\I}_{\rho}$ has a limiting normal distribution under the null. This distinct behavior follows from the martingale CLT (see, e.g., Corollary 3.1 of \citet{hall2014martingale}). The asymptotic distributions of both $\widehat{\D}_{\rho}(x)$ and $\widehat{\I}_{\rho}$ are governed by the second-order projections in their respective Hoeffding decompositions. However, the local statistic $\widehat{\D}_{\rho}(x)$ does not satisfy a key condition required for the application of the martingale CLT, and thus behaves similar to its unconditional counterparts \citep{szekely2004testing,gretton2012kernel}. In contrast, the global statistic $\widehat{\I}_{\rho}$ does meet the necessary conditions for the martingale CLT to hold. See also \citet{hall1984central} for related insights. 
\end{remark}

Theorems \ref{thm3}-\ref{thm4} and \ref{thm6}-\ref{thm7} demonstrate that kernel smoothing has two key effects on the asymptotic behavior of distance and kernel-based measures:
\begin{itemize}
    \item Kernel smoothing introduces biases of order $O(h_{1}^{\nu}+h_{2}^{\nu})$. To ensure valid inference, undersmoothing or other bias reduction techniques are necessary to achieve asymptotic unbiasedness. For the local statistic $\widehat{\D}_{\rho}(x)$, Theorem \ref{thm4} and Assumption \ref{assum2} require the bandwidths to satisfy $h_{l}\asymp n_{l}^{\kappa_{1}}$, where $\kappa_{1}\in(-\frac{1}{p},-\frac{1}{p+\nu})$. Notably, there is no restriction on the relationship between the kernel order $\nu$ and the dimension $p$ in this case. In contrast, for the global statistic $\widehat{\I}_{\rho}$, Theorem \ref{thm7} and Assumption \ref{assum2} impose the condition $h_{l}\asymp n_{l}^{\kappa_{2}}$, with $\kappa_{2}\in(-\frac{1}{p},-\frac{1}{p/2+\nu})$. This requires $\nu>p/2$, indicating that higher-order kernels are necessary as the dimension $p$ increases. In particular, the use of a second-order kernel, as in \citet{wang2015conditional}, becomes inadequate when $p\ge 4$ (see their Theorem 7). 

    \item As shown in the proof, due to kernel smoothing, the U-statistic estimators $\widehat{\D}_{\rho}(x)$ and $\widehat{\I}_{\rho}$ are nondegenerate under the null hypotheses. This stands in stark contrast to the unconditional setting, where the U-statistic estimators for energy distance and maximum mean discrepancy are degenerate under the null \citep{szekely2004testing,gretton2012kernel}. Nonetheless, when undersmoothing is used, a more in-depth analysis reveals that the first-order projections in their Hoeffding decompositions are asymptotically negligible under the null, relative to the second-order projections. As a result, the asymptotic null distributions are governed by the second-order projections. To be specific, we have $\xi_{1}^{2}=O(h_{1}^{-p}(h_{1}^{2\nu}+h_{2}^{2\nu}))$ under $H_{0}$ in (\ref{eq:hypo2}), instead of $\xi_{1}^{2}\asymp h_{1}^{-p}$ under $H_{a}$ in (\ref{eq:hypo2}). Similarly, we have $\delta_{1,0}^{2}=O(h_{1}^{2\nu}+h_{2}^{2\nu})$ under $H_{0}$ in (\ref{eq:hypo}), instead of $\delta_{1,0}^{2}\asymp 1$ under $H_{a}$ in (\ref{eq:hypo}). 
\end{itemize}

\begin{remark}\label{remark:errata}
    Our work utilizes techniques similar to those used by \citet{wang2015conditional} and \citet{ke2020expected}, including U-statistics and kernel smoothing. However, we believe that there exists a gap in the derivations of the asymptotic null distribution in both studies. Specifically, Theorem 7 of \citet{wang2015conditional} and Theorem 9 of \citet{ke2020expected} both rely on Lemma B.4 of \citet{fan1996consistent}, which establishes the CLT for degenerate U-statistics under certain conditions. To show that a U-statistic is degenerate, it must be demonstrated that $\mathcal{P}_{n1}(W_{1})=0$ under the null hypothesis, as per the notation used in their proofs. However, \citet{wang2015conditional} merely showed that $\E\{\mathcal{P}_{n1}(W_{1})\}=O(h^{2})$, while \citet{ke2020expected} incorrectly claimed that $\E\{\mathcal{P}_{n1}(W_{1})\}=0$. As a result, both studies overlook the crucial undersmoothing step necessary for valid inference. 
\end{remark}

\section{Applications to global and local two-sample conditional distribution testing}\label{sec:applications}
Theorems \ref{thm5}-\ref{thm7} collectively suggest that $\widehat{\I}_{\rho}$ is a suitable test statistic for testing (\ref{eq:hypo}). Nevertheless, it is impractical to use Theorem \ref{thm7} to compute the $p$-value since it is arduous to estimate $\delta_{2,0}^{2}$, $\delta_{1,1}^{2}$ and $\delta_{0,2}^{2}$. Moreover, it is widely acknowledged that a nonparametric test that relies on asymptotic normal approximation may perform poorly in finite samples \citep{su2007consistent}. Consequently, we resort to the local bootstrap proposed by \citet{paparoditis2000local}. This approach has been widely employed in other works involving conditional distributions; see, e.g.,  \citet{su2008nonparametric,huang2010testing,bouezmarni2012nonparametric,su2013nonparametric,taamouti2014nonparametric,wang2015conditional}. One can follow the aforementioned references to verify the asymptotic validity of this bootstrap method in our framework. 
Define
\begin{equation}
    \widehat{P}_{Y\mid X=x}=\bigg\{\sum_{l=1}^{2}\sum_{i=1}^{n_l}G_{b_l}(X_i^{(l)}-x)\bigg\}^{-1}\sum_{l=1}^{2}\sum_{i=1}^{n_l} G_{b_l}(X_i^{(l)}-x) \delta_{Y_i^{(l)}}, \label{eq:P_local_boot}
\end{equation}
where $\delta_{y}$ denotes a point mass at $y\in\Y$, and $b_1, b_2\in\R$ are the bandwidth parameters for local bootstrap. 
Essentially, $\widehat{P}_{Y \mid X = x}$ is a discrete distribution that assigns the probability $G_{b_l}(X_i^{(l)}-x) / \sum_{l=1}^{2}\sum_{i=1}^{n_l}G_{b_l}(X_i^{(l)}-x)$ to the observation $Y_i^{(l)}$. Then the following steps outline the procedure for the global two-sample conditional distribution test:
\begin{enumerate}[(i)]
    \item\label{enum:local_boot_step1} Calculate the global test statistic, i.e., the sample ICED, $\widehat{\I}_{\rho}$.
    \item\label{enum:local_boot_step2} For $l=1,2$ and $i=1,\ldots,n_{l}$, draw $\widehat{Y}_{i}^{(l)}$ from $\widehat{P}_{Y \mid X = X_i^{(l)}}$. Calculate the global test statistic using the local bootstrap samples $\{(\widehat{Y}_{i}^{(l)},X_{i}^{(l)})\}_{i=1}^{n_{l}}\ (l=1,2)$, which retains the marginal distributions of $X$ but imposes the null restriction (i.e., the same conditional distribution of $Y$ given $X$). 
    \item\label{enum:local_boot_step3} Repeat step \ref{enum:local_boot_step2} $B$ times, and collect $\{\widehat{\I}_{\rho}^{(t)}\}_{t=1}^{B}$. Then, the bootstrap-based $p$-value of the global test is given by
    \[
    p\text{-value}=\frac{\sum_{t=1}^{B}\mathbbm{1}\{\widehat{\I}_{\rho}^{(t)}>\widehat{\I}_{\rho}\}+1}{B+1}.
    \]
\end{enumerate}

Theorems \ref{thm2}-\ref{thm4} support the use of $\widehat{\D}_{\rho}(x)$ as the test statistic for testing (\ref{eq:hypo2}). 
Unfortunately, the asymptotic null distribution of $\widehat{\D}_{\rho}(x)$ in Theorem \ref{thm4} is not pivotal and involves infinite nuisance parameters.
To tackle this issue, we once again utilize the local bootstrap method to calculate the $p$-value, replacing the $\widehat{\I}_{\rho}$ in the above steps \ref{enum:local_boot_step1}-\ref{enum:local_boot_step3} with $\widehat{\D}_{\rho}(x)$. 

For the smoothing bandwidth parameters $h_l$ in the test statistics, we introduce a completely data-driven selection approach, which works well in our numerical studies.
Specifically, for $l = 1, 2$, we determine $h_l$ by minimizing the cross-validation loss function
\[
\operatorname{CV}(h_l(1),\ldots,h_{l}(p)) = \sum_{i=1}^{n_l} \| k(Y_i^{(l)}, \cdot) - \widehat{\Pi}_{k}^{-i}(P_{Y \mid X=X_{i}^{(l)}}^{(l)}) \|_{\mathcal{H}_k}^{2},
\]
where $\widehat{\Pi}_{k}^{-i}(P_{Y \mid X=x}^{(l)}) = \{\sum_{i'\ne i} G_{h_l}(X_{i'}^{(l)} - x)\}^{-1} \sum_{i'\ne i} G_{h_l}(X_{i'}^{(l)} - x) k(Y_{i'}^{(l)}, \cdot)$ is the leave-one-out Nadaraya-Watson estimator of the conditional kernel mean embedding $\Pi_{k}(P_{Y \mid X=x}^{(l)})$. The minimization problem is solved using the limited-memory quasi-Newton method with box constraints \citep{byrd1995limited}, implemented via the \texttt{optim(..., method = "L-BFGS-B")} function in \texttt{R}.  
The same approach is also applicable to the distance-based test statistics due to the equivalence between the CED and CMMD established in Theorem \ref{thm1}.
For the bandwidth parameters $b_l$ used in the local bootstrap, following the aforementioned references, we adopt the rule of thumb $b_l(s) = \min \{ \widehat{\sigma}_{X^{(l)}(s)}, \mathrm{IQR}_{X^{(l)}(s)} / 1.34 \} \cdot n_l^{-1/(p + 2\nu)} $ for $s = 1, \dots, p$, where $\widehat{\sigma}_{X^{(l)}(s)}$ and $\mathrm{IQR}_{X^{(l)}(s)}$ are the sample standard deviation and interquartile range of $X^{(l)}(s)$, respectively.

Following Remark \ref{remark:time}, the time complexity of both our global and local tests is $O(B(n_{1}+n_{2})^{2})$. This is the same as that of unconditional two-sample tests based on permutation approach when the number of random permutations is $B$ \citep{szekely2004testing,gretton2012kernel}.

\section{Numerical studies}\label{sec:num}
We conduct numerical studies to evaluate the performance of our proposed tests in this section.
The code that implements our tests and reproduces all numerical results is available at \url{https://github.com/lizhuoxi-97/TCDT}.

\subsection{Monte Carlo simulations}\label{subsec:simu}
In this subsection, we assess the finite-sample performance of our proposed methods based on the conditional energy distance (denoted as \texttt{CED}) and conditional maximum mean discrepancy (denoted as \texttt{CMMD}) through simulation examples.
In Examples \ref{eg:local__univariate}-\ref{eg:local__multivariate}, we examine \texttt{CED} and \texttt{CMMD} in testing the local problem \eqref{eq:hypo2}, for which no competing methods currently exist. In Examples \ref{eg:global__covariate_shift__univariate}-\ref{eg:global__label_shift}, we evaluate them in testing the global problem \eqref{eq:hypo}, comparing them against the conformal prediction test of \citet{hu2024two} (denoted as \texttt{CONF}).

In our proposed tests, we specify the semimetric $\rho$ for \texttt{CED} as the Euclidean distance. For the reproducing kernel $k$ in \texttt{CMMD}, we use the Gaussian kernel $k(y, y') = \exp\{-\|y - y'\|^2/(2\gamma^2)\}$, with the bandwidth $\gamma$ determined by the median heuristic \citep{gretton2012kernel,ke2020expected}.
For the smoothing kernel $g$, we adopt the Gaussian kernel.
In Examples \ref{eg:local__univariate}-\ref{eg:global__covariate_shift__univariate} and \ref{eg:global__label_shift}, we use a standard second-order ($\nu = 2$) kernel $g(u) = (2\pi)^{-1/2} \exp(-u^2 / 2)$.
In Example \ref{eg:global__covariate_shift__multivariate}, we employ a fourth-order ($\nu = 4$) kernel $g(u) = (3/2 - u^2/2) (2\pi)^{-1/2} \exp(-u^2 / 2)$ to handle the higher dimension $p = 4$, as discussed in Section \ref{sec:global}. 
We calculate the $p$-value of the proposed methods via the local bootstrap procedure with $B=299$ replications and the Gaussian kernel $g(u) = (2\pi)^{-1/2} \exp(-u^2 / 2)$ in (\ref{eq:P_local_boot}).
The selection of bandwidth parameters, $h_l$ and $b_l$, is provided in Section \ref{sec:applications}.

For \texttt{CONF}, we consider an equal data-splitting ratio and estimate the marginal and conditional density ratios by kernel logistic regression.
The tuning parameter $\sigma^2$ therein is set to $200$, and $\lambda$ is selected using the out-of-sample cross entropy loss, as \citet{hu2024two} suggested.

\begin{example}\label{eg:local__univariate}

To examine the performance of the proposed local tests for problem \eqref{eq:hypo2}, we focus on the general regression models
\begin{align}
Y^{(l)} = m^{(l)}(X^{(l)}) + \epsilon^{(l)} \quad (l = 1,2), \label{eq:reg_framework}
\end{align}
where $m^{(l)}$ is the regression function and $\epsilon^{(l)}$ is the error satisfying $\E(\epsilon^{(l)}\mid X^{(l)})=0$. Let $X^{(1)} \sim \N(0, 1)$ and $X^{(2)} \sim t(5)$, that is, $P_X^{(1)} \ne P_X^{(2)}$.
The error $\epsilon^{(l)} \sim \N(0, 0.5)$ is independent of $X^{(l)}$ for both $l=1,2$.
For the regression functions, we consider $m^{(1)}(x) = 0.5 x$ and $m^{(2)}(x) = - 0.5 x$. 
It is under the null hypothesis only at $x = 0$, and the signal for the difference between the two samples gets larger when the magnitude of local point $x$ is larger.

We set $n_1 = n_2 = 150$ and conducted the proposed tests at local points ranging from $-1$ to $1$, spaced by $0.1$.
For each point, we obtained empirical power or size from 1000 simulations at a significance level of $\alpha=0.05$.
Figure \ref{fig:power__local__univariate} summarizes the results.
At $x = 0$, the empirical sizes for \texttt{CED} and \texttt{CMMD} are $0.046$ and $0.056$, respectively, which are controlled around the nominal level.
The power of both \texttt{CED} and \texttt{CMMD} increases as the local signal strengthens.

\begin{figure}[htbp]
    \centering
    \includegraphics[width=0.88\textwidth]{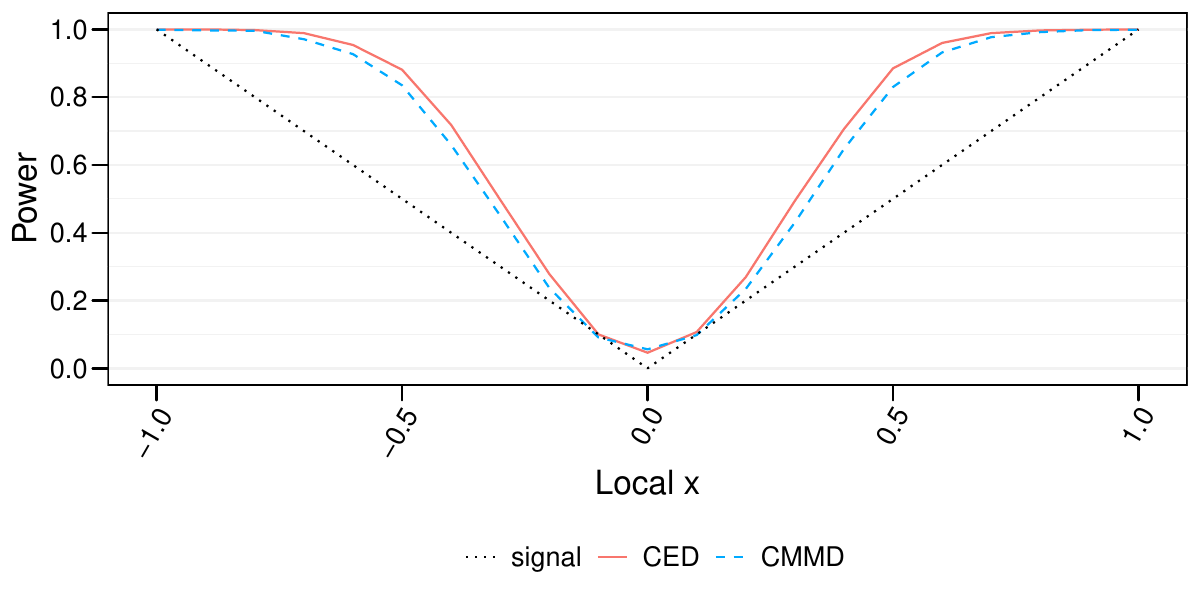}
    \caption{
    Empirical power curves under $\alpha = 0.05$, and properly scaled signals in Example \ref{eg:local__univariate}.
    }
    \label{fig:power__local__univariate}
\end{figure}

\end{example}

\begin{example}\label{eg:local__multivariate}

We further illustrate how the local tests perform in a single data generation.
Let $X^{(1)}$ follow a bivariate truncated standard normal distribution with support on $[-2, 2] \times [-2, 2]$, and let $X^{(2)} \sim \U(-2, 2) \times \U(-2, 2)$.
Define $Y^{(1)} = \|X^{(1)}\|^2 + \epsilon^{(1)}$ and $Y^{(2)} = \|X^{(2)}\|^2 \mathbbm{1}\{ X^{(2)}(1) X^{(2)}(2) \ge 0 \} - \|X^{(2)}\|^2 \mathbbm{1}\{ X^{(2)}(1) X^{(2)}(2) < 0 \} + \epsilon^{(2)}$, where $\epsilon^{(l)} \sim \N(0, 1)$ is independent of $X^{(l)}$ for $l = 1, 2$. 
In this setting, the null hypothesis in \eqref{eq:hypo2} holds whenever the local point $x$ lies in the first or third quadrant, and is violated when $x$ lies in the second or fourth quadrant.

We set $n_1 = n_2 = 500$ and conducted the local tests on a fixed uniform grid of $21 \times 21$ points over $(x(1), x(2)) \in [-2, 2] \times [-2, 2]$ at a significance level $\alpha=0.05$.
The results, based on a single data generation, are shown in Figure \ref{fig:local__grid}.
It demonstrates that the proposed tests correctly identify the regions of the alternative hypothesis where the signal is strong enough. 
The few false rejections in the first and third quadrants are due to the randomness of generating data a single time.

\begin{figure}[htbp]
    \centering
    \includegraphics[width=0.88\textwidth]{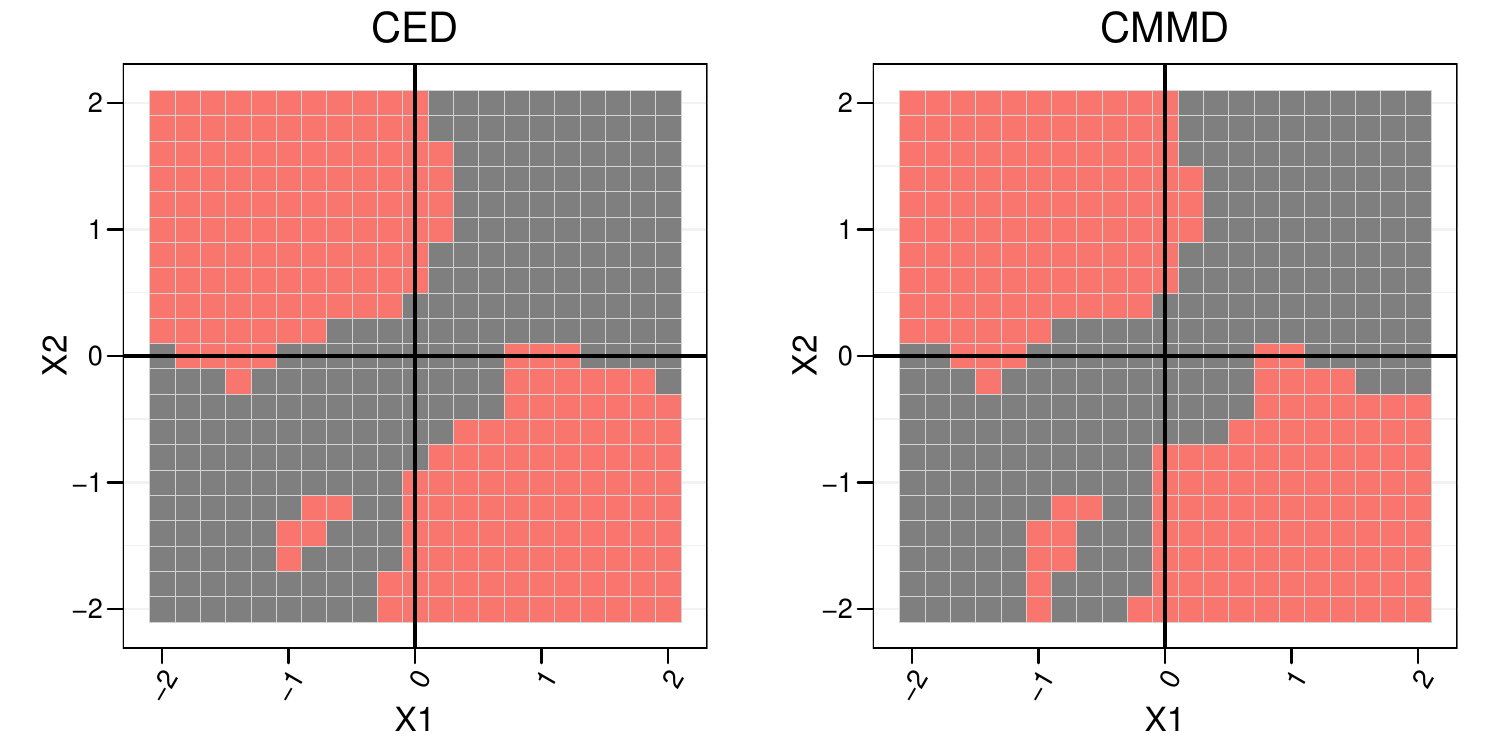}
    \caption{
    Significant local regions for Example \ref{eg:local__multivariate}, obtained by the proposed tests (\texttt{CED} and \texttt{CMMD}) under $\alpha = 0.05$ based on one-time data generation.
    The rejection regions are filled in red, whereas the non-significant regions are filled in gray.
    }
    \label{fig:local__grid}
\end{figure}
    
\end{example}

\begin{example} \label{eg:global__covariate_shift__univariate}
This example assesses the performance of the proposed global tests for problem \eqref{eq:hypo}. 
We adopt the general regression models \eqref{eq:reg_framework} from Example \ref{eg:local__univariate}. 
Let $X^{(1)} \sim \N(0, 1)$ and $X^{(2)} \sim t(5)$, resulting in $P_X^{(1)} \ne P_X^{(2)}$.
In addition to \texttt{CONF}, we compare our methods with the regression curve testing via empirical characteristic function \citep{pardo2015nonparametric} (denoted as \texttt{ECF}), which is applicable in the univariate case.

We first consider the scenarios in which the two conditional distributions differ solely in the conditional means (i.e., $m^{(l)}$). For both $l=1,2$, let $\epsilon^{(l)} \sim \N(0, 0.5)$ be independent of $X^{(l)}$. Two settings of $m^{(l)}$ are considered as follows:
\begin{setting}\label{enum:global__covariate_shift__univariate__diffm_x2+x(x+2)(x-2)} 
$m^{(1)}(x) = 1 + x^2$, $m^{(2)}(x) = 1 + x^2 + \frac{1}{5}c x(x+2)(x-2)$.
\end{setting}
\begin{setting}\label{enum:global__covariate_shift__univariate__diffm_exp+sin} 
$m^{(1)}(x) = 1 + \exp(x)$, $m^{(2)}(x) = 1 + \exp(x) + c \sin(2\pi x)$.
\end{setting}
Here, $c$ is a parameter that controls the strength of the signal.
The case $c = 0$ corresponds to the null hypothesis, while a larger $c$ indicates a greater difference.

Besides, we consider the scenarios in which the two conditional distributions differ only in the conditional variances.
Let the regression functions $m^{(1)}(x) = m^{(2)}(x) = 1 + x^2$ be identical across the two samples.
The following two cases are considered:
\begin{setting}\label{enum:global__covariate_shift__univariate__diffv_homo}
$\epsilon^{(1)} \sim \N(0, 0.25)$, $\epsilon^{(2)} \sim \N(0, 0.25 + c)$.
\end{setting}
\begin{setting}\label{enum:global__covariate_shift__univariate__diffv_hetero1} 
$\epsilon^{(1)} \sim \N(0, 0.25^2 \exp(2X^{(1)}))$, $\epsilon^{(2)} \sim \N(0, (0.25 + 0.5c)^2 \exp(2X^{(2)}))$.
\end{setting}

We set $n_1 = n_2 = 50$ or $100$ for Settings \ref{enum:global__covariate_shift__univariate__diffm_x2+x(x+2)(x-2)}-\ref{enum:global__covariate_shift__univariate__diffm_exp+sin}, and set $n_1 = n_2 = 100$ or $200$ for Settings \ref{enum:global__covariate_shift__univariate__diffv_homo}-\ref{enum:global__covariate_shift__univariate__diffv_hetero1}.
The signal strength $c$ is varied from $0$ to $2$ in Settings \ref{enum:global__covariate_shift__univariate__diffm_x2+x(x+2)(x-2)}-\ref{enum:global__covariate_shift__univariate__diffm_exp+sin}, and varied from $0$ to $1$ in Settings \ref{enum:global__covariate_shift__univariate__diffv_homo}-\ref{enum:global__covariate_shift__univariate__diffv_hetero1}.
Table \ref{tab:global__covariate_shift__univariate} and Figure \ref{fig:global__covariate_shift__univariate} summarize the empirical sizes and powers at a significance level $\alpha = 0.05$ via 1000 simulations.

\begin{table}[htbp]
    \centering
    \caption{Empirical size ($c = 0$) under $\alpha = 0.05$ in Example \ref{eg:global__covariate_shift__univariate}.
    }
    \label{tab:global__covariate_shift__univariate}
    \begin{tabular}{clrrrrr}
    \toprule
    Setting & \multicolumn{1}{c}{Sample Size} &       & \multicolumn{1}{c}{CED} & \multicolumn{1}{c}{CMMD} & \multicolumn{1}{c}{CONF} & \multicolumn{1}{c}{ECF} \\
    \midrule
    \multicolumn{7}{c}{Difference in conditional means} \\
    \midrule
    \multirow{2}[2]{*}{\ref{enum:global__covariate_shift__univariate__diffm_x2+x(x+2)(x-2)}} & $n_1 = n_2 = 50$ &       & 0.046  & 0.053  & 0.074  & 0.113  \\
        & $n_1 = n_2 = 100$ &       & 0.049  & 0.049  & 0.066  & 0.081  \\
    \midrule
    \multirow{2}[2]{*}{\ref{enum:global__covariate_shift__univariate__diffm_exp+sin}} & $n_1 = n_2 = 50$ &       & 0.043  & 0.046  & 0.043  & 0.101  \\
        & $n_1 = n_2 = 100$ &       & 0.043  & 0.046  & 0.037  & 0.084  \\
    \midrule
    \multicolumn{7}{c}{Difference in conditional variances} \\
    \midrule
    \multirow{2}[2]{*}{\ref{enum:global__covariate_shift__univariate__diffv_homo}} & $n_1 = n_2 = 100$ &       & 0.042  & 0.043  & 0.071  & 0.094  \\
        & $n_1 = n_2 = 200$ &       & 0.040  & 0.037  & 0.093  & 0.080  \\
    \midrule
    \multirow{2}[2]{*}{\ref{enum:global__covariate_shift__univariate__diffv_hetero1}} & $n_1 = n_2 = 100$ &       & 0.027  & 0.027  & 0.066  & 0.042  \\
        & $n_1 = n_2 = 200$ &       & 0.023  & 0.017  & 0.082  & 0.030  \\
    \bottomrule
    \end{tabular}%
\end{table}

\begin{figure}[htbp]
    \centering
    \includegraphics[width=\textwidth]{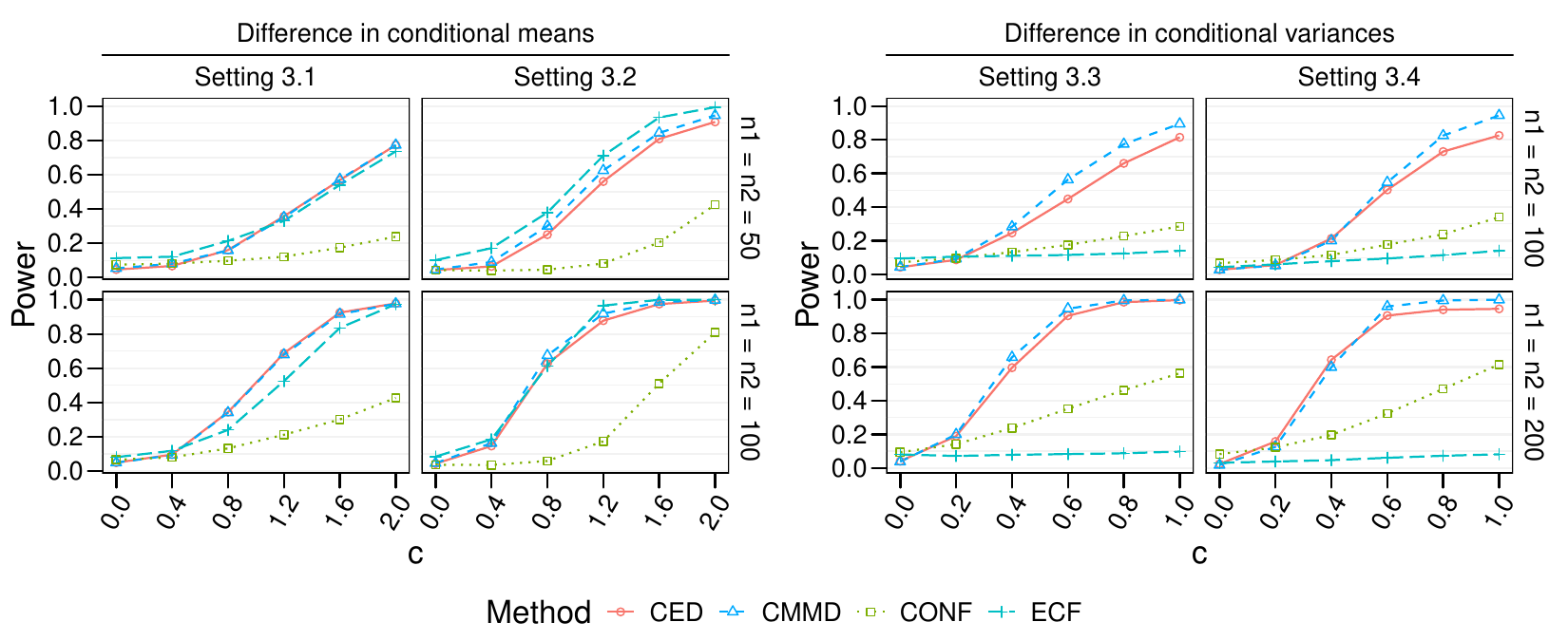}
    \caption{Empirical size ($c = 0$) and power ($c\neq 0$) under $\alpha = 0.05$ in Example \ref{eg:global__covariate_shift__univariate}.
    }
    \label{fig:global__covariate_shift__univariate}
\end{figure}

It can be observed that \texttt{CONF} and \texttt{ECF} generally yield empirical sizes higher than 0.05, whereas both \texttt{CED} and \texttt{CMMD} control the empirical size well across all scenarios.
A possible reason for \texttt{CONF}'s inflated size is its reliance on high-quality density ratio estimators and a potential requirement for unbalanced sample splitting; see Assumption 2(b) and discussion in \citet{hu2024two}.
Although \texttt{ECF} is applicable to general distributions, its size control is often unsatisfactory when the support of $X^{(l)}$ is not the interval $[0, 1]$. In Settings \ref{enum:global__covariate_shift__univariate__diffm_x2+x(x+2)(x-2)}-\ref{enum:global__covariate_shift__univariate__diffm_exp+sin} where differences in complex nonlinear regression functions are considered, the power of \texttt{CED}, \texttt{CMMD} and \texttt{ECF} are notably greater than that of \texttt{CONF}.
In Settings \ref{enum:global__covariate_shift__univariate__diffv_homo}-\ref{enum:global__covariate_shift__univariate__diffv_hetero1}, the proposed tests also show much higher power than \texttt{CONF}, while \texttt{ECF} shows trivial power as it is designed for comparing regression curves.
The satisfactory performance of \texttt{CED} and \texttt{CMMD} in all scenarios highlights their applicability to various types of alternative hypotheses.
\end{example}

\begin{example}\label{eg:global__covariate_shift__multivariate}
Similar to Example \ref{eg:global__covariate_shift__univariate}, we now consider the setting where the dimension of $X$ is $p = 4$.
The data generating process again follows the regression models \eqref{eq:reg_framework}. 
Let $X^{(1)} \sim \N(\mathbf{0}, \mathbf{I})$ and $X^{(2)} \sim \N(\boldsymbol{\mu}, \mathbf{I})$, where $\boldsymbol{\mu} = (1, 1, -1, 0)^{\top}$.
For both $l=1,2$, the error $\epsilon^{(l)} \sim t(5)$ has a heavy tail and is independent of $X^{(l)}$.
We set $m^{(1)}(x) = \{\theta(x)\}^2$ and $m^{(2)}(x) = \{\theta(x)\}^2 + c \theta(x)$, where $\theta(x) = x(1) + x(2) + x(3) + x(4) + 1$ and $c$ is the signal strength.

We let $n_1 = n_2 = 100$ or $200$ and vary $c$ from $0$ to $2$.
The simulations are conducted 1000 times at a significance level of $\alpha=0.05$.
Table \ref{tab:size__global__covariate_shift__multivariate} and Figure \ref{fig:power__global__covariate_shift__multivariate} present the empirical sizes and powers.
In this multivariate $X$ setting, both \texttt{CED} and \texttt{CMMD} are observed to achieve better power performance compared to \texttt{CONF}, while guaranteeing type I error control.

\begin{table}[htbp]
\centering
\caption{Empirical size ($c = 0$) under $\alpha = 0.05$ in Example \ref{eg:global__covariate_shift__multivariate}.}
\label{tab:size__global__covariate_shift__multivariate}
    \begin{tabular}{lrrrr}
    \toprule
    \multicolumn{1}{c}{Sample Size} &       & \multicolumn{1}{c}{CED} & \multicolumn{1}{c}{CMMD} & \multicolumn{1}{c}{CONF} \\
    \midrule
    $n_1 = n_2 = 100$ &       & 0.016  & 0.025  & 0.037  \\
    $n_1 = n_2 = 200$ &       & 0.052  & 0.049  & 0.040  \\
    \bottomrule
    \end{tabular}%
\end{table}

\begin{figure}[htbp]
    \centering
    \includegraphics[width=0.88\textwidth]{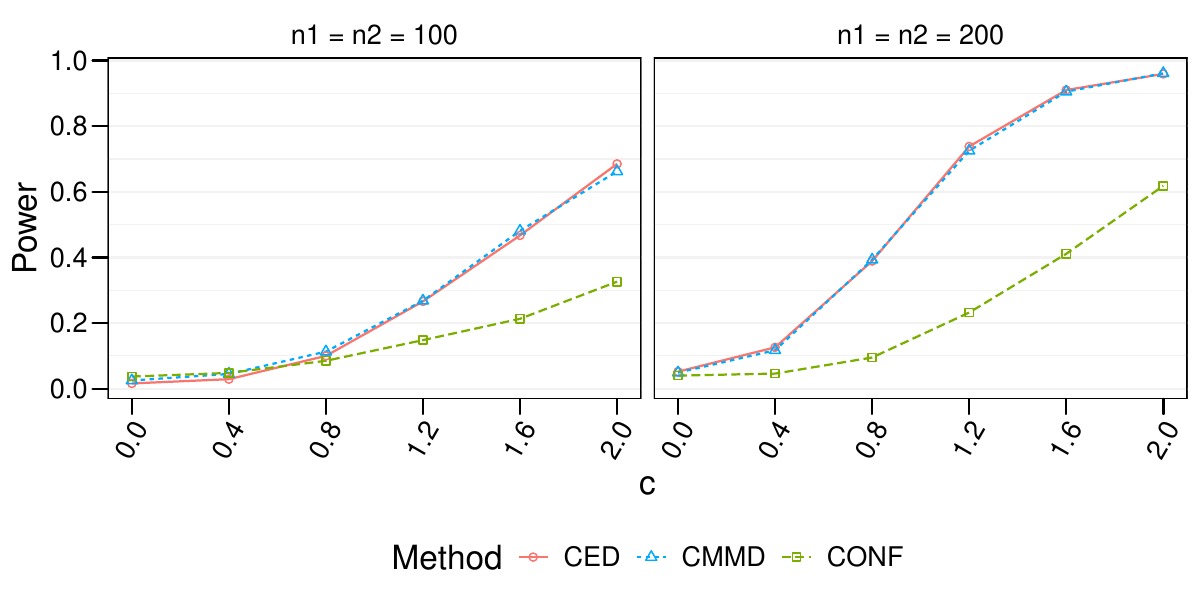}
    \caption{Empirical size ($c = 0$) and power ($c\neq 0$) under $\alpha = 0.05$ in Example \ref{eg:global__covariate_shift__multivariate}.}
    \label{fig:power__global__covariate_shift__multivariate}
\end{figure}

\end{example}

\begin{example}\label{eg:global__label_shift}
We further consider a scenario that appears in transfer learning: testing whether the conditional distribution of covariates given the response, $P_{X\mid Y}$, is the same for two populations (i.e., prior shift assumption).
To generate data, we adopt a multivariate regression model of $X$ on $Y$, conversely to \eqref{eq:reg_framework}:
\begin{align*}
    X^{(l)} = m^{(l)}(Y^{(l)}) + \epsilon^{(l)} \quad (l = 1, 2),
\end{align*}
where $m^{(l)}: \R \to \R^{p}$ is the multivariate regression function and $\epsilon^{(l)} \in \R^{p}$ is the multivariate error satisfying $\E(\epsilon^{(l)}\mid Y^{(l)})=\mathbf{0}$.
Let $\epsilon^{(l)} \sim \N(\mathbf{0}, \pmb{\Sigma})$ be independent of $Y^{(l)}$ for both $l = 1, 2$, where $\pmb{\Sigma} = (\rho^{|i - j|})$ with $\rho = 0.5$.
We set $Y^{(1)} \sim \N(0, 1)$ and $Y^{(2)} \sim \N(0, 1.5)$, that is, $P_Y^{(1)} \ne P_Y^{(2)}$.
Let $\theta^{(1)}(y) = y \cdot \beta$ and $\theta^{(2)}(y) = (1 + c) y \cdot \beta$, where $\beta \in \R^{p}$, and $c$ controls the signal for the difference of the conditional distributions between the two samples.
We then set
\begin{equation*}
    m^{(l)}(y) = \left [\{\theta^{(l)}(y)\}^{\circ 2} + 3 \theta^{(l)}(y) + 2 \right ] \div \left [\{\theta^{(l)}(y)\}^{\circ 2} + 1 \right ]\quad (l=1,2),
\end{equation*}
where $^{\circ 2}$ denotes the Hadamard square and $\div$ denotes the Hadamard division.
We consider $p = 5$ or $20$.
For $p = 5$, we set $\beta(j) = -0.5$ for $j = 1, 2$, $\beta(j) = 0.5$ for $j = 3, 4$ and $\beta(j) = 1$ for $j = 5$.
For $p = 20$, we set $\beta(j) = -0.5$ for $j = 1, 2, 3$, $\beta(j) = 0.5$ for $j = 4, 5$, $\beta(j) = 1$ for $j = 6, \dots, 10$ and $\beta(j) = 0$ for $j = 11, \dots, 20$.

Let $n_1 = n_2 = 100$ or $200$, and vary the signal strength $c$ from $0$ to $1$. 
For each setting, we repeat 1000 simulations to obtain the empirical size ($c = 0$) and power ($c \ne 0$) of \texttt{CMMD}, \texttt{CED} and \texttt{CONF} at a significance level of $\alpha=0.05$.
The results are presented in Table \ref{tab:size__global__label_shift} and Figure \ref{fig:power__global__label_shift}.
Table \ref{tab:size__global__label_shift} demonstrates that all three methods control the type I error rate.
The power comparison, illustrated in Figure \ref{fig:power__global__label_shift}, clearly shows an advantage of \texttt{CED} and \texttt{CMMD} over \texttt{CONF}.

\begin{table}[htbp]
    \centering    
    \caption{Empirical size ($c = 0$) under $\alpha = 0.05$ in Example \ref{eg:global__label_shift}.
    }\label{tab:size__global__label_shift}
    \begin{tabular}{lrrrrrrrr}
    \toprule
          &       & \multicolumn{3}{c}{$p = 5$} &       & \multicolumn{3}{c}{$p = 20$} \\
    \cmidrule{3-5}\cmidrule{7-9}\multicolumn{1}{c}{Sample Size} &       & \multicolumn{1}{c}{CED} & \multicolumn{1}{c}{CMMD} & \multicolumn{1}{c}{CONF} &       & \multicolumn{1}{c}{CED} & \multicolumn{1}{c}{CMMD} & \multicolumn{1}{c}{CONF} \\
    \midrule
    $n_1 = n_2 = 100$ &       & 0.026  & 0.026  & 0.051  &       & 0.037  & 0.037  & 0.039  \\
    $n_1 = n_2 = 200$ &       & 0.050  & 0.054  & 0.049  &       & 0.033  & 0.033  & 0.019  \\
    \bottomrule
    \end{tabular}%
\end{table}

\begin{figure}[htbp]
    \centering
    \includegraphics[width=0.88\textwidth]{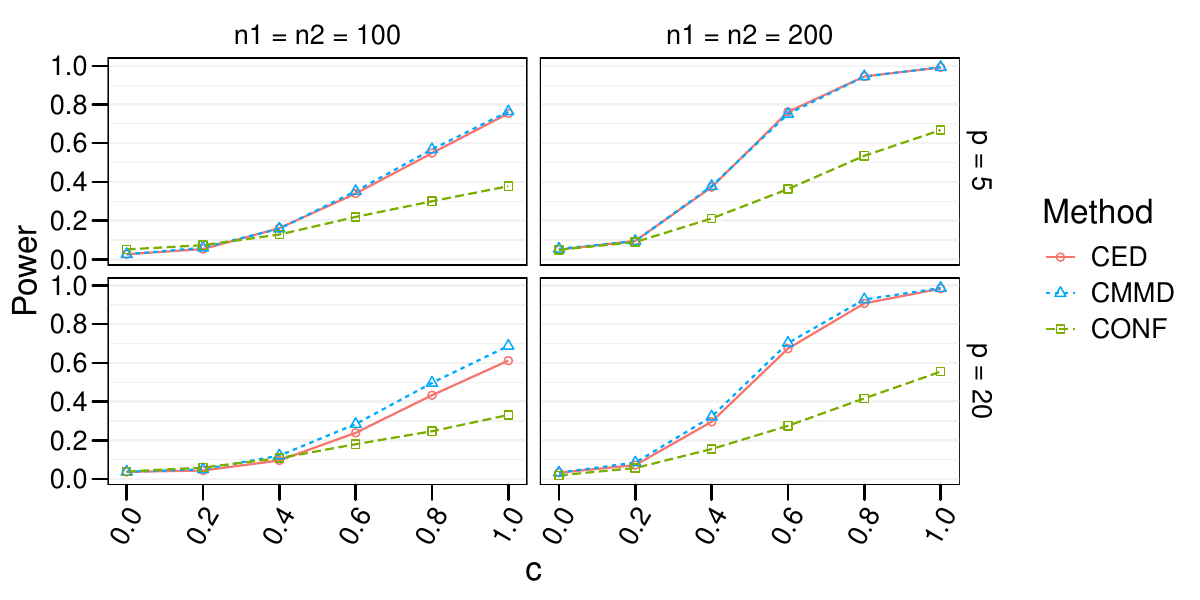}
    \caption{Empirical size ($c = 0$) and power ($c\neq 0$) under $\alpha = 0.05$ in Example \ref{eg:global__label_shift}.
    }
    \label{fig:power__global__label_shift}
\end{figure}
    
\end{example}

\subsection{Airfoil data example}

We consider the airfoil dataset from the UCI Machine Learning Repository \citep{airfoil}. 
This dataset, collected by NASA to study airfoil sound pressure, has been previously analyzed in \citet{tibshirani2019conformal,hu2024two,huang2024efficient}. 
It contains $N=1503$ observations, with a response $Y$ (scaled sound pressure level) and covariates $X$ with $p=5$: log-frequency, angle of attack, chord length, free-stream velocity, and suction-side log-displacement thickness.

As the dataset consists of a single sample only, we adopt a semi-synthetic approach, similar to that of \citet{tibshirani2019conformal,hu2024two,huang2024efficient}, to create two-sample scenarios. 
We generate four different settings by partitioning the data:
\begin{enumerate}
    \item\label{enum:realdata2_cov_shift_null} Covariate shift under the null hypothesis:
    The dataset is randomly partitioned into $\mathcal{D}_1$ with $n_1 = 301$ and $\mathcal{\widetilde{D}}_2$ with $\tilde{n}_2 = 1202$. 
    The second sample, $\mathcal{D}_2$, is formed by sampling $n_2=301$ points from $\mathcal{\widetilde{D}}_2$ without replacement, with probabilities proportional to $w(X) = \exp(\alpha^\top X)$, where $\alpha = (-1, 0, 0, 0, 1)^\top$. 
    This induces a covariate shift while preserving $P_{Y\mid X}$.
    \item\label{enum:realdata2_prior_shift_null} Prior shift under the null hypothesis:
    The procedure is identical to Setting \ref{enum:realdata2_cov_shift_null}, but sampling probabilities are instead proportional to the response, $w(Y) = Y$. 
    This induces a prior shift while preserving $P_{X\mid Y}$.
    \item\label{enum:realdata2_cov_shift_alt} Covariate shift under the alternative hypothesis:
    The dataset is partitioned into two groups based on the median of the response variable.
    The first sample contains $n_1 = 752$ points with smaller values of $Y$, while the second sample contains the rest of $n_2 = 751$ points. 
    To avoid singularity between two groups, we randomly select $0.05n_1$ points in each of the two samples and then flip their groups.
    This construction is designed to violate the equality of $P_{Y\mid X}$.
    \item\label{enum:realdata2_prior_shift_alt} Prior shift under the alternative hypothesis:
    The setup is almost the same as Setting \ref{enum:realdata2_cov_shift_alt}, but the partition is based on the median of the ``chord length" variable, creating a scenario where $P_{X\mid Y}$ differs.
\end{enumerate}

For each setting, we apply our global tests at a significance level of $\alpha = 0.05$.
In the covariate shift scenarios (Settings \ref{enum:realdata2_cov_shift_null} and \ref{enum:realdata2_cov_shift_alt}), we test the equality of $P_{Y\mid X}$ using a Gaussian smoothing kernel of $\nu = 4$, as here $p=5$. 
In the prior shift scenarios (Settings \ref{enum:realdata2_prior_shift_null} and \ref{enum:realdata2_prior_shift_alt}), we test the equality of $P_{X\mid Y}$ with a Gaussian smoothing kernel of $\nu = 2$.
All other implementation details are the same as those in Section \ref{subsec:simu}.

The results are summarized in Table \ref{tab:realdata2}. 
Under the null hypothesis (Settings \ref{enum:realdata2_cov_shift_null}-\ref{enum:realdata2_prior_shift_null}), we performed 500 replications and report the empirical rejection rates. 
The type I error of both \texttt{CED} and \texttt{CMMD} are close to the nominal 0.05 level. 
Under the alternative hypothesis (Settings \ref{enum:realdata2_cov_shift_alt}-\ref{enum:realdata2_prior_shift_alt}), which involve deterministic partitions, we report the $p$-values from a single test. 
In both cases, the tests correctly reject $H_0$ with very small $p$-values.

\begin{table}[htbp]
    \centering
    \caption{Airfoil data example: Empirical rejection rate and $p$-values from the proposed global tests for different partition settings.}\label{tab:realdata2}
    \begin{tabular}{lrrrlrr}
    \toprule
        & \multicolumn{2}{c}{Rejection rate} &       &       & \multicolumn{2}{c}{$p$-value} \\
    \cmidrule{2-3}\cmidrule{6-7}      & \multicolumn{1}{c}{CED} & \multicolumn{1}{c}{CMMD} &       &       & \multicolumn{1}{c}{CED} & \multicolumn{1}{c}{CMMD} \\
    \cmidrule{1-3}\cmidrule{5-7}Setting \ref{enum:realdata2_cov_shift_null} & 0.052  & 0.052  &       & Setting \ref{enum:realdata2_cov_shift_alt} & 0.003  & 0.003  \\
    Setting \ref{enum:realdata2_prior_shift_null} & 0.044  & 0.042  &       & Setting \ref{enum:realdata2_prior_shift_alt} & 0.003  & 0.003  \\
    \bottomrule
    \end{tabular}%
\end{table}

Furthermore, we apply our local tests to Setting \ref{enum:realdata2_prior_shift_alt}. 
Specifically, we test for local differences in $P_{X\mid Y=y}$ for values of the response $y$ ranging from 110 to 130. 
Figure \ref{fig:airfoil} plots the resulting $p$-values against $y$. 
The $p$-values remain consistently below the 0.05 significance level across the entire range, in agreement with the setup of Setting \ref{enum:realdata2_prior_shift_alt}.

\begin{figure}[htbp]
    \centering
    \includegraphics[width=0.88\textwidth]{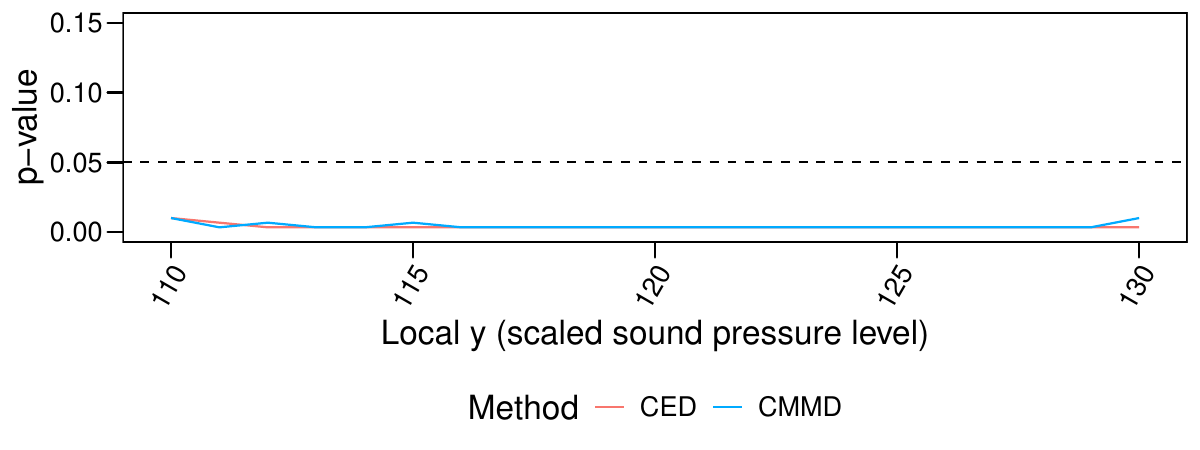}
    \caption{Airfoil data example: Curve of $p$-values from the proposed local tests on Setting \ref{enum:realdata2_prior_shift_alt} for local $y$ between 110 and 130.
    The horizontal dashed line indicates the $0.05$ significance level.
    }
    \label{fig:airfoil}
\end{figure}

\section{Discussions}
Several research directions warrant further exploration. First, it would be valuable to extend the framework to the $K$-sample conditional distribution testing problem for $K\geq 2$ and establish connections with distance-based multivariate analysis of variance \citep{rizzo2010disco}. 
Beyond kernel smoothing, alternative machine learning techniques, such as random forest and neural networks, could be considered for estimating the CED and ICED. These methods are anticipated to yield improved performance for higher-dimensional $X$, but deriving the asymptotic distributions of the resulting test statistics presents significant theoretical challenges. 
Finally, for local testing problems, especially when inference at a boundary point is of interest, a statistic based on local polynomial fitting may be preferable to the sample CED. Such an approach could be particularly useful in the regression discontinuity design \citep{calonico2014robust}.


\newpage
\begin{appendices}
\counterwithin{theorem}{section}
\counterwithin{definition}{section}
\renewcommand{\theequation}{A.\arabic{equation}}
\setcounter{equation}{0}

\section{Computation of the sample CED and ICED}\label{app:a}
For the sample CED $\widehat{\D}_{\rho}(x)$, note that, for $i=1,\ldots,n_{1}$,
\[
\sum_{i'\neq i}G_{h_{1}}(X_{i'}^{(1)}-x)=\sum_{i'=1}^{n_{1}}G_{h_{1}}(X_{i'}^{(1)}-x)-G_{h_{1}}(X_{i}^{(1)}-x),
\]
and, for $j=1,\ldots,n_{2}$,
\[
\sum_{j'\neq j}G_{h_{2}}(X_{j'}^{(2)}-x)=\sum_{j'=1}^{n_{2}}G_{h_{2}}(X_{j'}^{(2)}-x)-G_{h_{2}}(X_{j}^{(2)}-x).
\]
We first calculate $\sum_{i'=1}^{n_{1}}G_{h_{1}}(X_{i'}^{(1)}-x)$ and $\sum_{j'=1}^{n_{2}}G_{h_{2}}(X_{j'}^{(2)}-x)$, which can be computed in $O(n_{1})$ and $O(n_{2})$, respectively. Then it can be seen that the overall cost of computing the statistic $\widehat{\D}_{\rho}(x)$ is $O((n_{1}+n_{2})^{2})$. 

For the sample ICED $\widehat{\I}_{\rho}$, note that, for $i_{1}=1,\ldots,n_{1}$,
\[
\sum_{j_{1}\neq j_{2}}G_{h_{2}}(X_{j_{1}}^{(2)}-X_{i_{1}}^{(1)})G_{h_{2}}(X_{j_{2}}^{(2)}-X_{i_{1}}^{(1)})=\bigg[\sum_{j=1}^{n_{2}}G_{h_{2}}(X_{j}^{(2)}-X_{i_{1}}^{(1)})\bigg]^{2}-\sum_{j=1}^{n_{2}}\Big[G_{h_{2}}(X_{j}^{(2)}-X_{i_{1}}^{(1)})\Big]^{2},
\]
for $j_{1}=1,\ldots,n_{2}$,
\[
\sum_{i_{1}\neq i_{2}}G_{h_{1}}(X_{i_{1}}^{(1)}-X_{j_{1}}^{(2)})G_{h_{1}}(X_{i_{2}}^{(1)}-X_{j_{1}}^{(2)})=\bigg[\sum_{i=1}^{n_{1}}G_{h_{1}}(X_{i}^{(1)}-X_{j_{1}}^{(2)})\bigg]^{2}-\sum_{i=1}^{n_{1}}\Big[G_{h_{1}}(X_{i}^{(1)}-X_{j_{1}}^{(2)})\Big]^{2},
\]
and, for $i=1,\ldots,n_{1}$ and $j=1,\ldots,n_{2}$, 
\begin{align*}
    \sum_{i'\neq i}G_{h_{1}}(X_{i'}^{(1)}-X_{j}^{(2)})&=\sum_{i'=1}^{n_{1}}G_{h_{1}}(X_{i'}^{(1)}-X_{j}^{(2)})-G_{h_{1}}(X_{i}^{(1)}-X_{j}^{(2)}),\\
    \sum_{j'\neq j}G_{h_{2}}(X_{j'}^{(2)}-X_{i}^{(1)})&=\sum_{j'=1}^{n_{2}}G_{h_{2}}(X_{j'}^{(2)}-X_{i}^{(1)})-G_{h_{2}}(X_{j}^{(2)}-X_{i}^{(1)}).
\end{align*}
We first calculate
\begin{align*}
    \bigg\{\sum_{j=1}^{n_{2}}G_{h_{2}}(X_{j}^{(2)}-X_{i}^{(1)})\bigg\}_{i=1}^{n_{1}},\quad\bigg\{\sum_{j=1}^{n_{2}}\Big[G_{h_{2}}(X_{j}^{(2)}-X_{i}^{(1)})\Big]^{2}\bigg\}_{i=1}^{n_{1}},\\
    \bigg\{\sum_{i=1}^{n_{1}}G_{h_{1}}(X_{i}^{(1)}-X_{j}^{(2)})\bigg\}_{j=1}^{n_{2}},\quad\bigg\{\sum_{i=1}^{n_{1}}\Big[G_{h_{1}}(X_{i}^{(1)}-X_{j}^{(2)})\Big]^{2}\bigg\}_{j=1}^{n_{2}}.
\end{align*}
Each of these sequences can be computed with $O(n_{1}n_{2})$ operations. Then, it can be seen that the overall cost of computing the statistic $\widehat{\I}_{\rho}$ is $O((n_{1}+n_{2})^{2})$.

\section{Ancillary results}
Below we list some relevant results about the generalized U-statistic in \citet{lee2019u}. 
\begin{definition}[Section 2.2, \citet{lee2019u}]\label{defa1}
    Assume that $\{Z_{i}^{(1)}\}_{i=1}^{n_{1}}$ and $\{Z_{j}^{(2)}\}_{j=1}^{n_{2}}$ are \iid\ samples from distributions $P^{(1)}$ and $P^{(2)}$ respectively, and $\{Z_{i}^{(1)}\}_{i=1}^{n_{1}}$ is independent of $\{Z_{j}^{(2)}\}_{j=1}^{n_{2}}$. Let $\psi$ be a function of $m_{1}+m_{2}$ arguments
    \[
    \psi(z_{1}^{(1)},\ldots,z_{m_{1}}^{(1)};z_{1}^{(2)},\ldots,z_{m_{2}}^{(2)}),
    \]
    which is symmetric in $z_{1}^{(1)},\ldots,z_{m_{1}}^{(1)}$ and $z_{1}^{(2)},\ldots,z_{m_{2}}^{(2)}$. The generalized U-statistic based on $\psi$ is a statistic of the form
    \[
    U_{n_{1},n_{2}}=\binom{n_{1}}{m_{1}}^{-1}\binom{n_{2}}{m_{2}}^{-1}\sum_{(n_{1},m_{1})}\sum_{(n_{2},m_{2})}\psi(Z_{i_{1}}^{(1)},\ldots,Z_{i_{m_{1}}}^{(1)};Z_{j_{1}}^{(2)},\ldots,Z_{j_{m_{2}}}^{(2)}), 
    \]
    where the summation is over all $m_{1}$-subsets of $\{Z_{i}^{(1)}\}_{i=1}^{n_{1}}$ and $m_{2}$-subsets of $\{Z_{j}^{(2)}\}_{j=1}^{n_{2}}$. Then $U_{n_{1},n_{2}}$ is an unbiased estimator of $\E\{\psi(Z_{1}^{(1)},\ldots,Z_{m_{1}}^{(1)};Z_{1}^{(2)},\ldots,Z_{m_{2}}^{(2)})\}$. 
\end{definition}

\begin{definition}[Section 2.2, \citet{lee2019u}]\label{defa2}
    For $c=0,\ldots,m_{1}$ and $d=0,\ldots,m_{2}$, define
    \[
    \psi_{c,d}(z_{1}^{(1)},\ldots,z_{c}^{(1)};z_{1}^{(2)},\ldots,z_{d}^{(2)})=\E\{\psi(z_{1}^{(1)},\ldots,z_{c}^{(1)},Z_{c+1}^{(1)},\ldots,Z_{m_{1}}^{(1)};z_{1}^{(2)},\ldots,z_{d}^{(2)},Z_{d+1}^{(2)},\ldots,Z_{m_{2}}^{(2)})\},
    \]
    and
    \[
    \sigma_{c,d}^{2}=\var\{\psi_{c,d}(Z_{1}^{(1)},\ldots,Z_{c}^{(1)};Z_{1}^{(2)},\ldots,Z_{d}^{(2)})\}.
    \]
\end{definition}

\begin{theorem}[Theorem 2 in Section 2.2, \citet{lee2019u}]\label{thma1}
    The variance of the generalized U-statistic $U_{n_{1},n_{2}}$ in Definition \ref{defa1} is given by
    \[
    \var(U_{n_{1},n_{2}})=\binom{n_{1}}{m_{1}}^{-1}\binom{n_{2}}{m_{2}}^{-1}\sum_{c=0}^{m_{1}}\sum_{d=0}^{m_{2}}\binom{m_{1}}{c}\binom{m_{2}}{d}\binom{n_{1}-m_{1}}{m_{1}-c}\binom{n_{2}-m_{2}}{m_{2}-d}\sigma_{c,d}^{2},
    \]
    where $\sigma_{c,d}^{2}$ is given in Definition \ref{defa2}. 
\end{theorem}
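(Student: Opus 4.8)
The plan is to compute $\var(U_{n_{1},n_{2}})=\E(U_{n_{1},n_{2}}^{2})-\{\E(U_{n_{1},n_{2}})\}^{2}$ by expanding the square into a double sum over ordered pairs of index selections and grouping the resulting covariances according to their overlap pattern. First I would write
\[
\var(U_{n_{1},n_{2}})=\binom{n_{1}}{m_{1}}^{-2}\binom{n_{2}}{m_{2}}^{-2}\sum\sum\cov\big(\psi(Z^{(1)}_{A};Z^{(2)}_{B}),\,\psi(Z^{(1)}_{A'};Z^{(2)}_{B'})\big),
\]
where the double sum runs over all pairs of $m_{1}$-subsets $A,A'$ of $\{1,\ldots,n_{1}\}$ and all pairs of $m_{2}$-subsets $B,B'$ of $\{1,\ldots,n_{2}\}$, and $Z^{(k)}_{S}$ abbreviates the corresponding tuple of variables.

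The key reduction is that this covariance depends on the two tuples only through the number $c=|A\cap A'|$ of shared indices in sample~1 and $d=|B\cap B'|$ in sample~2, and equals exactly $\sigma_{c,d}^{2}$. To see this, I would fix two tuples with overlaps $(c,d)$ and, using the symmetry of $\psi$ in each block of arguments, relabel so that the shared indices are the first $c$ and first $d$. Conditioning on the shared variables $z_{1}^{(1)},\ldots,z_{c}^{(1)};z_{1}^{(2)},\ldots,z_{d}^{(2)}$, the conditional mean of each kernel is precisely $\psi_{c,d}$ evaluated at these shared values, by Definition~\ref{defa2}. Since the remaining arguments of the two kernels are drawn from disjoint collections of \iid\ variables, the two kernels are conditionally independent given the shared variables, so
\[
\E\{\psi(Z^{(1)}_{A};Z^{(2)}_{B})\,\psi(Z^{(1)}_{A'};Z^{(2)}_{B'})\mid\text{shared}\}=\psi_{c,d}(\text{shared})^{2}.
\]
Taking expectations and using $\E(\psi_{c,d})=\E(\psi)=\E(U_{n_{1},n_{2}})$ gives $\cov=\E(\psi_{c,d}^{2})-\{\E(\psi_{c,d})\}^{2}=\sigma_{c,d}^{2}$, as claimed.

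The final step is the combinatorial count. I would count the ordered pairs of $m_{1}$-subsets of $\{1,\ldots,n_{1}\}$ sharing exactly $c$ elements: choose $A$ in $\binom{n_{1}}{m_{1}}$ ways, choose the $c$ shared indices inside $A$ in $\binom{m_{1}}{c}$ ways, and choose the remaining $m_{1}-c$ indices of $A'$ from the $n_{1}-m_{1}$ indices outside $A$ in $\binom{n_{1}-m_{1}}{m_{1}-c}$ ways, giving $\binom{n_{1}}{m_{1}}\binom{m_{1}}{c}\binom{n_{1}-m_{1}}{m_{1}-c}$. Multiplying by the analogous count for sample~2 and substituting into the expansion, one power of $\binom{n_{1}}{m_{1}}$ and of $\binom{n_{2}}{m_{2}}$ cancels, yielding the stated formula. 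The main obstacle is the conditional-independence identity $\cov=\sigma_{c,d}^{2}$, which requires carefully separating the shared from the unshared arguments; once this is established the remaining bookkeeping is routine, and one may check the endpoints for consistency, since disjoint tuples ($c=d=0$) contribute $\sigma_{0,0}^{2}=\var\{\E(\psi)\}=0$ while identical tuples contribute $\sigma_{m_{1},m_{2}}^{2}=\var(\psi)$.
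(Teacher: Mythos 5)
Your argument is correct and is the classical one: the paper itself states this result without proof, importing it verbatim from \citet{lee1990u}, and your derivation (expand the variance over pairs of index sets, reduce each covariance to $\sigma_{c,d}^{2}$ via conditional independence given the shared arguments, then count the pairs with overlap $(c,d)$) is exactly the standard Hoeffding-type computation given there. The conditional-independence step and the combinatorial count are both sound, and the endpoint checks ($\sigma_{0,0}^{2}=0$, $\sigma_{m_{1},m_{2}}^{2}=\var(\psi)$) confirm the bookkeeping.
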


\begin{definition}[Section 2.2, \citet{lee2019u}]\label{defa3}
    Let $F_{z}$ denote the distribution function of a single point mass at $z$. For $c=0,\ldots,m_{1}$ and $d=0,\ldots,m_{2}$, define
    \begin{align*}
        &\quad\phi^{(c,d)}(z_{1}^{(1)},\ldots,z_{c}^{(1)};z_{1}^{(2)},\ldots,z_{d}^{(2)})\\
        &=\int\ldots\int\psi(u_{1},\ldots,u_{m_{1}};v_{1},\ldots,v_{m_{2}})\prod_{i=1}^{c}\{dF_{z_{i}^{(1)}}(u_{i})-dP^{(1)}(u_{i})\}\prod_{i=c+1}^{m_{1}}dP^{(1)}(u_{i})\\
        &\quad\times\prod_{j=1}^{d}\{dF_{z_{j}^{(2)}}(v_{j})-dP^{(2)}(v_{j})\}\prod_{j=d+1}^{m_{2}}dP^{(2)}(v_{j}).
    \end{align*}
\end{definition}

\begin{theorem}[Hoeffding decomposition, Theorem 3 in Section 2.2, \citet{lee2019u}]\label{thma2}
    The generalized U-statistic $U_{n_{1},n_{2}}$ in Definition \ref{defa1} admits the representation
    \[
    U_{n_{1},n_{2}}=\sum_{c=0}^{m_{1}}\sum_{d=0}^{m_{2}}\binom{m_{1}}{c}\binom{m_{2}}{d}H_{n_{1},n_{2}}^{(c,d)},
    \]
    where $H_{n_{1},n_{2}}^{(c,d)}$ is the generalized U-statistic based on $\phi^{(c,d)}$ in Definition \ref{defa3} and is given by
    \[
    H_{n_{1},n_{2}}^{(c,d)}=\binom{n_{1}}{c}^{-1}\binom{n_{2}}{d}^{-1}\sum_{(n_{1},c)}\sum_{(n_{2},d)}\phi^{(c,d)}(Z_{i_{1}}^{(1)},\ldots,Z_{i_{c}}^{(1)};Z_{j_{1}}^{(2)},\ldots,Z_{j_{d}}^{(2)}). 
    \]
    Moreover, the functions $\phi^{(c,d)}$ satisfy

    (i) $\E\{\phi^{(c,d)}(Z_{1}^{(1)},\ldots,Z_{c}^{(1)};Z_{1}^{(2)},\ldots,Z_{d}^{(2)})\}=0$;

    (ii) $\cov\{\phi^{(c,d)}(S_{1};S_{2}),\phi^{(c',d')}(S_{1}';S_{2}')\}=0$ for all integers $c,d,c',d'$ and sets $S_{1},S_{2},S_{1}',S_{2}'$ unless $c=c'$, $d=d'$, $S_{1}=S_{1}'$ and $S_{2}=S_{2}'$. 

    The generalized U-statistics $H_{n_{1},n_{2}}^{(c,d)}$ are thus all uncorrelated. Their variances are given by
    \[
    \var(H_{n_{1},n_{2}}^{(c,d)})=\binom{n_{1}}{c}^{-1}\binom{n_{2}}{d}^{-1}\delta_{c,d}^{2},
    \]
    where $\delta_{c,d}^{2}=\var\{\phi^{(c,d)}(Z_{1}^{(1)},\ldots,Z_{c}^{(1)};Z_{1}^{(2)},\ldots,Z_{d}^{(2)})\}$. 
\end{theorem}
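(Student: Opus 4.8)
The plan is to exploit the Hoeffding decomposition of the two-sample U-statistic $\I_{n_1,n_2}$ provided by Theorem \ref{thma2} (with degree $(m_1,m_2)=(2,2)$) and to show that, after centering by $\E(\I_{n_1,n_2})$ and rescaling by $\sqrt{n_1+n_2}$, only the two linear (H\'ajek-projection) terms survive in the limit, each of which is a sum of \iid\ summands amenable to a central limit theorem.

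First I would write
\[
\I_{n_1,n_2}-\E(\I_{n_1,n_2})=\sum_{(c,d)\neq(0,0)}\binom{2}{c}\binom{2}{d}H_{n_1,n_2}^{(c,d)},
\]
noting that the $(0,0)$ term equals $\E(\I_{n_1,n_2})$ and is removed by centering. Since the summands are uncorrelated and $\var(H_{n_1,n_2}^{(c,d)})=\binom{n_1}{c}^{-1}\binom{n_2}{d}^{-1}\delta_{c,d}^{2}$ by Theorem \ref{thma2}, the pairs $(1,0)$ and $(0,1)$ contribute at orders $n_1^{-1}\delta_{1,0}^2$ and $n_2^{-1}\delta_{0,1}^2$, whereas every pair with $c+d\ge 2$ carries an extra $\binom{n_k}{\cdot}^{-1}$ factor in at least one sample. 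The essential complication, absent from the classical fixed-kernel theory, is that $\psi=\psi_{n_1,n_2}$ forms a triangular array whose smoothing factors $K^{(k)}$ carry $\prod_{r}(h_r^{(k)})^{-1}$ normalizations, so each $\delta_{c,d}^2$ itself varies with $n$ through the bandwidths. I would track these scalings with the same localization estimates used in the proof of Theorem \ref{thm2}: integrating a product of smoothing kernels against the densities $f^{(k)}$ cancels the $h^{-1}$ factors, leaving $\delta_{1,0}^2,\delta_{0,1}^2=O(1)$, whereas for each pair with $c+d\ge 2$ at least one $\prod_r h_r^{-1}$ factor fails to cancel, so that $n^{-1}\delta_{c,d}^2$ is dominated by $(n_k\prod_r h_r^{(k)})^{-1}\to 0$ under Assumption \ref{assum_den2}. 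Multiplying the decomposition by $\sqrt{n_1+n_2}$ and applying Chebyshev thus yields
\[
\sqrt{n_1+n_2}\{\I_{n_1,n_2}-\E(\I_{n_1,n_2})\}=2\sqrt{n_1+n_2}\,H_{n_1,n_2}^{(1,0)}+2\sqrt{n_1+n_2}\,H_{n_1,n_2}^{(0,1)}+o_p(1).
\]

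Next I would treat the two surviving linear terms. By Definition \ref{defa3}, $H_{n_1,n_2}^{(1,0)}=n_1^{-1}\sum_{i=1}^{n_1}\psi^{(1,0)}(Z_i^{(1)})$ and $H_{n_1,n_2}^{(0,1)}=n_2^{-1}\sum_{l=1}^{n_2}\psi^{(0,1)}(Z_l^{(2)})$, where $\psi^{(1,0)}(z)=\E\{\psi\mid Z_1^{(1)}=z\}-\E(\psi)$, so that $\delta_{1,0}^2=\var[\E\{\psi(Z_1^{(1)},Z_2^{(1)};Z_1^{(2)},Z_2^{(2)})\mid Z_1^{(1)}\}]$ as in the statement, and analogously for $\delta_{0,1}^2$. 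Since the two samples are independent, these two averages are independent, so it suffices to establish a CLT for each and add the limiting variances. As the summands form a triangular array through the bandwidths, I would apply a Lindeberg--Feller central limit theorem: using the localization estimates from the proof of Theorem \ref{thm2} together with dominated convergence and $\E\|Y^{(k)}\|^2<\infty$, the summand $\psi^{(1,0)}(Z_1^{(1)})$ converges in $L^2$ to a fixed square-integrable limit, so the squared summands are uniformly integrable and the Lindeberg condition holds. With $n_1/(n_1+n_2)\to\kappa$ from Assumption \ref{assum_size}, the variance of $2\sqrt{n_1+n_2}\,H_{n_1,n_2}^{(1,0)}$ is $4(n_1+n_2)n_1^{-1}\delta_{1,0}^2\to 4\kappa^{-1}\delta_{1,0}^2$, and analogously the second term has limiting variance $4(1-\kappa)^{-1}\delta_{0,1}^2$; by independence the sum converges to $N(0,4\kappa^{-1}\delta_{1,0}^2+4(1-\kappa)^{-1}\delta_{0,1}^2)$.

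The main obstacle is the bandwidth dependence of the kernel. In the classical theory the projection variances are fixed constants and the negligibility of higher-order terms is immediate; here I must instead verify that (i) $\delta_{1,0}^2$ and $\delta_{0,1}^2$ remain bounded away from $0$ and $\infty$, which is the non-degeneracy noted in the corresponding remark and which persists even under $H_0$ in contrast to the unconditional energy distance; (ii) the bandwidth-induced inflation of $\delta_{c,d}^2$ for $c+d\ge 2$ is out-paced by the combinatorial factors under Assumption \ref{assum_den2}; and (iii) the Lindeberg condition holds uniformly along the bandwidth sequence. Step (ii), the bookkeeping of how many $\prod_r h_r^{-1}$ factors survive each partial projection, is the delicate part, and it reuses the localization and Taylor-expansion arguments already developed for the bias and consistency analysis in the proof of Theorem \ref{thm2}.
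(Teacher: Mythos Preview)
Your proposal is not a proof of the quoted statement, Theorem~\ref{thma2} (the Hoeffding decomposition), which the paper simply cites from \citet{lee1990u} without proof; rather, it is a proof of Theorem~\ref{thm3} (the asymptotic normality of $\I_{n_1,n_2}$), which \emph{uses} Theorem~\ref{thma2} as a tool.

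Assuming the intended target is Theorem~\ref{thm3}, your approach is essentially the same as the paper's. Both arguments apply the Hoeffding decomposition from Theorem~\ref{thma2} with $(m_1,m_2)=(2,2)$, isolate the two linear terms $2H_{n_1,n_2}^{(1,0)}+2H_{n_1,n_2}^{(0,1)}$ as the leading part, reuse the change-of-variables estimates from the proof of Theorem~\ref{thm2} to obtain $\delta_{1,0}^2,\delta_{0,1}^2=O(1)$ and to track the bandwidth inflation in the higher-order $\delta_{c,d}^2$, and then show under Assumption~\ref{assum_den2} that $\sqrt{n_1+n_2}\,R_{n_1,n_2}=o_p(1)$. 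You are somewhat more explicit than the paper about the triangular-array nature of the linear summands and the Lindeberg--Feller verification (the paper simply asserts the CLT for $\sqrt{n_1+n_2}(H_{n_1,n_2}^{(1,0)}+H_{n_1,n_2}^{(0,1)})$), but the overall structure and the key bookkeeping of surviving $\prod_r h_r^{-1}$ factors are identical.
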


\section{Technical details}\label{app:c}
For $l=1,2$ and $i=1,\ldots,n_{l}$, denote $Z_{i}^{(l)}=(Y_{i}^{(l)},X_{i}^{(l)})$. 
\begin{proof}[Proof of Theorem \ref{thm1}]
    The facts 1 and 2 directly follow from the definitions of semimetric of strong negative type and characteristic kernel. 
    
    Suppose $k$ generates $\rho$, i.e., $\rho(y,y')=\frac{1}{2}\{k(y,y)+k(y',y')\}-k(y,y')$. Then
    \begin{align*}
        \D_{\rho}(x)&=-\int_{\Y}\int_{\Y}\rho(y,y')d(P_{Y\mid X=x}^{(1)}-P_{Y\mid X=x}^{(2)})(y)d(P_{Y\mid X=x}^{(1)}-P_{Y\mid X=x}^{(2)})(y')\\
        &=\int_{\Y}\int_{\Y}\bigg[k(y,y')-\frac{1}{2}\{k(y,y)+k(y',y')\}\bigg]d(P_{Y\mid X=x}^{(1)}-P_{Y\mid X=x}^{(2)})(y)d(P_{Y\mid X=x}^{(1)}-P_{Y\mid X=x}^{(2)})(y')\\
        &=\int_{\Y}\int_{\Y}k(y,y')d(P_{Y\mid X=x}^{(1)}-P_{Y\mid X=x}^{(2)})(y)d(P_{Y\mid X=x}^{(1)}-P_{Y\mid X=x}^{(2)})(y')\\
        &=\gamma_{k}^{2}(x),
    \end{align*}
    where we have used the fact that $\int_{\Y}d(P_{Y\mid X=x}^{(1)}-P_{Y\mid X=x}^{(2)})(y)=0$. 
\end{proof}

\begin{proof}[Proof of Theorem \ref{thm2}]
    Define the generalized U-statistic
    \[
    U_{n_{1},n_{2}}=\binom{n_{1}}{2}^{-1}\binom{n_{2}}{2}^{-1}\sum_{1\le i_{1}<i_{2}\le n_{1}}\sum_{1\le j_{1}<j_{2}\le n_{2}}\psi(Z_{i_{1}}^{(1)},Z_{i_{2}}^{(1)};Z_{j_{1}}^{(2)},Z_{j_{2}}^{(2)}),
    \]
    where
    \begin{align*}
        &\quad\psi(Z_{1}^{(1)},Z_{2}^{(1)};Z_{1}^{(2)},Z_{2}^{(2)})\\
        &=\bigg[\frac{1}{2}\Big\{\rho(Y_{1}^{(1)},Y_{1}^{(2)})+\rho(Y_{1}^{(1)},Y_{2}^{(2)})+\rho(Y_{2}^{(1)},Y_{1}^{(2)})+\rho(Y_{2}^{(1)},Y_{2}^{(2)})\Big\}-\rho(Y_{1}^{(1)},Y_{2}^{(1)})-\rho(Y_{1}^{(2)},Y_{2}^{(2)})\\
        &\quad-\D_{\rho}(x)\bigg]G_{h_{1}}(X_{1}^{(1)}-x)G_{h_{1}}(X_{2}^{(1)}-x)G_{h_{2}}(X_{1}^{(2)}-x)G_{h_{2}}(X_{2}^{(2)}-x).
    \end{align*}
    Under Assumptions \ref{assum1}-\ref{assum3}, we have
    \begin{align*}
        &\quad\widehat{\D}_{\rho}(x)-\D_{\rho}(x)\\
        &=\bigg\{\frac{1}{n_{1}(n_{1}-1)}\sum_{i_{1}\neq i_{2}}G_{h_{1}}(X_{i_{1}}^{(1)}-x)G_{h_{1}}(X_{i_{2}}^{(1)}-x)\frac{1}{n_{2}(n_{2}-1)}\sum_{j_{1}\neq j_{2}}G_{h_{2}}(X_{j_{1}}^{(2)}-x)G_{h_{2}}(X_{j_{2}}^{(2)}-x)\bigg\}^{-1}\\
        &\quad\times U_{n_{1},n_{2}}\\
        &=\big[\{f_{1}(x)f_{2}(x)\}^{2}+o_{p}(1)\big]^{-1}U_{n_{1},n_{2}}.
    \end{align*}
    It suffices to show that
    \begin{equation}\label{thm2_pf}
        \E(U_{n_{1},n_{2}})=o(1),\quad\var(U_{n_{1},n_{2}})=o(1).
    \end{equation}
    
    For the first part of (\ref{thm2_pf}), we have
    \begin{align*}
        \E(U_{n_{1},n_{2}})&=2\E\{\rho(Y_{1}^{(1)},Y_{1}^{(2)})G_{h_{1}}(X_{1}^{(1)}-x)G_{h_{1}}(X_{2}^{(1)}-x)G_{h_{2}}(X_{1}^{(2)}-x)G_{h_{2}}(X_{2}^{(2)}-x)\}\\
        &\quad-\E\{\rho(Y_{1}^{(1)},Y_{2}^{(1)})G_{h_{1}}(X_{1}^{(1)}-x)G_{h_{1}}(X_{2}^{(1)}-x)G_{h_{2}}(X_{1}^{(2)}-x)G_{h_{2}}(X_{2}^{(2)}-x)\}\\
        &\quad-\E\{\rho(Y_{1}^{(2)},Y_{2}^{(2)})G_{h_{1}}(X_{1}^{(1)}-x)G_{h_{1}}(X_{2}^{(1)}-x)G_{h_{2}}(X_{1}^{(2)}-x)G_{h_{2}}(X_{2}^{(2)}-x)\}\\
        &\quad-\D_{\rho}(x)\E\{G_{h_{1}}(X_{1}^{(1)}-x)G_{h_{1}}(X_{2}^{(1)}-x)G_{h_{2}}(X_{1}^{(2)}-x)G_{h_{2}}(X_{2}^{(2)}-x)\}.
    \end{align*}
    We first consider the term
    \begin{align*}
        &\quad\E\{\rho(Y_{1}^{(1)},Y_{1}^{(2)})G_{h_{1}}(X_{1}^{(1)}-x)G_{h_{1}}(X_{2}^{(1)}-x)G_{h_{2}}(X_{1}^{(2)}-x)G_{h_{2}}(X_{2}^{(2)}-x)\}\\
        &=\int\rho(y_{1}^{(1)},y_{1}^{(2)})G_{h_{1}}(x_{1}^{(1)}-x)G_{h_{1}}(x_{2}^{(1)}-x)G_{h_{2}}(x_{1}^{(2)}-x)G_{h_{2}}(x_{2}^{(2)}-x)\\
        &\quad\times f_{1}(y_{1}^{(1)},x_{1}^{(1)})f_{2}(y_{1}^{(2)},x_{1}^{(2)})f_{1}(x_{2}^{(1)})f_{2}(x_{2}^{(2)})dy_{1}^{(1)}dy_{1}^{(2)}dx_{1}^{(1)}dx_{2}^{(1)}dx_{1}^{(2)}dx_{2}^{(2)}.
    \end{align*}
    Let $t=h_{1}^{-1}(x_{1}^{(1)}-x)$, $u=h_{1}^{-1}(x_{2}^{(1)}-x)$, $v=h_{2}^{-1}(x_{1}^{(2)}-x)$ and $w=h_{2}^{-1}(x_{2}^{(2)}-x)$. We have
    \begin{align*}
        &\quad\E\{\rho(Y_{1}^{(1)},Y_{1}^{(2)})G_{h_{1}}(X_{1}^{(1)}-x)G_{h_{1}}(X_{2}^{(1)}-x)G_{h_{2}}(X_{1}^{(2)}-x)G_{h_{2}}(X_{2}^{(2)}-x)\}\\
        &\overset{(i)}{=}\int\rho(y_{1}^{(1)},y_{1}^{(2)})\prod_{s=1}^{p}\{g(t(s))g(u(s))g(v(s))g(w(s))\}\\
        &\quad\times f_{1}(y_{1}^{(1)},x+h_{1}t)f_{2}(y_{1}^{(2)},x+h_{2}v)f_{1}(x+h_{1}u)f_{2}(x+h_{2}w)dy_{1}^{(1)}dy_{1}^{(2)}dtdudvdw\\
        &=\int\E\{\rho(Y^{(1)},Y^{(2)})\mid X^{(1)}=x+h_{1}t,X^{(2)}=x+h_{2}v\}\prod_{s=1}^{p}\{g(t(s))g(u(s))g(v(s))g(w(s))\}\\
        &\quad\times f_{1}(x+h_{1}t)f_{1}(x+h_{1}u)f_{2}(x+h_{2}v)f_{2}(x+h_{2}w)dtdudvdw\\
        &\overset{(ii)}{=}\int\E\{\rho(Y^{(1)},Y^{(2)})\mid X^{(1)}=x,X^{(2)}=x\}\prod_{s=1}^{p}\{g(t(s))g(u(s))g(v(s))g(w(s))\}\\
        &\quad\times \{f_{1}(x)f_{2}(x)\}^{2}dtdudvdw+O(h_{1}^{\nu}+h_{2}^{\nu})\\
        &=\E\{\rho(Y^{(1)},Y^{(2)})\mid X^{(1)}=x,X^{(2)}=x\}\{f_{1}(x)f_{2}(x)\}^{2}+O(h_{1}^{\nu}+h_{2}^{\nu}),
    \end{align*}
    where $(i)$ follows from change of variables, and $(ii)$ holds by Taylor's theorem and Assumptions \ref{assum1} and \ref{assum3}-\ref{assum4}. Similarly, one can verify that
    \begin{align*}
        &\quad\E\{\rho(Y_{1}^{(1)},Y_{2}^{(1)})G_{h_{1}}(X_{1}^{(1)}-x)G_{h_{1}}(X_{2}^{(1)}-x)G_{h_{2}}(X_{1}^{(2)}-x)G_{h_{2}}(X_{2}^{(2)}-x)\}\\
        &=\E\{\rho(Y^{(1)},Y^{(1)\prime})\mid X^{(1)}=x,X^{(1)\prime}=x\}\{f_{1}(x)f_{2}(x)\}^{2}+O(h_{1}^{\nu}+h_{2}^{\nu}),\\
        &\quad\E\{\rho(Y_{1}^{(2)},Y_{2}^{(2)})G_{h_{1}}(X_{1}^{(1)}-x)G_{h_{1}}(X_{2}^{(1)}-x)G_{h_{2}}(X_{1}^{(2)}-x)G_{h_{2}}(X_{2}^{(2)}-x)\}\\
        &=\E\{\rho(Y^{(2)},Y^{(2)\prime})\mid X^{(2)}=x,X^{(2)\prime}=x\}\{f_{1}(x)f_{2}(x)\}^{2}+O(h_{1}^{\nu}+h_{2}^{\nu}),\\
        &\quad\E\{G_{h_{1}}(X_{1}^{(1)}-x)G_{h_{1}}(X_{2}^{(1)}-x)G_{h_{2}}(X_{1}^{(2)}-x)G_{h_{2}}(X_{2}^{(2)}-x)\}\\
        &=\{f_{1}(x)f_{2}(x)\}^{2}+O(h_{1}^{\nu}+h_{2}^{\nu}). 
    \end{align*}
    Thus, by Assumption \ref{assum2},
    \[
    \E(U_{n_{1},n_{2}})=O(h_{1}^{\nu}+h_{2}^{\nu})=o(1).
    \]

    For the second part of (\ref{thm2_pf}), following Definition \ref{defa2} and Theorem \ref{thma1} ($m_{1}=m_{2}=2$), we have
    \[
    \var(U_{n_{1},n_{2}})=\binom{n_{1}}{2}^{-1}\binom{n_{2}}{2}^{-1}\sum_{c=0}^{2}\sum_{d=0}^{2}\binom{2}{c}\binom{2}{d}\binom{n_{1}-2}{2-c}\binom{n_{2}-2}{2-d}\sigma_{c,d}^{2}.
    \]
    We first calculate
    \[
    \sigma_{2,2}^{2}=\var\{\psi(Z_{1}^{(1)},Z_{2}^{(1)};Z_{1}^{(2)},Z_{2}^{(2)})\}=\E\{\psi^{2}(Z_{1}^{(1)},Z_{2}^{(1)};Z_{1}^{(2)},Z_{2}^{(2)})\}-\{\E(U_{n_{1},n_{2}})\}^{2}.
    \]
    As for $\E\{\psi^{2}(Z_{1}^{(1)},Z_{2}^{(1)};Z_{1}^{(2)},Z_{2}^{(2)})\}$, it can be expanded into several terms, and each of these terms can be shown to be of order $(h_{1}h_{2})^{-2p}$. We only present the proof for the term below. Derivations for the other terms are similar. Note that
    \begin{align*}
        &\quad\E\Big[\Big\{\rho(Y_{1}^{(1)},Y_{1}^{(2)})G_{h_{1}}(X_{1}^{(1)}-x)G_{h_{1}}(X_{2}^{(1)}-x)G_{h_{2}}(X_{1}^{(2)}-x)G_{h_{2}}(X_{2}^{(2)}-x)\Big\}^{2}\Big]\\
        &=\int\left\{\rho(y_{1}^{(1)},y_{1}^{(2)})G_{h_{1}}(x_{1}^{(1)}-x)G_{h_{1}}(x_{2}^{(1)}-x)G_{h_{2}}(x_{1}^{(2)}-x)G_{h_{2}}(x_{2}^{(2)}-x)\right\}^{2}\\
        &\quad\times f_{1}(y_{1}^{(1)},x_{1}^{(1)})f_{2}(y_{1}^{(2)},x_{1}^{(2)})f_{1}(x_{2}^{(1)})f_{2}(x_{2}^{(2)})dy_{1}^{(1)}dy_{1}^{(2)}dx_{1}^{(1)}dx_{2}^{(1)}dx_{1}^{(2)}dx_{2}^{(2)}\\
        &\overset{(i)}{=}(h_{1}h_{2})^{-2p}\int\bigg[\rho(y_{1}^{(1)},y_{1}^{(2)})\prod_{s=1}^{p}\{g(t(s))g(u(s))g(v(s))g(w(s))\}\bigg]^{2}\\
        &\quad\times f_{1}(y_{1}^{(1)},x+h_{1}t)f_{2}(y_{1}^{(2)},x+h_{2}v)f_{1}(x+h_{1}u)f_{2}(x+h_{2}w)dy_{1}^{(1)}dy_{1}^{(2)}dtdudvdw\\
        &\asymp(h_{1}h_{2})^{-2p},
    \end{align*}
    where $(i)$ follows from the same change of variables technique as in the proof of the first part of (\ref{thm2_pf}) above. Hence, $\sigma_{2,2}^{2}=O((h_{1}h_{2})^{-2p})$. 
    
    Analogously, using change of variables, one can verify that $\sigma_{1,0}^{2}=O(h_{1}^{-p})$, $\sigma_{0,1}^{2}=O(h_{2}^{-p})$, $\sigma_{2,0}^{2}=O(h_{1}^{-2p})$, $\sigma_{1,1}^{2}=O((h_{1}h_{2})^{-p})$, $\sigma_{0,2}^{2}=O(h_{2}^{-2p})$, $\sigma_{2,1}^{2}=O((h_{1}^{2}h_{2})^{-p})$ and $\sigma_{1,2}^{2}=O((h_{1}h_{2}^{2})^{-p})$. Therefore, by Assumption \ref{assum2}, 
    \[
    \var(U_{n_{1},n_{2}})=\sum_{c=0}^{2}\sum_{d=0}^{2}O(n_{1}^{-c}n_{2}^{-d})\sigma_{c,d}^{2}=O((n_{1}h_{1}^{p})^{-1}+(n_{2}h_{2}^{p})^{-1})=o(1).
    \]
    This completes the proof. 
\end{proof}

\begin{proof}[Proof of Theorem \ref{thm3}]
    Recall from the proof of Theorem \ref{thm2} that
    \[
    \widehat{\D}_{\rho}(x)-\D_{\rho}(x)=\big[\{f_{1}(x)f_{2}(x)\}^{2}+o_{p}(1)\big]^{-1}U_{n_{1},n_{2}}.
    \]
    Following Definition \ref{defa3} and Theorem \ref{thma2}, we have the Hoeffding decomposition for the U-statistic $U_{n_{1},n_{2}}$:
    \[
    U_{n_{1},n_{2}}=\sum_{c=0}^{2}\sum_{d=0}^{2}\binom{2}{c}\binom{2}{d}H_{n_{1},n_{2}}^{(c,d)}=\E(U_{n_{1},n_{2}})+2H_{n_{1},n_{2}}^{(1,0)}+2H_{n_{1},n_{2}}^{(0,1)}+R_{n_{1},n_{2}},
    \]
    where 
    \[
    H_{n_{1},n_{2}}^{(1,0)}=\frac{1}{n_{1}}\sum_{i=1}^{n_{1}}\phi^{(1,0)}(Z_{i}^{(1)}),\quad H_{n_{1},n_{2}}^{(0,1)}=\frac{1}{n_{2}}\sum_{j=1}^{n_{2}}\phi^{(0,1)}(Z_{j}^{(2)}),
    \]
    and $R_{n_{1},n_{2}}$ is the remainder term. Furthermore, we give the explicit expression of $\phi^{(1,0)}(z_{1}^{(1)})$ (the expression of $\phi^{(0,1)}(z_{1}^{(2)})$ is similar and omitted): 
    \begin{align}
        \begin{split}\label{eq:hoeff1}
            \phi^{(1,0)}(z_{1}^{(1)})&=\E\Big\{\rho(y_{1}^{(1)},Y_{1}^{(2)})G_{h_{1}}(x_{1}^{(1)}-x)G_{h_{1}}(X_{2}^{(1)}-x)G_{h_{2}}(X_{1}^{(2)}-x)G_{h_{2}}(X_{2}^{(2)}-x)\Big\}\\
            &\quad+\E\Big\{\rho(Y_{2}^{(1)},Y_{1}^{(2)})G_{h_{1}}(x_{1}^{(1)}-x)G_{h_{1}}(X_{2}^{(1)}-x)G_{h_{2}}(X_{1}^{(2)}-x)G_{h_{2}}(X_{2}^{(2)}-x)\Big\}\\
            &\quad-\E\Big\{\rho(y_{1}^{(1)},Y_{2}^{(1)})G_{h_{1}}(x_{1}^{(1)}-x)G_{h_{1}}(X_{2}^{(1)}-x)G_{h_{2}}(X_{1}^{(2)}-x)G_{h_{2}}(X_{2}^{(2)}-x)\Big\}\\
            &\quad-\E\Big\{\rho(Y_{1}^{(2)},Y_{2}^{(2)})G_{h_{1}}(x_{1}^{(1)}-x)G_{h_{1}}(X_{2}^{(1)}-x)G_{h_{2}}(X_{1}^{(2)}-x)G_{h_{2}}(X_{2}^{(2)}-x)\Big\}\\
            &\quad-\D_{\rho}(x)\E\Big\{G_{h_{1}}(x_{1}^{(1)}-x)G_{h_{1}}(X_{2}^{(1)}-x)G_{h_{2}}(X_{1}^{(2)}-x)G_{h_{2}}(X_{2}^{(2)}-x)\Big\}\\
            &\quad-\E(U_{n_{1},n_{2}}).
        \end{split}
    \end{align}

    By similar calculations as in the proof of Theorem \ref{thm2}, we have
    \begin{align}
        \xi_{1}^{2}&\coloneqq\var\{\phi^{(1,0)}(Z_{1}^{(1)})\}=\sigma_{1,0}^{2}\asymp h_{1}^{-p}, \label{eq-xi1}\\
        \xi_{2}^{2}&\coloneqq\var\{\phi^{(0,1)}(Z_{1}^{(2)})\}=\sigma_{0,1}^{2}\asymp h_{2}^{-p}, \label{eq-xi2}
    \end{align}
    under $H_{a}$ in (\ref{eq:hypo2}). Applying Lyapunov CLT as in Section 3.4 of \citet{ullah1999nonparametric},
    \[
    \frac{H_{n_{1},n_{2}}^{(1,0)}+H_{n_{1},n_{2}}^{(0,1)}}{\sqrt{\frac{\xi_{1}^{2}}{n_{1}}+\frac{\xi_{2}^{2}}{n_{2}}}}\convd\N(0,1).
    \]
    The condition that $\int|g(u)|^{2+\delta}du<\infty$ for some $\delta>0$ used in \citet{ullah1999nonparametric} is implied by our Assumption \ref{assum1} that $g$ is bounded and $\int g^{2}(u)du<\infty$.
    
    Also, one can show that
    \begin{align*}
        \var(H_{n_{1},n_{2}}^{(2,0)})&\asymp n_{1}^{-2}\delta_{2,0}^{2}\asymp n_{1}^{-2}\sigma_{2,0}^{2}\asymp (n_{1}h_{1}^{p})^{-2},\\
        \var(H_{n_{1},n_{2}}^{(1,1)})&\asymp (n_{1}n_{2})^{-1}\delta_{1,1}^{2}\asymp (n_{1}n_{2})^{-1}\sigma_{1,1}^{2}\asymp (n_{1}h_{1}^{p})^{-1}(n_{2}h_{2}^{p})^{-1},\\
        \var(H_{n_{1},n_{2}}^{(0,2)})&\asymp n_{2}^{-2}\delta_{0,2}^{2}\asymp n_{2}^{-2}\sigma_{0,2}^{2}\asymp (n_{2}h_{2}^{p})^{-2},\\
        \var(H_{n_{1},n_{2}}^{(2,1)})&\asymp n_{1}^{-2}n_{2}^{-1}\delta_{2,1}^{2}\asymp n_{1}^{-2}n_{2}^{-1}\sigma_{2,1}^{2}\asymp (n_{1}h_{1}^{p})^{-2}(n_{2}h_{2}^{p})^{-1},\\
        \var(H_{n_{1},n_{2}}^{(1,2)})&\asymp n_{1}^{-1}n_{2}^{-2}\delta_{1,2}^{2}\asymp n_{1}^{-1}n_{2}^{-2}\sigma_{1,2}^{2}\asymp (n_{1}h_{1}^{p})^{-1}(n_{2}h_{2}^{p})^{-2},\\
        \var(H_{n_{1},n_{2}}^{(2,2)})&\asymp (n_{1}n_{2})^{-2}\delta_{2,2}^{2}\asymp (n_{1}n_{2})^{-2}\sigma_{2,2}^{2}\asymp (n_{1}h_{1}^{p})^{-2}(n_{2}h_{2}^{p})^{-2}.
    \end{align*}
    Thus, 
    \[
    \frac{\var(R_{n_{1},n_{2}})}{\frac{\xi_{1}^{2}}{n_{1}}+\frac{\xi_{2}^{2}}{n_{2}}}=O((n_{1}h_{1}^{p})^{-1}+(n_{2}h_{2}^{p})^{-1}).
    \]
    
    Now we have
    \begin{align*}
        \frac{U_{n_{1},n_{2}}}{\sqrt{\frac{\xi_{1}^{2}}{n_{1}}+\frac{\xi_{2}^{2}}{n_{2}}}}&=\frac{2H_{n_{1},n_{2}}^{(1,0)}+2H_{n_{1},n_{2}}^{(0,1)}}{\sqrt{\frac{\xi_{1}^{2}}{n_{1}}+\frac{\xi_{2}^{2}}{n_{2}}}}+\frac{\E(U_{n_{1},n_{2}})}{\sqrt{\frac{\xi_{1}^{2}}{n_{1}}+\frac{\xi_{2}^{2}}{n_{2}}}}+\frac{R_{n_{1},n_{2}}}{\sqrt{\frac{\xi_{1}^{2}}{n_{1}}+\frac{\xi_{2}^{2}}{n_{2}}}}\\
        &=\frac{2H_{n_{1},n_{2}}^{(1,0)}+2H_{n_{1},n_{2}}^{(0,1)}}{\sqrt{\frac{\xi_{1}^{2}}{n_{1}}+\frac{\xi_{2}^{2}}{n_{2}}}}+O((n_{1}h_{1}^{p})^{1/2}h_{1}^{\nu}+(n_{2}h_{2}^{p})^{1/2}h_{2}^{\nu})+O_{p}((n_{1}h_{1}^{p})^{-1/2}+(n_{2}h_{2}^{p})^{-1/2})\\
        &=\frac{2H_{n_{1},n_{2}}^{(1,0)}+2H_{n_{1},n_{2}}^{(0,1)}}{\sqrt{\frac{\xi_{1}^{2}}{n_{1}}+\frac{\xi_{2}^{2}}{n_{2}}}}+o_{p}(1)\convd\N(0,4),
    \end{align*}
    where we have used Assumption \ref{assum2} and $n_{l}^{1/2}h_{l}^{p/2+\nu}\rightarrow 0$ as $n_{l}\rightarrow\infty$ for $l=1,2$. 
    Then
    \[
    \frac{\widehat{\D}_{\rho}(x)-\D_{\rho}(x)}{\sqrt{\frac{\xi_{1}^{2}}{n_{1}}+\frac{\xi_{2}^{2}}{n_{2}}}}=\frac{U_{n_{1},n_{2}}}{\big[\{f_{1}(x)f_{2}(x)\}^{2}+o_{p}(1)\big]\sqrt{\frac{\xi_{1}^{2}}{n_{1}}+\frac{\xi_{2}^{2}}{n_{2}}}}\convd\N(0,4\{f_{1}(x)f_{2}(x)\}^{-4}).
    \]
\end{proof}

\begin{proof}[Proof of Theorem \ref{thm4}]
    As $\D_{\rho}(x)=0$ under $H_{0}$ in (\ref{eq:hypo2}), following the proof of Theorems \ref{thm2}-\ref{thm3}, we have
    \[
    \widehat{\D}_{\rho}(x)=\big[\{f_{1}(x)f_{2}(x)\}^{2}+o_{p}(1)\big]^{-1}U_{n_{1},n_{2}},
    \]
    and the Hoeffding decomposition for $U_{n_{1},n_{2}}$:
    \[
    U_{n_{1},n_{2}}=\E(U_{n_{1},n_{2}})+2H_{n_{1},n_{2}}^{(1,0)}+2H_{n_{1},n_{2}}^{(0,1)}+H_{n_{1},n_{2}}^{(2,0)}+4H_{n_{1},n_{2}}^{(1,1)}+H_{n_{1},n_{2}}^{(0,2)}+R_{n_{1},n_{2}},
    \]
    where
    \begin{align*}
        H_{n_{1},n_{2}}^{(2,0)}&=\binom{n_{1}}{2}^{-1}\sum_{1\le i_{1}<i_{2}\le n_{1}}\phi^{(2,0)}(Z_{i_{1}}^{(1)},Z_{i_{2}}^{(1)}),\\
        H_{n_{1},n_{2}}^{(1,1)}&=\frac{1}{n_{1}n_{2}}\sum_{i=1}^{n_{1}}\sum_{j=1}^{n_{2}}\phi^{(1,1)}(Z_{i}^{(1)};Z_{j}^{(2)}),\\
        H_{n_{1},n_{2}}^{(0,2)}&=\binom{n_{2}}{2}^{-1}\sum_{1\le j_{1}<j_{2}\le n_{2}}\phi^{(0,2)}(Z_{j_{1}}^{(2)},Z_{j_{2}}^{(2)}),
    \end{align*}
    and $R_{n_{1},n_{2}}$ is the remainder term, with a slight abuse of notation.
    
    Under $H_{0}$ in (\ref{eq:hypo2}), one can verify that $\phi^{(1,0)}(z_{1}^{(1)})\neq 0$ and $\phi^{(0,1)}(z_{1}^{(2)})\neq 0$, and thus the U-statistic $U_{n_{1},n_{2}}$ is nondegenerate. Nonetheless, a more in-depth analysis reveals that $H_{n_{1},n_{2}}^{(1,0)}$ and $H_{n_{1},n_{2}}^{(0,1)}$ can be asymptotically negligible under $H_{0}$ in (\ref{eq:hypo2}). Specifically, by similar calculations as in the proof of Theorem \ref{thm2}, 
    \begin{align*}
        &\quad\E\Big\{\rho(y_{1}^{(1)},Y_{1}^{(2)})G_{h_{1}}(x_{1}^{(1)}-x)G_{h_{1}}(X_{2}^{(1)}-x)G_{h_{2}}(X_{1}^{(2)}-x)G_{h_{2}}(X_{2}^{(2)}-x)\Big\}\\
        &=G_{h_{1}}(x_{1}^{(1)}-x)\int\rho(y_{1}^{(1)},y_{1}^{(2)})f_{2}(y_{1}^{(2)}\mid x)dy_{1}^{(2)}\times\big[f_{1}(x)\{f_{2}(x)\}^{2}+O(h_{1}^{\nu}+h_{2}^{\nu})\big],\\
        &\quad\E\Big\{\rho(Y_{2}^{(1)},Y_{1}^{(2)})G_{h_{1}}(x_{1}^{(1)}-x)G_{h_{1}}(X_{2}^{(1)}-x)G_{h_{2}}(X_{1}^{(2)}-x)G_{h_{2}}(X_{2}^{(2)}-x)\Big\}\\
        &=G_{h_{1}}(x_{1}^{(1)}-x)\bigg[\int\rho(y_{2}^{(1)},y_{1}^{(2)})f_{1}(y_{2}^{(1)}\mid x)f_{2}(y_{1}^{(2)}\mid x)dy_{2}^{(1)}dy_{1}^{(2)}\times f_{1}(x)\{f_{2}(x)\}^{2}+O(h_{1}^{\nu}+h_{2}^{\nu})\bigg],\\
        &\quad\E\Big\{\rho(y_{1}^{(1)},Y_{2}^{(1)})G_{h_{1}}(x_{1}^{(1)}-x)G_{h_{1}}(X_{2}^{(1)}-x)G_{h_{2}}(X_{1}^{(2)}-x)G_{h_{2}}(X_{2}^{(2)}-x)\Big\}\\
        &=G_{h_{1}}(x_{1}^{(1)}-x)\int\rho(y_{1}^{(1)},y_{2}^{(1)})f_{1}(y_{2}^{(1)}\mid x)dy_{2}^{(1)}\times\big[f_{1}(x)\{f_{2}(x)\}^{2}+O(h_{1}^{\nu}+h_{2}^{\nu})\big],\\
        &\quad\E\Big\{\rho(Y_{1}^{(2)},Y_{2}^{(2)})G_{h_{1}}(x_{1}^{(1)}-x)G_{h_{1}}(X_{2}^{(1)}-x)G_{h_{2}}(X_{1}^{(2)}-x)G_{h_{2}}(X_{2}^{(2)}-x)\Big\}\\
        &=G_{h_{1}}(x_{1}^{(1)}-x)\bigg[\int\rho(y_{1}^{(2)},y_{2}^{(2)})f_{2}(y_{1}^{(2)}\mid x)f_{2}(y_{2}^{(2)}\mid x)dy_{1}^{(2)}dy_{2}^{(2)}\times f_{1}(x)\{f_{2}(x)\}^{2}+O(h_{1}^{\nu}+h_{2}^{\nu})\bigg].           
    \end{align*}
    Under $H_{0}$ in (\ref{eq:hypo2}), we have $P_{Y\mid X=x}\coloneqq P_{Y\mid X=x}^{(1)}=P_{Y\mid X=x}^{(2)}$, and $f(y\mid x)\coloneqq f_{1}(y\mid x)=f_{2}(y\mid x)$ for all $y$ at the fixed $x$. Hence, for $\phi^{(1,0)}(z_{1}^{(1)})$ in (\ref{eq:hoeff1}), we have
    \[
    \phi^{(1,0)}(z_{1}^{(1)})=G_{h_{1}}(x_{1}^{(1)}-x)\bigg\{\int\rho(y_{1}^{(1)},y)f(y\mid x)dy+1\bigg\}\times O(h_{1}^{\nu}+h_{2}^{\nu})+O(h_{1}^{\nu}+h_{2}^{\nu}),
    \]
    under $H_{0}$ in (\ref{eq:hypo2}). It implies that
    \[
    \var\{\phi^{(1,0)}(Z_{1}^{(1)})\}=O(h_{1}^{-p}(h_{1}^{2\nu}+h_{2}^{2\nu})),
    \]
    and thus, 
    \[
    \var(H_{n_{1},n_{2}}^{(1,0)})=O(n_{1}^{-1}h_{1}^{-p}(h_{1}^{2\nu}+h_{2}^{2\nu})).
    \]
    Analogously, one can show that $\var(H_{n_{1},n_{2}}^{(0,1)})=O(n_{2}^{-1}h_{2}^{-p}(h_{1}^{2\nu}+h_{2}^{2\nu}))$ under $H_{0}$ in (\ref{eq:hypo2}). Following the proof of Theorem \ref{thm3}, we have $\var(H_{n_{1},n_{2}}^{(2,0)})\asymp (n_{1}h_{1}^{p})^{-2}$, $\var(H_{n_{1},n_{2}}^{(1,1)})\asymp (n_{1}h_{1}^{p})^{-1}(n_{2}h_{2}^{p})^{-1}$, and $\var(H_{n_{1},n_{2}}^{(0,2)})\asymp (n_{2}h_{2}^{p})^{-2}$. Hence, when undersmoothing is employed, the asymptotic distribution of $U_{n_{1},n_{2}}$ is determined by $H_{n_{1},n_{2}}^{(2,0)}+4H_{n_{1},n_{2}}^{(1,1)}+H_{n_{1},n_{2}}^{(0,2)}$, instead of $2H_{n_{1},n_{2}}^{(1,0)}+2H_{n_{1},n_{2}}^{(0,1)}$. 
    
    The proof below is essentially similar to that in Appendix B.1 of \citet{gretton2012kernel}. The main difference lies in that the spectral decomposition is performed with respect to the conditional distribution. Define the double centered version of $\rho$:
    \[
    \rhotilde(y,y')=\rho(y,y')-\E\{\rho(y,Y')\mid X'=x\}-\E\{\rho(Y,y')\mid X=x\}+\E\{\rho(Y,Y')\mid X=x,X'=x\},
    \]
    where $Y\mid X=x,Y'\mid X'=x\overset{\iid}{\sim}P_{Y\mid X=x}$. As $P_{Y\mid X=x}\in\M_{\rho}^{2+\delta}(\Y)$, the semimetric $\rhotilde$ is square integrable with respect to $P_{Y\mid X=x}$. Then $\rhotilde(y,y')$ admits a spectral decomposition
    \begin{equation}\label{eq:spd}
        \rhotilde(y,y')=\sum_{r=1}^{\infty}\lambda_{r}\phi_{r}(y)\phi_{r}(y'),
    \end{equation}
    where $\{\lambda_{r}\}_{r=1}^{\infty}$ are the eigenvalues and $\{\phi_{r}(\cdot)\}_{r=1}^{\infty}$ are the corresponding orthonormal eigenfunctions of $\rhotilde$ with respect to the conditional distribution $P_{Y\mid X=x}$, i.e., 
    \begin{align*}
        \E\{\rhotilde(y,Y')\phi_{r}(Y')\mid X'=x\}&=\lambda_{r}\phi_{r}(y),\\
        \E\{\phi_{r_{1}}(Y)\phi_{r_{2}}(Y)\mid X=x\}&=\left\{
            \begin{aligned}
                1,\quad\text{if }r_{1}=r_{2},\\
                0,\quad\text{if }r_{1}\neq r_{2}.
            \end{aligned}\right.
    \end{align*}
    
    We now find the asymptotic distribution of $H_{n_{1},n_{2}}^{(2,0)}+4H_{n_{1},n_{2}}^{(1,1)}+H_{n_{1},n_{2}}^{(0,2)}$. First, under $H_{0}$ in (\ref{eq:hypo2}),
    \begin{align*}
        \phi^{(1,1)}(z_{1}^{(1)};z_{1}^{(2)})&=\frac{1}{2}f_{1}(x)f_{2}(x)\times\Big[\rhotilde(y_{1}^{(1)},y_{1}^{(2)})G_{h_{1}}(x_{1}^{(1)}-x)G_{h_{2}}(x_{1}^{(2)}-x)\\
        &\quad-\E\Big\{\rhotilde(y_{1}^{(1)},Y_{1}^{(2)})G_{h_{1}}(x_{1}^{(1)}-x)G_{h_{2}}(X_{1}^{(2)}-x)\Big\}\\
        &\quad-\E\Big\{\rhotilde(Y_{1}^{(1)},y_{1}^{(2)})G_{h_{1}}(X_{1}^{(1)}-x)G_{h_{2}}(x_{1}^{(2)}-x)\Big\}\\
        &\quad+\E\Big\{\rhotilde(Y_{1}^{(1)},Y_{1}^{(2)})G_{h_{1}}(X_{1}^{(1)}-x)G_{h_{2}}(X_{1}^{(2)}-x)\Big\}\Big]\\
        &\quad\times\{1+O(h_{1}^{\nu}+h_{2}^{\nu})\}.
    \end{align*}
    Using the spectral decomposition (\ref{eq:spd}), we have
    \begin{align*}
        H_{n_{1},n_{2}}^{(1,1)}&=\frac{1}{2}f_{1}(x)f_{2}(x)\sum_{r=1}^{\infty}\lambda_{r}\times\frac{1}{n_{1}}\sum_{i=1}^{n_{1}}\Big[\phi_{r}(Y_{i}^{(1)})G_{h_{1}}(X_{i}^{(1)}-x)-\E\Big\{\phi_{r}(Y^{(1)})G_{h_{1}}(X^{(1)}-x)\Big\}\Big]\\
        &\quad\times\frac{1}{n_{2}}\sum_{j=1}^{n_{2}}\Big[\phi_{r}(Y_{j}^{(2)})G_{h_{2}}(X_{j}^{(2)}-x)-\E\Big\{\phi_{r}(Y^{(2)})G_{h_{2}}(X^{(2)}-x)\Big\}\Big]\\
        &\quad+O_{p}((n_{1}h_{1}^{p}n_{2}h_{2}^{p})^{-1/2}(h_{1}^{\nu}+h_{2}^{\nu})).
    \end{align*}
    Applying Lyapunov CLT, 
    \begin{align*}
        \sqrt{n_{1}h_{1}^{p}}\frac{1}{n_{1}}\sum_{i=1}^{n_{1}}\Big[\phi_{r}(Y_{i}^{(1)})G_{h_{1}}(X_{i}^{(1)}-x)-\E\Big\{\phi_{r}(Y^{(1)})G_{h_{1}}(X^{(1)}-x)\Big\}\Big]\\
        \convd\N\bigg(0,f_{1}(x)\bigg\{\int g^{2}(u)du\bigg\}^{p}\bigg),\\
        \sqrt{n_{2}h_{2}^{p}}\frac{1}{n_{2}}\sum_{j=1}^{n_{2}}\Big[\phi_{r}(Y_{j}^{(2)})G_{h_{2}}(X_{j}^{(2)}-x)-\E\Big\{\phi_{r}(Y^{(2)})G_{h_{2}}(X^{(2)}-x)\Big\}\Big]\\
        \convd\N\bigg(0,f_{2}(x)\bigg\{\int g^{2}(u)du\bigg\}^{p}\bigg),
    \end{align*}
    since $\E\{\phi_{r}^{2}(Y)\mid X=x\}=1$. Also, for $l=1,2$ and $r_{1}\neq r_{2}$, we have
    \begin{align*}
        &\quad h_{l}^{p}\cov\Big[\phi_{r_{1}}(Y^{(l)})G_{h_{l}}(X^{(l)}-x),\phi_{r_{2}}(Y^{(l)})G_{h_{l}}(X^{(l)}-x)\Big]\\
        &\longrightarrow\E\{\phi_{r_{1}}(Y)\phi_{r_{2}}(Y)\mid X=x\}f_{l}(x)\bigg\{\int g^{2}(u)du\bigg\}^{p}=0.
    \end{align*}
    Thus, 
    \[
    \{f_{1}(x)f_{2}(x)\}^{-2}\sqrt{n_{1}h_{1}^{p}n_{2}h_{2}^{p}}H_{n_{1},n_{2}}^{(1,1)}\convd\frac{1}{2}\sum^{\infty}_{r=1}\lambda_{r}\zeta_{r}\eta_{r},
    \]
    where $\{\zeta_r\}_{r=1}^{\infty}$ and $\{\eta_r\}_{r=1}^{\infty}$ are two sequences of independent normal random variables with
    \begin{align*}
        \zeta_{r}&\overset{\iid}{\sim}\N(0,\sigma_{1}^{2}),\quad\sigma_{1}^{2}=\frac{1}{f_{1}(x)}\bigg\{\int g^{2}(u)du\bigg\}^{p},\\
        \eta_{r}&\overset{\iid}{\sim}\N(0,\sigma_{2}^{2}),\quad\sigma_{2}^{2}=\frac{1}{f_{2}(x)}\bigg\{\int g^{2}(u)du\bigg\}^{p}.
    \end{align*}

    Similarly, one can show that
    \[
    \{f_{1}(x)f_{2}(x)\}^{-2}n_{1}h_{1}^{p}H_{n_{1},n_{2}}^{(2,0)}\convd-\sum_{r=1}^{\infty}\lambda_{r}(\zeta_{r}^{2}-\sigma_{1}^{2}),
    \]
    and
    \[
    \{f_{1}(x)f_{2}(x)\}^{-2}n_{2}h_{2}^{p}H_{n_{1},n_{2}}^{(0,2)}\convd-\sum_{r=1}^{\infty}\lambda_{r}(\eta_{r}^{2}-\sigma_{2}^{2}).
    \]
    Combining the above results, we have
    \begin{align*}
        &\quad\{f_{1}(x)f_{2}(x)\}^{-2}(n_{1}h_{1}^{p}+n_{2}h_{2}^{p})(H_{n_{1},n_{2}}^{(2,0)}+4H_{n_{1},n_{2}}^{(1,1)}+H_{n_{1},n_{2}}^{(0,2)})\\
        &\convd-\frac{1}{\tau}\sum_{r=1}^{\infty}\lambda_{r}(\zeta_{r}^{2}-\sigma_{1}^{2})+\frac{2}{\sqrt{\tau(1-\tau)}}\sum^{\infty}_{r=1}\lambda_{r}\zeta_{r}\eta_{r}-\frac{1}{1-\tau}\sum_{r=1}^{\infty}\lambda_{r}(\eta_{r}^{2}-\sigma_{2}^{2})\\
        &=-\sum_{r=1}^{\infty}\lambda_{r}\bigg\{\bigg(\frac{1}{\sqrt{\tau}}\zeta_{r}-\frac{1}{\sqrt{1-\tau}}\eta_{r}\bigg)^{2}-\frac{(1-\tau)\sigma_{1}^{2}+\tau\sigma_{2}^{2}}{\tau(1-\tau)}\bigg\}.
    \end{align*}
    Also, according to the results in the proof of Theorem \ref{thm3},
    \[
    (n_{1}h_{1}^{p}+n_{2}h_{2}^{p})^{2}\var(R_{n_{1},n_{2}})=O((n_{1}h_{1}^{p})^{-1}+(n_{2}h_{2}^{p})^{-1}).
    \]
    Therefore,
    \begin{align*}
        &\quad\{f_{1}(x)f_{2}(x)\}^{-2}(n_{1}h_{1}^{p}+n_{2}h_{2}^{p})U_{n_{1},n_{2}}\\
        &=\{f_{1}(x)f_{2}(x)\}^{-2}(n_{1}h_{1}^{p}+n_{2}h_{2}^{p})(H_{n_{1},n_{2}}^{(2,0)}+4H_{n_{1},n_{2}}^{(1,1)}+H_{n_{1},n_{2}}^{(0,2)})\\
        &\quad+\{f_{1}(x)f_{2}(x)\}^{-2}(n_{1}h_{1}^{p}+n_{2}h_{2}^{p})\{\E(U_{n_{1},n_{2}})+2H_{n_{1},n_{2}}^{(1,0)}+2H_{n_{1},n_{2}}^{(0,1)}+R_{n_{1},n_{2}}\}\\
        &=\{f_{1}(x)f_{2}(x)\}^{-2}(n_{1}h_{1}^{p}+n_{2}h_{2}^{p})(H_{n_{1},n_{2}}^{(2,0)}+4H_{n_{1},n_{2}}^{(1,1)}+H_{n_{1},n_{2}}^{(0,2)})\\
        &\quad+O(n_{1}h_{1}^{p+\nu}+n_{2}h_{2}^{p+\nu})+O_{p}(n_{1}^{1/2}h_{1}^{p/2+\nu}+n_{2}^{1/2}h_{2}^{p/2+\nu})+O_{p}((n_{1}h_{1}^{p})^{-1/2}+(n_{2}h_{2}^{p})^{-1/2})\\
        &=\{f_{1}(x)f_{2}(x)\}^{-2}(n_{1}h_{1}^{p}+n_{2}h_{2}^{p})(H_{n_{1},n_{2}}^{(2,0)}+4H_{n_{1},n_{2}}^{(1,1)}+H_{n_{1},n_{2}}^{(0,2)})+o_{p}(1)\\
        &\convd-\sum_{r=1}^{\infty}\lambda_{r}\bigg\{\bigg(\frac{1}{\sqrt{\tau}}\zeta_{r}-\frac{1}{\sqrt{1-\tau}}\eta_{r}\bigg)^{2}-\frac{(1-\tau)\sigma_{1}^{2}+\tau\sigma_{2}^{2}}{\tau(1-\tau)}\bigg\},
    \end{align*}
    where we have used Assumption \ref{assum2}, and $n_{l}h_{l}^{p+\nu}\rightarrow 0$ as $n_{l}\rightarrow\infty$ for $l=1,2$. The result follows. 
\end{proof}

\begin{proof}[Proof of Theorem \ref{thm5}]
    Note that
    \[
    \widehat{\I}_{\rho}=\binom{n_{1}}{2}^{-1}\binom{n_{2}}{2}^{-1}\sum_{1\le i_{1}<i_{2}\le n_{1}}\sum_{1\le j_{1}<j_{2}\le n_{2}}\psi(Z_{i_{1}}^{(1)},Z_{i_{2}}^{(1)};Z_{j_{1}}^{(2)},Z_{j_{2}}^{(2)}),
    \]
    where
    \begin{align*}
        &\quad\psi(Z_{1}^{(1)},Z_{2}^{(1)};Z_{1}^{(2)},Z_{2}^{(2)})\\
        &=\frac{1}{4}\rho(Y_{1}^{(1)},Y_{1}^{(2)})\left\{G_{h_{1}}(X_{1}^{(1)}-X_{1}^{(2)})+G_{h_{2}}(X_{1}^{(2)}-X_{1}^{(1)})\right\}G_{h_{1}}(X_{2}^{(1)}-X_{1}^{(2)})G_{h_{2}}(X_{2}^{(2)}-X_{1}^{(1)})\\
        &\quad+\frac{1}{4}\rho(Y_{1}^{(1)},Y_{2}^{(2)})\left\{G_{h_{1}}(X_{1}^{(1)}-X_{2}^{(2)})+G_{h_{2}}(X_{2}^{(2)}-X_{1}^{(1)})\right\}G_{h_{1}}(X_{2}^{(1)}-X_{2}^{(2)})G_{h_{2}}(X_{1}^{(2)}-X_{1}^{(1)})\\
        &\quad+\frac{1}{4}\rho(Y_{2}^{(1)},Y_{1}^{(2)})\left\{G_{h_{1}}(X_{2}^{(1)}-X_{1}^{(2)})+G_{h_{2}}(X_{1}^{(2)}-X_{2}^{(1)})\right\}G_{h_{1}}(X_{1}^{(1)}-X_{1}^{(2)})G_{h_{2}}(X_{2}^{(2)}-X_{2}^{(1)})\\
        &\quad+\frac{1}{4}\rho(Y_{2}^{(1)},Y_{2}^{(2)})\left\{G_{h_{1}}(X_{2}^{(1)}-X_{2}^{(2)})+G_{h_{2}}(X_{2}^{(2)}-X_{2}^{(1)})\right\}G_{h_{1}}(X_{1}^{(1)}-X_{2}^{(2)})G_{h_{2}}(X_{1}^{(2)}-X_{2}^{(1)})\\
        &\quad-\frac{1}{2}\rho(Y_{1}^{(1)},Y_{2}^{(1)})G_{h_{1}}(X_{1}^{(1)}-X_{2}^{(1)})\\
        &\quad\times\left\{G_{h_{2}}(X_{1}^{(2)}-X_{1}^{(1)})G_{h_{2}}(X_{2}^{(2)}-X_{1}^{(1)})+G_{h_{2}}(X_{1}^{(2)}-X_{2}^{(1)})G_{h_{2}}(X_{2}^{(2)}-X_{2}^{(1)})\right\}\\
        &\quad-\frac{1}{2}\rho(Y_{1}^{(2)},Y_{2}^{(2)})G_{h_{2}}(X_{1}^{(2)}-X_{2}^{(2)})\\
        &\quad\times\left\{G_{h_{1}}(X_{1}^{(1)}-X_{1}^{(2)})G_{h_{1}}(X_{2}^{(1)}-X_{1}^{(2)})+G_{h_{1}}(X_{1}^{(1)}-X_{2}^{(2)})G_{h_{1}}(X_{2}^{(1)}-X_{2}^{(2)})\right\}.
    \end{align*}
    It suffices to show that
    \begin{equation}\label{thm5_pf}
        \E(\widehat{\I}_{\rho})=\I_{\rho}+o(1),\quad \var(\widehat{\I}_{\rho})=o(1).
    \end{equation}
    
    For the first part of (\ref{thm5_pf}), we have
    \begin{align*}
        \E(\widehat{\I}_{\rho})&=\E\{\rho(Y_{1}^{(1)},Y_{1}^{(2)})G_{h_{1}}(X_{1}^{(1)}-X_{1}^{(2)})G_{h_{1}}(X_{2}^{(1)}-X_{1}^{(2)})G_{h_{2}}(X_{2}^{(2)}-X_{1}^{(1)})\}\\
        &\quad+\E\{\rho(Y_{1}^{(1)},Y_{1}^{(2)})G_{h_{2}}(X_{1}^{(2)}-X_{1}^{(1)})G_{h_{1}}(X_{2}^{(1)}-X_{1}^{(2)})G_{h_{2}}(X_{2}^{(2)}-X_{1}^{(1)})\}\\
        &\quad-\E\{\rho(Y_{1}^{(1)},Y_{2}^{(1)})G_{h_{1}}(X_{1}^{(1)}-X_{2}^{(1)})G_{h_{2}}(X_{1}^{(2)}-X_{1}^{(1)})G_{h_{2}}(X_{2}^{(2)}-X_{1}^{(1)})\}\\
        &\quad-\E\{\rho(Y_{1}^{(2)},Y_{2}^{(2)})G_{h_{2}}(X_{1}^{(2)}-X_{2}^{(2)})G_{h_{1}}(X_{1}^{(1)}-X_{1}^{(2)})G_{h_{1}}(X_{2}^{(1)}-X_{1}^{(2)})\}.
    \end{align*}
    We first consider the term
    \begin{align*}
        &\quad\E\{\rho(Y_{1}^{(1)},Y_{1}^{(2)})G_{h_{1}}(X_{1}^{(1)}-X_{1}^{(2)})G_{h_{1}}(X_{2}^{(1)}-X_{1}^{(2)})G_{h_{2}}(X_{2}^{(2)}-X_{1}^{(1)})\}\\
        &=\int\rho(y_{1}^{(1)},y_{1}^{(2)})G_{h_{1}}(x_{1}^{(1)}-x_{1}^{(2)})G_{h_{1}}(x_{2}^{(1)}-x_{1}^{(2)})G_{h_{2}}(x_{2}^{(2)}-x_{1}^{(1)})\\
        &\quad\times f_{1}(y_{1}^{(1)},x_{1}^{(1)})f_{2}(y_{1}^{(2)},x_{1}^{(2)})f_{1}(x_{2}^{(1)})f_{2}(x_{2}^{(2)})dy_{1}^{(1)}dy_{1}^{(2)}dx_{1}^{(1)}dx_{2}^{(1)}dx_{1}^{(2)}dx_{2}^{(2)}.
    \end{align*}
    Let $x=x_{1}^{(2)}$, $u=h_{1}^{-1}(x_{1}^{(1)}-x_{1}^{(2)})$, $v=h_{1}^{-1}(x_{2}^{(1)}-x_{1}^{(2)})$ and $w=h_{2}^{-1}(x_{2}^{(2)}-x_{1}^{(1)})$. We have
    \begin{align*}
        &\quad\E\{\rho(Y_{1}^{(1)},Y_{1}^{(2)})G_{h_{1}}(X_{1}^{(1)}-X_{1}^{(2)})G_{h_{1}}(X_{2}^{(1)}-X_{1}^{(2)})G_{h_{2}}(X_{2}^{(2)}-X_{1}^{(1)})\}\\
        &\overset{(i)}{=}\int\rho(y_{1}^{(1)},y_{1}^{(2)})\prod_{s=1}^{p}\{g(u(s))g(v(s))g(w(s))\}\\
        &\quad\times f_{1}(y_{1}^{(1)},x+h_{1}u)f_{2}(y_{1}^{(2)},x)f_{1}(x+h_{1}v)f_{2}(x+h_{1}u+h_{2}w)dy_{1}^{(1)}dy_{1}^{(2)}dudvdwdx\\
        &=\int\E\{\rho(Y^{(1)},Y^{(2)})\mid X^{(1)}=x+h_{1}u,X^{(2)}=x\}\prod_{s=1}^{p}\{g(u(s))g(v(s))g(w(s))\}\\
        &\quad\times f_{1}(x+h_{1}u)f_{1}(x+h_{1}v)f_{2}(x+h_{1}u+h_{2}w)f_{2}(x)dudvdwdx\\
        &\overset{(ii)}{=}\int\E\{\rho(Y^{(1)},Y^{(2)})\mid X^{(1)}=x,X^{(2)}=x\}\prod_{s=1}^{p}\{g(u(s))g(v(s))g(w(s))\}\\
        &\quad\times \{f_{1}(x)f_{2}(x)\}^{2}dudvdwdx+O(h_{1}^{\nu}+h_{2}^{\nu})\\
        &=\int\E\{\rho(Y^{(1)},Y^{(2)})\mid X^{(1)}=x,X^{(2)}=x\}\{f_{1}(x)f_{2}(x)\}^{2}dx+O(h_{1}^{\nu}+h_{2}^{\nu}),
    \end{align*}
    where $(i)$ follows from change of variables, and $(ii)$ holds by Taylor's theorem and Assumptions \ref{assum1} and \ref{assum5}-\ref{assum6}. Similarly, one can verify that
    \begin{align*}
        &\quad\E\{\rho(Y_{1}^{(1)},Y_{1}^{(2)})G_{h_{2}}(X_{1}^{(2)}-X_{1}^{(1)})G_{h_{1}}(X_{2}^{(1)}-X_{1}^{(2)})G_{h_{2}}(X_{2}^{(2)}-X_{1}^{(1)})\}\\
        &=\int\E\{\rho(Y^{(1)},Y^{(2)})\mid X^{(1)}=x,X^{(2)}=x\}\{f_{1}(x)f_{2}(x)\}^{2}dx+O(h_{1}^{\nu}+h_{2}^{\nu}),\\
        &\quad\E\{\rho(Y_{1}^{(1)},Y_{2}^{(1)})G_{h_{1}}(X_{1}^{(1)}-X_{2}^{(1)})G_{h_{2}}(X_{1}^{(2)}-X_{1}^{(1)})G_{h_{2}}(X_{2}^{(2)}-X_{1}^{(1)})\}\\
        &=\int\E\{\rho(Y^{(1)},Y^{(1)\prime})\mid X^{(1)}=x,X^{(1)\prime}=x\}\{f_{1}(x)f_{2}(x)\}^{2}dx+O(h_{1}^{\nu}+h_{2}^{\nu}),\\
        &\quad\E\{\rho(Y_{1}^{(2)},Y_{2}^{(2)})G_{h_{2}}(X_{1}^{(2)}-X_{2}^{(2)})G_{h_{1}}(X_{1}^{(1)}-X_{1}^{(2)})G_{h_{1}}(X_{2}^{(1)}-X_{1}^{(2)})\}\\
        &=\int\E\{\rho(Y^{(2)},Y^{(2)\prime})\mid X^{(2)}=x,X^{(2)\prime}=x\}\{f_{1}(x)f_{2}(x)\}^{2}dx+O(h_{1}^{\nu}+h_{2}^{\nu}).
    \end{align*}
    Thus, by Assumption \ref{assum2}, 
    \begin{align*}
        \E(\widehat{\I}_{\rho})&=\int\Big[2\E\{\rho(Y^{(1)},Y^{(2)})\mid X^{(1)}=x,X^{(2)}=x\}-\E\{\rho(Y^{(1)},Y^{(1)\prime})\mid X^{(1)}=x,X^{(1)\prime}=x\}\\
        &\quad-\E\{\rho(Y^{(2)},Y^{(2)\prime})\mid X^{(2)}=x,X^{(2)\prime}=x\}\Big]\{f_{1}(x)f_{2}(x)\}^{2}dx+O(h_{1}^{\nu}+h_{2}^{\nu})\\
        &=\int\D_{\rho}(x)\{f_{1}(x)f_{2}(x)\}^{2}dx+o(1). 
    \end{align*}

    For the second part of (\ref{thm5_pf}), following Definition \ref{defa2} and Theorem \ref{thma1} ($m_{1}=m_{2}=2$), we have
    \[
    \var(\widehat{\I}_{\rho})=\binom{n_{1}}{2}^{-1}\binom{n_{2}}{2}^{-1}\sum_{c=0}^{2}\sum_{d=0}^{2}\binom{2}{c}\binom{2}{d}\binom{n_{1}-2}{2-c}\binom{n_{2}-2}{2-d}\sigma_{c,d}^{2}.
    \]
    We first calculate
    \[
    \sigma_{2,2}^{2}=\var\{\psi(Z_{1}^{(1)},Z_{2}^{(1)};Z_{1}^{(2)},Z_{2}^{(2)})\}=\E\{\psi^{2}(Z_{1}^{(1)},Z_{2}^{(1)};Z_{1}^{(2)},Z_{2}^{(2)})\}-\{\I_{\rho}+o(1)\}^{2}.
    \]
    As for $\E\{\psi^{2}(Z_{1}^{(1)},Z_{2}^{(1)};Z_{1}^{(2)},Z_{2}^{(2)})\}$, it can be expanded into several terms, and each of these terms can be shown to be of order $(h_{1}^{2}h_{2})^{-p}$ or $(h_{1}h_{2}^{2})^{-p}$. We only present the proof for the term below. Derivations for the other terms are similar. Note that
    \begin{align*}
        &\quad\E\Big[\Big\{\rho(Y_{1}^{(1)},Y_{1}^{(2)})G_{h_{1}}(X_{1}^{(1)}-X_{1}^{(2)})G_{h_{1}}(X_{2}^{(1)}-X_{1}^{(2)})G_{h_{2}}(X_{2}^{(2)}-X_{1}^{(1)})\Big\}^{2}\Big]\\
        &=\int\left\{\rho(y_{1}^{(1)},y_{1}^{(2)})G_{h_{1}}(x_{1}^{(1)}-x_{1}^{(2)})G_{h_{1}}(x_{2}^{(1)}-x_{1}^{(2)})G_{h_{2}}(x_{2}^{(2)}-x_{1}^{(1)})\right\}^{2}\\
        &\quad\times f_{1}(y_{1}^{(1)},x_{1}^{(1)})f_{2}(y_{1}^{(2)},x_{1}^{(2)})f_{1}(x_{2}^{(1)})f_{2}(x_{2}^{(2)})dy_{1}^{(1)}dy_{1}^{(2)}dx_{1}^{(1)}dx_{2}^{(1)}dx_{1}^{(2)}dx_{2}^{(2)}\\
        &\overset{(i)}{=}(h_{1}^{2}h_{2})^{-p}\int\bigg[\rho(y_{1}^{(1)},y_{1}^{(2)})\prod_{s=1}^{p}\{g(u(s))g(v(s))g(w(s))\}\bigg]^{2}\\
        &\quad\times f_{1}(y_{1}^{(1)},x+h_{1}u)f_{2}(y_{1}^{(2)},x)f_{1}(x+h_{1}v)f_{2}(x+h_{1}u+h_{2}w)dy_{1}^{(1)}dy_{1}^{(2)}dudvdwdx\\
        &\asymp(h_{1}^{2}h_{2})^{-p},
    \end{align*}
    where $(i)$ follows from the same change of variables technique as in the proof of the first part of (\ref{thm5_pf}) above. Hence, $\sigma_{2,2}^{2}=O((h_{1}^{2}h_{2})^{-p}+(h_{1}h_{2}^{2})^{-p})$. 

    Analogously, using change of variables, one can verify that $\sigma_{1,0}^{2}=O(1)$, $\sigma_{0,1}^{2}=O(1)$, $\sigma_{2,0}^{2}=O(h_{1}^{-p})$, $\sigma_{1,1}^{2}=O(h_{1}^{-p}+h_{2}^{-p})$, $\sigma_{0,2}^{2}=O(h_{2}^{-p})$, $\sigma_{2,1}^{2}=O(h_{1}^{-2p}+(h_{1}h_{2})^{-p})$, and $\sigma_{1,2}^{2}=O((h_{1}h_{2})^{-p}+h_{2}^{-2p})$. Therefore, by Assumption \ref{assum2}, 
    \[
    \var(\widehat{\I}_{\rho})=\sum_{c=0}^{2}\sum_{d=0}^{2}O(n_{1}^{-c}n_{2}^{-d})\sigma_{c,d}^{2}=O(n_{1}^{-1}+n_{2}^{-1})=o(1).
    \]
    This completes the proof. 
\end{proof}

\begin{proof}[Proof of Theorem \ref{thm6}]
    Following Definition \ref{defa3} and Theorem \ref{thma2}, we have the Hoeffding decomposition for the U-statistic $\widehat{\I}_{\rho}$:
    \[
    \widehat{\I}_{\rho}=\sum_{c=0}^{2}\sum_{d=0}^{2}\binom{2}{c}\binom{2}{d}H_{n_{1},n_{2}}^{(c,d)}=\E(\widehat{\I}_{\rho})+2H_{n_{1},n_{2}}^{(1,0)}+2H_{n_{1},n_{2}}^{(0,1)}+R_{n_{1},n_{2}}, 
    \]
    where
    \begin{align*}
        H_{n_{1},n_{2}}^{(1,0)}&=\frac{1}{n_{1}}\sum_{i=1}^{n_{1}}\phi^{(1,0)}(Z_{i}^{(1)}),\quad H_{n_{1},n_{2}}^{(0,1)}=\frac{1}{n_{2}}\sum_{j=1}^{n_{2}}\phi^{(0,1)}(Z_{j}^{(2)}),
    \end{align*}
    and $R_{n_{1},n_{2}}$ is the remainder term. Furthermore, we give the explicit expression of $\phi^{(1,0)}(z_{1}^{(1)})$ (the expression of $\phi^{(0,1)}(z_{1}^{(2)})$ is similar and omitted):
    \begin{align}
        \begin{split}\label{eq:hoeff2}
            &\quad\phi^{(1,0)}(z_{1}^{(1)})\\
            &=\frac{1}{2}\E\Big[\rho(y_{1}^{(1)},Y_{1}^{(2)})\left\{G_{h_{1}}(x_{1}^{(1)}-X_{1}^{(2)})+G_{h_{2}}(X_{1}^{(2)}-x_{1}^{(1)})\right\}G_{h_{1}}(X_{2}^{(1)}-X_{1}^{(2)})G_{h_{2}}(X_{2}^{(2)}-x_{1}^{(1)})\Big]\\
            &\quad+\frac{1}{2}\E\Big[\rho(Y_{2}^{(1)},Y_{1}^{(2)})\left\{G_{h_{1}}(X_{2}^{(1)}-X_{1}^{(2)})+G_{h_{2}}(X_{1}^{(2)}-X_{2}^{(1)})\right\}G_{h_{1}}(x_{1}^{(1)}-X_{1}^{(2)})G_{h_{2}}(X_{2}^{(2)}-X_{2}^{(1)})\Big]\\
            &\quad-\frac{1}{2}\E\Big\{\rho(y_{1}^{(1)},Y_{2}^{(1)})G_{h_{1}}(x_{1}^{(1)}-X_{2}^{(1)})G_{h_{2}}(X_{1}^{(2)}-x_{1}^{(1)})G_{h_{2}}(X_{2}^{(2)}-x_{1}^{(1)})\Big\}\\
            &\quad-\frac{1}{2}\E\Big\{\rho(y_{1}^{(1)},Y_{2}^{(1)})G_{h_{1}}(x_{1}^{(1)}-X_{2}^{(1)})G_{h_{2}}(X_{1}^{(2)}-X_{2}^{(1)})G_{h_{2}}(X_{2}^{(2)}-X_{2}^{(1)})\Big\}\\
            &\quad-\E\Big\{\rho(Y_{1}^{(2)},Y_{2}^{(2)})G_{h_{2}}(X_{1}^{(2)}-X_{2}^{(2)})G_{h_{1}}(x_{1}^{(1)}-X_{1}^{(2)})G_{h_{1}}(X_{2}^{(1)}-X_{1}^{(2)})\Big\}\\
            &\quad-\E(\widehat{\I}_{\rho}).
        \end{split}
    \end{align}

    By similar calculations as in the proof of Theorem \ref{thm5}, we have 
    \begin{align}
        \delta_{1,0}^{2}&=\var\{\phi^{(1,0)}(Z_{1}^{(1)})\}=\sigma_{1,0}^{2}\asymp 1, \label{eq:delta10}\\
        \delta_{0,1}^{2}&=\var\{\phi^{(0,1)}(Z_{1}^{(2)})\}=\sigma_{0,1}^{2}\asymp 1, \label{eq:delta01}
    \end{align}
    under $H_{a}$ in (\ref{eq:hypo}). Applying Lyapunov CLT, 
    \[
    \frac{H_{n_{1},n_{2}}^{(1,0)}+H_{n_{1},n_{2}}^{(0,1)}}{\sqrt{\frac{\delta_{1,0}^{2}}{n_{1}}+\frac{\delta_{0,1}^{2}}{n_{2}}}}\convd\N(0,1).
    \]

    Also, one can show that
    \begin{align*}
        \var(H_{n_{1},n_{2}}^{(2,0)})&\asymp n_{1}^{-2}\delta_{2,0}^{2}\asymp n_{1}^{-2}\sigma_{2,0}^{2}\asymp (n_{1}^{2}h_{1}^{p})^{-1},\\
        \var(H_{n_{1},n_{2}}^{(1,1)})&\asymp (n_{1}n_{2})^{-1}\delta_{1,1}^{2}\asymp (n_{1}n_{2})^{-1}\sigma_{1,1}^{2}\asymp (n_{1}h_{1}^{p})^{-1}n_{2}^{-1}+n_{1}^{-1}(n_{2}h_{2}^{p})^{-1},\\
        \var(H_{n_{1},n_{2}}^{(0,2)})&\asymp n_{2}^{-2}\delta_{0,2}^{2}\asymp n_{2}^{-2}\sigma_{0,2}^{2}\asymp (n_{2}^{2}h_{2}^{p})^{-1},\\
        \var(H_{n_{1},n_{2}}^{(2,1)})&\asymp n_{1}^{-2}n_{2}^{-1}\delta_{2,1}^{2}\asymp n_{1}^{-2}n_{2}^{-1}\sigma_{2,1}^{2}\asymp (n_{1}h_{1}^{p})^{-2}n_{2}^{-1}+(n_{1}^{2}h_{1}^{p})^{-1}(n_{2}h_{2}^{p})^{-1},\\
        \var(H_{n_{1},n_{2}}^{(1,2)})&\asymp n_{1}^{-1}n_{2}^{-2}\delta_{1,2}^{2}\asymp n_{1}^{-1}n_{2}^{-2}\sigma_{1,2}^{2}\asymp (n_{1}h_{1}^{p})^{-1}(n_{2}^{2}h_{2}^{p})^{-1}+n_{1}^{-1}(n_{2}h_{2}^{p})^{-2},\\
        \var(H_{n_{1},n_{2}}^{(2,2)})&\asymp (n_{1}n_{2})^{-2}\delta_{2,2}^{2}\asymp (n_{1}n_{2})^{-2}\sigma_{2,2}^{2}\asymp (n_{1}h_{1}^{p})^{-2}(n_{2}^{2}h_{2}^{p})^{-1}+(n_{1}^{2}h_{1}^{p})^{-1}(n_{2}h_{2}^{p})^{-2}.
    \end{align*}
    Thus, 
    \[
    \frac{\var(R_{n_{1},n_{2}})}{\frac{\delta_{1,0}^{2}}{n_{1}}+\frac{\delta_{0,1}^{2}}{n_{2}}}=O((n_{1}h_{1}^{p})^{-1}+(n_{2}h_{2}^{p})^{-1}).
    \]
    Then we have
    \begin{align*}
        \frac{\widehat{\I}_{\rho}-\I_{\rho}}{\sqrt{\frac{\delta_{1,0}^{2}}{n_{1}}+\frac{\delta_{0,1}^{2}}{n_{2}}}}&=\frac{2H_{n_{1},n_{2}}^{(1,0)}+2H_{n_{1},n_{2}}^{(0,1)}}{\sqrt{\frac{\delta_{1,0}^{2}}{n_{1}}+\frac{\delta_{0,1}^{2}}{n_{2}}}}+\frac{\E(\widehat{\I}_{\rho})-\I_{\rho}}{\sqrt{\frac{\delta_{1,0}^{2}}{n_{1}}+\frac{\delta_{0,1}^{2}}{n_{2}}}}+\frac{R_{n_{1},n_{2}}}{\sqrt{\frac{\delta_{1,0}^{2}}{n_{1}}+\frac{\delta_{0,1}^{2}}{n_{2}}}}\\
        &=\frac{2H_{n_{1},n_{2}}^{(1,0)}+2H_{n_{1},n_{2}}^{(0,1)}}{\sqrt{\frac{\delta_{1,0}^{2}}{n_{1}}+\frac{\delta_{0,1}^{2}}{n_{2}}}}+O(n_{1}^{1/2}h_{1}^{\nu}+n_{2}^{1/2}h_{2}^{\nu})+O_{p}((n_{1}h_{1}^{p})^{-1/2}+(n_{2}h_{2}^{p})^{-1/2})\\
        &=\frac{2H_{n_{1},n_{2}}^{(1,0)}+2H_{n_{1},n_{2}}^{(0,1)}}{\sqrt{\frac{\delta_{1,0}^{2}}{n_{1}}+\frac{\delta_{0,1}^{2}}{n_{2}}}}+o_{p}(1)\convd\N(0,4),
    \end{align*}
    where we have used Assumption \ref{assum2} and $n_{l}^{1/2}h_{l}^{\nu}\rightarrow 0$ as $n_{l}\rightarrow\infty$ for $l=1,2$. 
\end{proof}

\begin{proof}[Proof of Theorem \ref{thm7}]
    Following the proof of Theorem \ref{thm6}, we have the Hoeffding decomposition
    \[
    \widehat{\I}_{\rho}=\E(\widehat{\I}_{\rho})+2H_{n_{1},n_{2}}^{(1,0)}+2H_{n_{1},n_{2}}^{(0,1)}+H_{n_{1},n_{2}}^{(2,0)}+4H_{n_{1},n_{2}}^{(1,1)}+H_{n_{1},n_{2}}^{(0,2)}+R_{n_{1},n_{2}},
    \]
    where $R_{n_{1},n_{2}}$ is the remainder term, with slight abuse of notation. 

    Under $H_{0}$ in (\ref{eq:hypo}), one can verify that $\phi^{(1,0)}(z_{1}^{(1)})\neq 0$ and $\phi^{(0,1)}(z_{1}^{(2)})\neq 0$, and thus the U-statistic $\widehat{\I}_{\rho}$ is nondegenerate. Nonetheless, as in the local case, $H_{n_{1},n_{2}}^{(1,0)}$ and $H_{n_{1},n_{2}}^{(0,1)}$ can be asymptotically negligible under $H_{0}$ in (\ref{eq:hypo}). Specifically, by similar calculations as in the proof of Theorem \ref{thm5}, 
    \begin{align*}
        &\quad\E\Big[\rho(y_{1}^{(1)},Y_{1}^{(2)})\left\{G_{h_{1}}(x_{1}^{(1)}-X_{1}^{(2)})+G_{h_{2}}(X_{1}^{(2)}-x_{1}^{(1)})\right\}G_{h_{1}}(X_{2}^{(1)}-X_{1}^{(2)})G_{h_{2}}(X_{2}^{(2)}-x_{1}^{(1)})\Big]\\
        &=2f_{1}(x_{1}^{(1)})f_{2}^{2}(x_{1}^{(1)})\int\rho(y_{1}^{(1)},y_{1}^{(2)})f_{2}(y_{1}^{(2)}\mid x_{1}^{(1)})dy_{1}^{(2)}\times\{1+O(h_{1}^{\nu}+h_{2}^{\nu})\},\\
        &\quad\E\Big[\rho(Y_{2}^{(1)},Y_{1}^{(2)})\left\{G_{h_{1}}(X_{2}^{(1)}-X_{1}^{(2)})+G_{h_{2}}(X_{1}^{(2)}-X_{2}^{(1)})\right\}G_{h_{1}}(x_{1}^{(1)}-X_{1}^{(2)})G_{h_{2}}(X_{2}^{(2)}-X_{2}^{(1)})\Big]\\
        &=2f_{1}(x_{1}^{(1)})f_{2}^{2}(x_{1}^{(1)})\int\rho(y_{2}^{(1)},y_{1}^{(2)})f_{1}(y_{2}^{(1)}\mid x_{1}^{(1)})f_{2}(y_{1}^{(2)}\mid x_{1}^{(1)})dy_{2}^{(1)}dy_{1}^{(2)}\times\{1+O(h_{1}^{\nu}+h_{2}^{\nu})\},\\
        &\quad\E\Big\{\rho(y_{1}^{(1)},Y_{2}^{(1)})G_{h_{1}}(x_{1}^{(1)}-X_{2}^{(1)})G_{h_{2}}(X_{1}^{(2)}-x_{1}^{(1)})G_{h_{2}}(X_{2}^{(2)}-x_{1}^{(1)})\Big\}\\
        &=f_{1}(x_{1}^{(1)})f_{2}^{2}(x_{1}^{(1)})\int\rho(y_{1}^{(1)},y_{2}^{(1)})f_{1}(y_{2}^{(1)}\mid x_{1}^{(1)})dy_{2}^{(1)}\times\{1+O(h_{1}^{\nu}+h_{2}^{\nu})\},\\
        &\quad\E\Big\{\rho(y_{1}^{(1)},Y_{2}^{(1)})G_{h_{1}}(x_{1}^{(1)}-X_{2}^{(1)})G_{h_{2}}(X_{1}^{(2)}-X_{2}^{(1)})G_{h_{2}}(X_{2}^{(2)}-X_{2}^{(1)})\Big\}\\
        &=f_{1}(x_{1}^{(1)})f_{2}^{2}(x_{1}^{(1)})\int\rho(y_{1}^{(1)},y_{2}^{(1)})f_{1}(y_{2}^{(1)}\mid x_{1}^{(1)})dy_{2}^{(1)}\times\{1+O(h_{1}^{\nu}+h_{2}^{\nu})\},\\
        &\quad\E\Big\{\rho(Y_{1}^{(2)},Y_{2}^{(2)})G_{h_{2}}(X_{1}^{(2)}-X_{2}^{(2)})G_{h_{1}}(x_{1}^{(1)}-X_{1}^{(2)})G_{h_{1}}(X_{2}^{(1)}-X_{1}^{(2)})\Big\}\\
        &=f_{1}(x_{1}^{(1)})f_{2}^{2}(x_{1}^{(1)})\int\rho(y_{1}^{(2)},y_{2}^{(2)})f_{2}(y_{1}^{(2)}\mid x_{1}^{(1)})f_{2}(y_{2}^{(2)}\mid x_{1}^{(1)})dy_{1}^{(2)}dy_{2}^{(2)}\times\{1+O(h_{1}^{\nu}+h_{2}^{\nu})\}.              
    \end{align*}
    Under $H_{0}$ in (\ref{eq:hypo}), we have $f(y\mid x)\coloneqq f_{1}(y\mid x)=f_{2}(y\mid x)$ for all $y,x$. Hence, for $\phi^{(1,0)}(z_{1}^{(1)})$ in (\ref{eq:hoeff2}), 
    \begin{align*}
        \phi^{(1,0)}(z_{1}^{(1)})&=f_{1}(x_{1}^{(1)})f_{2}^{2}(x_{1}^{(1)})\bigg\{\int\rho(y_{1}^{(1)},y)f(y\mid x_{1}^{(1)})dy+\int\rho(y,y')f(y\mid x_{1}^{(1)})f(y'\mid x_{1}^{(1)})dydy'\bigg\}\\
        &\quad\times O(h_{1}^{\nu}+h_{2}^{\nu})+O(h_{1}^{\nu}+h_{2}^{\nu}),
    \end{align*}
    under $H_{0}$ in (\ref{eq:hypo}). It implies that
    \[
    \var\{\phi^{(1,0)}(Z_{1}^{(1)})\}=O(h_{1}^{2\nu}+h_{2}^{2\nu}),
    \]
    and thus, 
    \[
    \var(H_{n_{1},n_{2}}^{(1,0)})=O(n_{1}^{-1}(h_{1}^{2\nu}+h_{2}^{2\nu})).
    \]
    Analogously, one can show that $\var(H_{n_{1},n_{2}}^{(0,1)})=O(n_{2}^{-1}(h_{1}^{2\nu}+h_{2}^{2\nu}))$ under $H_{0}$ in (\ref{eq:hypo}). As in the local case, when undersmoothing is used, the asymptotic distribution of $\widehat{\I}_{\rho}$ is determined by $H_{n_{1},n_{2}}^{(2,0)}+4H_{n_{1},n_{2}}^{(1,1)}+H_{n_{1},n_{2}}^{(0,2)}$ under $H_{0}$ in (\ref{eq:hypo}), since $\var(H_{n_{1},n_{2}}^{(2,0)})\asymp (n_{1}^{2}h_{1}^{p})^{-1}$, $\var(H_{n_{1},n_{2}}^{(1,1)})\asymp (n_{1}h_{1}^{p})^{-1}n_{2}^{-1}+n_{1}^{-1}(n_{2}h_{2}^{p})^{-1}$ and $\var(H_{n_{1},n_{2}}^{(0,2)})\asymp (n_{2}^{2}h_{2}^{p})^{-1}$. 
    
    We have
    \begin{align*}
        H_{n_{1},n_{2}}^{(2,0)}+4H_{n_{1},n_{2}}^{(1,1)}+H_{n_{1},n_{2}}^{(0,2)}&=\binom{n_{1}}{2}^{-1}\sum_{1\le i_{1}<i_{2}\le n_{1}}\phi^{(2,0)}(Z_{i_{1}}^{(1)},Z_{i_{2}}^{(1)})+\frac{4}{n_{1}n_{2}}\sum_{i=1}^{n_{1}}\sum_{j=1}^{n_{2}}\phi^{(1,1)}(Z_{i}^{(1)};Z_{j}^{(2)})\\
        &\quad+\binom{n_{2}}{2}^{-1}\sum_{1\le j_{1}<j_{2}\le n_{2}}\phi^{(0,2)}(Z_{j_{1}}^{(2)},Z_{j_{2}}^{(2)}),
    \end{align*}
    and aim to show that
    \[
    \sigma_{n_{1},n_{2}}^{-1}(H_{n_{1},n_{2}}^{(2,0)}+4H_{n_{1},n_{2}}^{(1,1)}+H_{n_{1},n_{2}}^{(0,2)})\convd\N(0,1),\quad\text{as }n_{1},n_{2}\rightarrow\infty,
    \]
    where
    \[
    \sigma_{n_{1},n_{2}}^{2}\coloneqq\var(H_{n_{1},n_{2}}^{(2,0)}+4H_{n_{1},n_{2}}^{(1,1)}+H_{n_{1},n_{2}}^{(0,2)})=\frac{2}{n_{1}(n_{1}-1)}\delta_{2,0}^{2}+\frac{16}{n_{1}n_{2}}\delta_{1,1}^{2}+\frac{2}{n_{2}(n_{2}-1)}\delta_{0,2}^{2},
    \]
    and
    \begin{align}
        \delta_{2,0}^{2}&=\var\{\phi^{(2,0)}(Z_{1}^{(1)},Z_{2}^{(1)})\}, \label{eq:delta20}\\
        \delta_{1,1}^{2}&=\var\{\phi^{(1,1)}(Z_{1}^{(1)};Z_{1}^{(2)})\}, \label{eq:delta11}\\
        \delta_{0,2}^{2}&=\var\{\phi^{(0,2)}(Z_{1}^{(2)},Z_{2}^{(2)})\}. \label{eq:delta02}
    \end{align}
    
    Let $W_{i}=Z_{i}^{(1)}$ for $i=1,\ldots,n_{1}$ and $W_{j+n_{1}}=Z_{j}^{(2)}$ for $j=1,\ldots,n_{2}$, and for $i\neq j$,
    \begin{align*}
        \varphi_{ij}=\left\{
        \begin{aligned}
            n_{1}^{-1}(n_{1}-1)^{-1}\phi^{(2,0)}(W_{i},W_{j}),\quad&\text{if }i,j\in\{1,\ldots,n_{1}\},\\
            2n_{1}^{-1}n_{2}^{-1}\phi^{(1,1)}(W_{i},W_{j}),\quad&\text{if }i\in\{1,\ldots,n_{1}\}\text{ and }j\in\{n_{1}+1,\ldots,n_{1}+n_{2}\},\\
            n_{2}^{-1}(n_{2}-1)^{-1}\phi^{(0,2)}(W_{i},W_{j}),\quad&\text{if }i,j\in\{n_{1}+1,\ldots,n_{1}+n_{2}\}.
        \end{aligned}
        \right.
    \end{align*}
    Write $n=n_{1}+n_{2}$. Define $V_{n,j}=\sum_{i=1}^{j-1}\varphi_{ij}$ for $j=2,\ldots,n$, and $S_{n,m}=\sum_{j=2}^{m}V_{n,j}$ for $m=2,\ldots,n$. Let $\F_{n,m}$ be the $\sigma$-algebra generated by $\{W_{1},\ldots,W_{m}\}$ for $m=1,\ldots,n$. It is straightforward that $\F_{n,m-1}\subset\F_{n,m}$ for any $m=2,\ldots,n$, and $\{S_{n,m}\}_{m=1}^{n}$ is of zero mean and square integrable. Also, for $m_{1}>m_{2}$, we have $\E(S_{n,m_{1}}\mid \F_{n,m_{2}})=S_{n,m_{2}}$. Hence, for each $n$, $\{S_{n,m},\F_{n,m}\}_{m=1}^{n}$ is a square integrable martingale of zero mean. As
    \[
    H_{n_{1},n_{2}}^{(2,0)}+4H_{n_{1},n_{2}}^{(1,1)}+H_{n_{1},n_{2}}^{(0,2)}=2S_{n,n},
    \]
    its asymptotic normality can be established by Corollary 3.1 in \citet{hall2014martingale}, i.e.,
    \begin{align*}
        \sigma_{n_{1},n_{2}}^{-2}\sum_{j=2}^{n}\E(V_{n,j}^{2}\mid\F_{n,j-1})&\convp\frac{1}{4},\\
        \sigma_{n_{1},n_{2}}^{-2}\sum_{j=2}^{n}\E[V_{n,j}^{2}\mathbbm{1}\{|V_{n,j}|>\epsilon\sigma_{n_{1},n_{2}}\}\mid\F_{n,j-1}]&\convp0,\quad\forall\epsilon>0,
    \end{align*}
    which can be done with routine verification of the following conditions:
    \begin{align*}
        \var\bigg\{\sum_{j=2}^{n}\E(V_{n,j}^{2}\mid\F_{n,j-1})\bigg\}&=o(\sigma_{n_{1},n_{2}}^{4}),\\
        \sum_{j=2}^{n}\E(V_{n,j}^{4})&=o(\sigma_{n_{1},n_{2}}^{4}).
    \end{align*}
    In our case, we have
    \begin{align*}
        \sigma_{n_{1},n_{2}}^{-4}\var\bigg\{\sum_{j=2}^{n}\E(V_{n,j}^{2}\mid\F_{n,j-1})\bigg\}&=O(h_{1}^{p}+h_{2}^{p})=o(1),\\
        \sigma_{n_{1},n_{2}}^{-4}\sum_{j=2}^{n}\E(V_{n,j}^{4})&=O((n_{1}h_{1}^{p}+n_{2}h_{2}^{p})^{-1})=o(1),
    \end{align*}
    where we have used Assumption \ref{assum2}. The tedious calculations are omitted here. 

    Therefore, 
    \begin{align*}
        \sigma_{n_{1},n_{2}}^{-1}\widehat{\I}_{\rho}&=\sigma_{n_{1},n_{2}}^{-1}(H_{n_{1},n_{2}}^{(2,0)}+4H_{n_{1},n_{2}}^{(1,1)}+H_{n_{1},n_{2}}^{(0,2)})+\sigma_{n_{1},n_{2}}^{-1}\E(\widehat{\I}_{\rho})\\
        &\quad+2\sigma_{n_{1},n_{2}}^{-1}(H_{n_{1},n_{2}}^{(1,0)}+H_{n_{1},n_{2}}^{(0,1)})+\sigma_{n_{1},n_{2}}^{-1}R_{n_{1},n_{2}}\\
        &=\sigma_{n_{1},n_{2}}^{-1}(H_{n_{1},n_{2}}^{(2,0)}+4H_{n_{1},n_{2}}^{(1,1)}+H_{n_{1},n_{2}}^{(0,2)})+O(n_{1}h_{1}^{p/2+\nu}+n_{2}h_{2}^{p/2+\nu})\\
        &\quad+O_{p}(n_{1}^{1/2}h_{1}^{p/2+\nu}+n_{2}^{1/2}h_{2}^{p/2+\nu})+O_{p}((n_{1}h_{1}^{p})^{-1/2}+(n_{2}h_{2}^{p})^{-1/2})\\
        &=\sigma_{n_{1},n_{2}}^{-1}(H_{n_{1},n_{2}}^{(2,0)}+4H_{n_{1},n_{2}}^{(1,1)}+H_{n_{1},n_{2}}^{(0,2)})+o_{p}(1)\convd\N(0,1),
    \end{align*}
    where we have used Assumption \ref{assum2} and $n_{l}h_{l}^{p/2+\nu}\rightarrow 0$ as $n_{l}\rightarrow\infty$ for $l=1,2$. 
\end{proof}

\end{appendices}

\newpage
\bibliography{main}

\end{document}